\newtheorem{corollary}{Corollary}
\newtheorem{lemma}{Lemma}
\newtheorem{theorem}{Theorem}
\newtheorem{definition}{Definition}
\newcommand{\ra}{\rightarrow} 
\newcommand{\ua}{\uparrow}
\newcommand{\da}{\downarrow}
\newcommand{\prob}[1]{P\left(#1\right)} 
\newcommand{\R}{\mathbb{R}} 
\newcommand{\Exp}[1]{\mathbb{E}\left[#1\right]} 
\newcommand{\indicator}{\boldsymbol{1}}
\newcommand{\ceil}[1]{\left\lceil #1 \right\rceil}
\newcommand{\floor}[1]{\left\lfloor #1 \right\rfloor}
\DeclareMathOperator*{\argmax}{arg\,max}
\DeclareMathOperator*{\argmin}{arg\,min}
\newcommand{\ignore}[1]{}
\newcommand{\fr}[1]{\left\{#1\right\}}
\newcommand{\pd}[2]{\frac{\partial #1}{\partial #2}}
\newcommand{\al} [1] {\alpha_{#1} } 
\newcommand{\bt} [1] {\beta_{#1} } 
\newcommand{\g} [1] {\gamma_{#1} }
\newcommand{\bP}{\mathbb{P}}
\newcommand{\bE}{\mathbb{E}}
\newcommand{\bo}{B_{ov}}
\newcommand{\bn}{\mathbf{n}}
\newcommand{\bm}{\mathbf{m}}
\newcommand{\sN}{\mathcal{N}}
\newcommand{\sNo}{\mathcal{N}^\circ}
\newcommand{\an}[1]{  \ifthenelse{\boolean{showcomments}}
{ \textcolor{red}{(AN says:  #1)}} {}  }
\newcommand{\jk}[1]{  \ifthenelse{\boolean{showcomments}}
{ \textcolor{blue}{(JK says:  #1)}} {}  }
\newcommand{\dm}[1]{  \ifthenelse{\boolean{showcomments}}
{ \textcolor{green}{(DM says:  #1)}} {}  }
\newcommand{\bp}[1]{  \ifthenelse{\boolean{showcomments}} 
{ \textcolor{red}{(Bala says:  #1)}} {}  }
\title{Sharing within limits: Partial resource pooling in loss systems\thanks{A preliminary version of this work appeared in the proceedings of COMSNETS 2016 \cite{Kumar16}.}}
\author{Anvitha~Nandigam\thanks{Anvitha Nandigam, D.~Manjunath, and Jayakrishnan Nair are with the Department of Electrical Engineering, IIT Bombay.} 
	\and Suraj~Jog\thanks{Suraj Jog is with the EECS department at the University of Illinois Urbana-Champaign.}
	\and D.~Manjunath\footnotemark[2] \and Jayakrishnan~Nair\footnotemark[2] 
	\and B.~J.~Prabhu\thanks{B.~J.~Prabhu is with LAAS-CNRS, Universit\'e de Toulouse, CNRS, Toulouse, France.} 
}
\begin{document}


\maketitle

\begin{abstract}
Fragmentation of expensive resources, e.g., spectrum for wireless
services, between providers can introduce inefficiencies in resource
utilisation and worsen overall system performance. In such cases,
resource pooling between independent service providers can be used to
improve \ignore{overall system} performance.
However, for providers to agree to pool their resources, the
arrangement has to be mutually beneficial. The traditional notion of
resource pooling, which implies complete sharing, need not have this
property. For example, under full pooling, one of the providers may be
worse off and hence have no incentive to participate. In this paper,
we propose \emph{partial} resource sharing models as a generalization
of full pooling, which can be configured to be beneficial to all
participants.

We formally define and analyze two partial sharing models between two
service providers, each of which is an Erlang-$B$ loss system with the
blocking probabilities as the performance measure. We show that there
always exist partial sharing configurations that are beneficial to
both providers, irrespective of the load and the number of circuits of
each of the providers. A key result is that the Pareto frontier has at
least one of the providers sharing all its resources with the
other. Furthermore, full pooling may not lie inside this Pareto
set. The choice of the sharing configurations within the Pareto set is
formalized based on bargaining theory. 
Finally, large system approximations of the blocking probabilities in
the quality-efficiency-driven regime are presented.

\end{abstract}

\section{Introduction}
\label{sec:intro}

High availability is an important requirement of many services like
wireless communications, cloud computing, hospitals, and fire fighting
services. The resources required to provide these are expensive ---
think spectrum and base stations for wireless communication, servers
and associated infrastructure for cloud computing, medical equipment
and doctors for hospitals, fire trucks and trained personnel for fire
fighting services. Service denial, which is the inability of the
resources to satisfactorily meet a fraction of the demand, is an
important performance measure for these services. When the demand is
stochastic, the amount of resources required to provide a prescribed
grade of service may be such that the utilization is low, especially
in smaller systems. This means that small providers require more
resources for a given service level. This in turn can make these
services expensive for small providers. However, large systems
experience statistical multiplexing gains and hence achieve economies
of scale. Thus resource sharing or resource pooling can be useful when
there are several independent entities providing similar services
using similar resources.

Typically, resource pooling is assumed to involve the combining of the
resources of all the participating providers and treating the combined
system as one unit. In this paper we propose \emph{partial resource
  pooling} as a generalization of the full pooling
models. Specifically, we consider two loss systems modeled as M/M/N/N
queues that operate independently in that they manage their own calls
but they cooperate by pooling their servers partially as follows. When
an overflow call arrives at one of the systems, (i.e., the number of
active calls of the provider is greater than the number of servers it
has), the other provider \emph{may} loan one of its free servers in
which case the call will be admitted. The server is loaned for the
duration of the call. The overflow call is lost if the other provider
chooses not to loan the server. The partial sharing model determines
when such an overflow call is admitted. At one extreme would be the no
pooling case where all overflow calls are lost and at the other
extreme is the full pooling case where all overflow calls are admitted
if there is a free server.

As mentioned above, several resource pooling models are available in
the literature with the key feature being independent service systems,
managed by independent decision makers, cooperating \emph{fully,}
acting as a single entity, and sharing the costs and/or benefits
suitably. In other words, it is an all-or-nothing game with the
parties either pooling their resources completely or staying out of the
coalition and operating on their own. These models typically use cooperative
or coalitional game theoretic ideas to determine the answers to the
following questions.~(1)~Which entities will form a cooperating
unit?~(2)~How are the revenues and costs shared?

In \cite{Sarkar08,Sarkar09} independent wireless network operators
share base station infrastructure and spectrum to efficiently serve
their customers. Stable cost sharing arrangements between the network
operators are explored in this setting. Note that the sharing model
here involves complete pooling of the spectrum and the base stations,
as opposed to the opportunistic sharing of resources with secondary
users in cognitive radio systems (e.g., \cite{Aykildiz06,Zhao07}). In
the system studied in \cite{Karsten15}, the cooperating entities
choose the quantity of resources to provide a specified service grade
and stable cost sharing arrangements are determined.

Server pooling has also been studied in the context of reengineering
of manufacturing lines by modeling them as Jackson networks. Here
several nodes (service stations) are combined into one service station
that is capable of providing the services of all the components, e.g.,
\cite{Mandelbaum98} and references therein.

More abstract forms of resource pooling have also been considered in
the queueing literature. In \cite{Anily10}, cooperating single server
queues are combined into one single server queue whose service rate is
upper bounded by the sum of the capacities. The actual service rate is
determined by a cost structure and the service grade. In \cite{Ozen11},
cooperation among queues to optimally invest in a common service capacity,
or choose the optimal demand to serve as a common entity, is analyzed.

To motivate our break from the preceding literature, consider the
following example of two M/M/N/N loss systems, e.g., cellular service
providers with a fixed number of channels. Provider $P_1$, with $85$
channels and a load of 88 Erlangs, has a blocking probability of $0.1$
and Provider $P_2$, with $59$ channels and $70$ Erlangs load, has a
blocking probability of $0.2$. If the two providers are combined into
one, the joint system would have a combined load of $158$ Erlangs
served by $144$ channels with blocking probability of $0.11$. Clearly
cooperation is beneficial to $P_2,$ but unacceptable to $P_1.$ And if
blocking probability were the only performance measure, it is a case
of ``and never the twain shall meet.'' The partial pooling mechanisms
that we develop in this paper allow \emph{both} the operators to
improve their performance.



The rest of the paper is organized as follows. In the next section, we
introduce the system model and describe two partial sharing models:
the \emph{bounded overflow} sharing model and the \emph{probabilistic}
sharing model. The blocking probabilities under these models and their
monotonicity properties are also derived. In Section~\ref{sec:pareto},
we characterize
the \emph{Pareto frontier} of the sharing configurations. The key
result is that the Pareto frontier is non empty and is at the boundary
of all possible sharing configurations---one of the providers has to
always yield its free servers to overflow calls of the other. In
Section~\ref{sec:economics}, we characterize the economics of partial
sharing by treating the sharing that emerges as the solution of Nash
bargaining, Kalai-Smorodinsky bargaining, egalitarian sharing (both
parties experience the same benefit) and utilitarian sharing (maximize
the system benefit).  The utility sets over which these bargaining
solutions will be computed for our model do not satisfy the usual
properties of convexity or comprehensiveness, making it less
straightforward to guarantee the uniqueness of the bargaining
solution. Nevertheless, using monotonicity properties of the blocking
probabilities shown in Section~\ref{sec:pareto}, we are able to
show uniqueness of the Kalai-Smorodinsky and egalitarian
solutions. Via numerical experiments, we demonstrate the contrasts
between the different bargaining solutions, and also the potential
benefits of partial resource pooling for both providers.
In Section~\ref{sec:large}, we address the computational complexity of
the blocking probabilities for large loss systems \cite{Kelly91}. We
consider large system limits under the well known
quality-efficiency-driven (QED) regime. Our large system analysis
provides computationally light, yet accurate approximations of the
blocking probabilities for realistic system settings.
Finally, we conclude with a discussion on alternate sharing models,
connections to more familiar models from the circuit multiplexing
literature, alternate applications, and future work in
Section~\ref{sec:discuss}.



\section{Model and Preliminaries}
\label{sec:model}

In this section, we describe our system model, propose our mechanisms
for partial resource pooling, and state some preliminary results.

We begin by describing the baseline model with no resource pooling.
We consider two service providers, $P_1$ and $P_2.$
Each provider is modeled as an M/M/N/N queue or an Erlang-B loss
system. Specifically, $P_i$ has $N_i$ servers/circuits. Calls arrive
for service at $P_i$ according to a Poisson process of rate
$\lambda_i.$ When a call arrives, it begins service at a free server
if one is available. If all servers are busy, then the call is
blocked. The holding times (a.k.a. service times) of calls at $P_i$
are i.i.d., with $S_i$ denoting a generic call holding time. We assume
that $\Exp{S_i} =: \frac{1}{\mu_i} < \infty.$ Thus, the offered load
seen by $P_i$ is given by $a_i := \frac{\lambda_i}{\mu_i}.$ With no
resource pooling between the providers, it is well known that the
steady state call blocking probability for $P_i$ is given by the
Erlang-B formula:
\begin{equation*}
  \label{eq:erlangB}
  E(N_i,a_i) = \dfrac{a_i^{N_i}}{N_i!}\left[ 
      {\sum_{j = 0}^{N_i} \dfrac{a_i^{j}}{j!}} \right]^{-1}.
\end{equation*}
It is also well known that the steady state call blocking probability
is insensitive to the \emph{distribution} of the call holding times,
i.e., it depends only on the \emph{average} call holding
time. Moreover, the blocking probability depends on the workload only
through the offered load~$a_i.$

Next, we describe the proposed partial resource pooling models.

\subsection{Probabilistic sharing model}

The probabilistic sharing model is parameterized by the tuple
$(x_1,x_2) \in [0,1]^2.$ Informally, under this model, $P_i$ accepts
an overflow call from $P_{-i}$ with probability $x_i.$\footnote{When
  referring to the provider labeled~i, we use $-i$ to refer to the
  other provider.\label{footnote:minusi}}

Formally, the probabilistic sharing model is defined as follows. Let
$n_i$ denote the number of active calls of $P_i.$ When a call of
$P_{-i}$ arrives,
\begin{itemize}
\item If $n_{-i} < N_{-i},$ and $n_1 + n_2 < N_1 + N_2,$ the call is
  admitted
\item If $n_{-i} \geq N_{-i}$ and $n_1 + n_2 < N_1 + N_2,$ the call is
  admitted with probability~$x_{i}$
\item If $n_1 + n_2 = N_1 + N_2,$ the call is blocked
\end{itemize}
The vector $x := (x_1,x_2)$ defines the (partial) sharing
configuration. Note that $x_i$ captures the extent to which $P_i$
pools its resources with $P_{-i}.$ In particular, the configuration
$(0,0)$ corresponds to no pooling, and the configuration $(1,1)$
corresponds to complete pooling. Moreover, note that the probabilistic
sharing model does not keep track of whether an ongoing call of $P_i$
is occupying a server of $P_i$ or $P_{-i}.$ This simplification, which
makes the model analytically tractable, is identical to the
\emph{maximum packing} or \emph{call repacking} model of
\cite{Everitt83,Kelly85} and has been used extensively in the
literature. One interpretation of this assumption is that once a $P_i$
server becomes free, if there are any ongoing $P_i$ calls on $P_{-i}$
servers, one of those is instantaneously shifted to the free $P_i$
server.

Next, we characterize the steady state blocking probabilities under
this partial sharing model. To do so, we define the following subsets
of $\mathbb{Z}_+^2.$
\begin{align*}
  M^{(p)} &:= \left\{(n_1,n_2):\ n_1 + n_2 \leq N_1 + N_2 \right\} \\
  R^{(p)} &:= \left\{(n_1,n_2):\ n_1 + n_2 = N_1 + N_2 \right\}            
\end{align*}
For $i \in \{1,2\},$
\begin{align*}
  D_i^{(p)} &:= \left\{(n_1,n_2):\ n_i \geq N_i,\ n_1 + n_2 < N_1 + N_2 \right \}.        
\end{align*}
Here $M^{(p)}$ refers to the set of feasible states, $R^{(p)}$
corresponds to the feasible states when all the servers are busy, and
$D_i^{(p)}$ are the states in which calls of $P_i$ are accepted with
probability~$x_{-i}.$

\begin{lemma}
  \label{lemma:prob_sharing}
  Under the probabilistic sharing model, the steady state blocking
  probability for Provider~$i$ is given by 
  \begin{align*}
    B_i^{(p)}(x_1,x_2) =& \frac{1}{G}\biggl[ \sum_{(n_1,n_2) \in R^{(p)}} f_1(n_1)f_2(n_2) \\
                        &\quad + \sum_{(n_1,n_2) \in D_i^{(p)}}
                          f_1(n_1)f_2(n_2)(1-x_{-i}) \biggr],
  \end{align*}
  where 
\begin{align*}
  f_i(n)  & =
     \begin{cases}
       a_i^{n} / n! & \mbox{if $n < N_i$} \\
       a_i^{n}x_{-i}^{n-N_i} / n! & \mbox{if $N_i \leq n \leq N_1 + N_2$}
     \end{cases},
                                    \nonumber \\
  G & = \sum_{(n_1,n_2) \in M^{(p)}} f_1(n_1) f_2(n_2).
\end{align*} 
\end{lemma}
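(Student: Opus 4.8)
The plan is to exhibit the stationary distribution of the continuous-time Markov chain (CTMC) that tracks the occupancy vector $(n_1,n_2)$ on the state space $M^{(p)}$, show that it takes the claimed product form $\pi(n_1,n_2) = f_1(n_1)f_2(n_2)/G$, and then read off the blocking probabilities via PASTA. First I would write down the transition rates. From a state $(n_1,n_2) \in M^{(p)}$ with $n_1+n_2 < N_1+N_2$, a $P_i$ arrival increments $n_i$ at rate $\lambda_i$ when $n_i < N_i$ and at the reduced rate $\lambda_i x_{-i}$ when $n_i \ge N_i$ (an overflow admitted with the lending probability $x_{-i}$); each of the $n_i$ active $P_i$ calls, whether on an own or a borrowed server, completes at rate $\mu_i$, so $n_i$ decrements at rate $n_i\mu_i$. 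On the boundary $R^{(p)}$ all arrival rates vanish.

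Next I would verify the detailed-balance (reversibility) equations, which then suffice for stationarity. For the arrival/departure pair linking $(n_1,n_2)$ and $(n_1+1,n_2)$, detailed balance requires $\pi(n_1,n_2)\cdot(\text{birth rate}) = \pi(n_1+1,n_2)\cdot(n_1+1)\mu_1$. Substituting the product form and cancelling the common factor $f_2(n_2)/G$, this collapses to a one-dimensional identity on $f_1$. In the regime $n_1 < N_1$ the birth rate is $\lambda_1$ and the identity reads $a_1^{n_1}\lambda_1/n_1! = a_1^{n_1+1}(n_1+1)\mu_1/(n_1+1)!$, i.e. $\lambda_1 = a_1\mu_1$, which holds since $a_1 = \lambda_1/\mu_1$; in the regime $n_1 \ge N_1$ the extra factor $x_2^{n_1-N_1}$ together with the reduced birth rate $\lambda_1 x_2$ matches exactly, again reducing to $\lambda_1 = a_1\mu_1$. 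The one delicate point is continuity of $f_1$ across the boundary $n_1 = N_1$, where $x_2^{0}=1$ makes the two regimes agree. The $P_2$ direction is identical by symmetry. Since detailed balance implies global balance and the chain is finite and irreducible on the support of $\pi$, this product form is the unique stationary distribution. Equivalently, and more slickly, the unconstrained dynamics are two independent birth--death chains whose stationary measures are exactly $f_1$ and $f_2$, and $M^{(p)}$ is obtained by truncating this reversible product chain to $\{n_1+n_2 \le N_1+N_2\}$; the truncation theorem for reversible chains then delivers $\pi \propto f_1 f_2$ directly.

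Finally, I would invoke PASTA to convert $\pi$ into blocking probabilities. Because the $P_i$ arrivals form a Poisson process, an arriving $P_i$ call sees the time-stationary state $(n_1,n_2)\sim\pi$. Such a call is blocked with certainty when $(n_1,n_2)\in R^{(p)}$ (all servers busy) and, independently of the state via the admission coin, blocked with probability $1-x_{-i}$ when $(n_1,n_2)\in D_i^{(p)}$ (an overflow that $P_{-i}$ declines to serve), and is admitted in every other state; note that the declining factor is $1-x_{-i}$, not $1-x_i$, precisely because it is $P_{-i}$ who must lend a server. Summing these contributions against $\pi = f_1 f_2/G$ over the disjoint sets $R^{(p)}$ and $D_i^{(p)}$ yields exactly the stated expression for $B_i^{(p)}(x_1,x_2)$. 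The main obstacle is the first part, namely guessing and verifying the product form, and in particular getting the boundary behaviour of $f_i$ at $n_i=N_i$ and the acceptance-probability bookkeeping right so that the $x_{-i}$ factors cancel in detailed balance; once that is in hand, the PASTA step is routine.
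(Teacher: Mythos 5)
Your proposal is correct and follows essentially the same route as the paper: model the occupancy vector as a CTMC on $M^{(p)}$, establish reversibility and the product-form stationary distribution $\pi(n_1,n_2)=f_1(n_1)f_2(n_2)/G$ (the paper leaves this as ``easy to check''; you carry out the detailed-balance verification, and your truncation-of-a-reversible-product-chain shortcut is the standard way to see it), then apply PASTA together with the independent admission coin to read off $B_i^{(p)}$ as $\pi(R^{(p)})+(1-x_{-i})\pi(D_i^{(p)})$. The one point you omit is that the model permits general i.i.d.\ holding times, so the exponential-case computation must be supplemented by an insensitivity argument; the paper dispatches this in one line by noting that insensitivity is a consequence of the reversibility of the CTMC, with a citation to the literature.
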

A key takeaway from Lemma~\ref{lemma:prob_sharing} is that under the
probabilistic sharing model, the steady state blocking probabilities
remain insensitive to the distributions of the call holding
times. Moreover, the dependence of the incoming workload on each
provider's blocking probability is only through the vector of offered
loads $(a_1,a_2).$ Finally, note that
\begin{align*}
  &B_i^{(p)}(0,0) = E(N_i,a_i),\\
  &B_1^{(p)}(N_1,N_2) = B_2^{(p)}(N_1,N_2) = E(N_1 + N_2, a_1 + a_2).           
\end{align*}
\begin{proof}
  Assuming that the call holding times are exponentially distributed,
  the state $(n_1,n_2)$ of the system evolves as a continuous time
  Markov chain (CTMC) over $M.$ It is easy to check that this CTMC is
  time-reversible and its invariant distribution $\pi$ has a product
  form: $$\pi(n_1,n_2) = \frac{f_1(n_1) f_2(n_2)}{G.}$$ The steady
  state blocking probability is then obtained by invoking the PASTA
  property.

  The insensitivity of the blocking probabilities to the call holding
  time distributions is a direct consequence of the reversibility of
  the above CTMC~\cite{Schassberger86}.
\end{proof}

  
\subsection{Bounded overflow pooling model}

The bounded overflow (BO) model is parameterized by the tuple
$(k_1,k_2),$ where $k_i \in [0,N_i].$ Informally, under the BO model,
$P_i$ accepts up to $k_i$ overflow calls from the other provider
$P_{-i}.$ Thus, $k_i$ is indicative of the extent to which $P_i$
shares its resources with $P_{-i}.$ We use randomization to let $k_i$
take real values in $[0,N_i];$ specifically, $P_i$ admits up to
$\floor{k_i}$ overflow calls from $P_{-i}$, and admits a
$\ceil{k_i}$-th overflow call with probability $\fr{k_i},$ where
$\fr{k_i} := k_i - \floor{k_i}$ denotes the fractional part of $k_i.$

Formally, the BO model is defined as follows. Recall that $n_i$
denotes the number of active calls of $P_i.$
When a call of $P_{-i}$ arrives,
\begin{itemize}
\item If $n_{-i} < N_{-i} + \floor{k_{i}}$ and $n_1 + n_2 < N_1 +
  N_2,$ the call is admitted
\item If $n_{-i} = N_{-i} + \floor{k_{i}}$ and $n_1 + n_2 < N_1 +
  N_2,$ the call is admitted with probability $\fr{k_{i}}$
\item Else, the call is blocked
\end{itemize}

We refer to the tuple $(k_1,k_2)$ as the (partial) sharing
configuration between $P_1$ and $P_2.$ Under the BO model, $P_i$ can
have at most $N_i + \ceil{k_{-i}}$ concurrent calls. Note that $(0,0)$
corresponds to no resource pooling, and $(N_1,N_2)$ corresponds to
full pooling between the providers. Finally, we note that the BO model
also assumes call repacking \cite{Everitt83,Kelly85}.

Next, we characterize the blocking probability of each provider under
the BO model. To express the blocking probabilities, we define the
following subsets of $\mathbb{Z}_+^2.$
\begin{align*}
  M^{(bo)} &:= \left\{(n_1,n_2):\  \substack{n_1 \leq N_1 + \ceil{k_2}
  \\n_2 \leq N_2 + \ceil{k_1} \\n_1 + n_2 \leq N_1 + N_2} \right\} \\
  R^{(bo)} &:= \left\{(n_1,n_2):\  \substack{n_1 \leq N_1 + \ceil{k_2}
  \\n_2 \leq N_2 + \ceil{k_1} \\n_1 + n_2 = N_1 + N_2} \right\}
\end{align*}
For $i \in \{1,2\},$
\begin{align*}
  C_i^{(bo)} &:= \left\{(n_1,n_2):\ n_i = N_i + \ceil{k_{-i}},\ n_{-i} < N_{-i} - \ceil{k_{-i}} \right \}, \\
  D_i^{(bo)} &:= \left\{(n_1,n_2):\ n_i = N_i + \floor{k_{-i}},\ n_{-i} < N_{-i} - \floor{k_{-i}} \right \}.        
\end{align*}
Here $M^{(bo)}$ refers to the set of feasible states, $R^{(bo)}$
corresponds to the feasible states when all the servers are busy,
$C_i^{(bo)}$ is the set of feasible states for which arriving calls of
$P_i$ are blocked due to the constraint on the number of overflow
calls, and $D_i^{(bo)}$ are the states for which calls of $P_i$ are
accepted with probability $\{k_{-i}\}.$

The following lemma characterizes the blocking probabilities of both
providers under the BO partial sharing model.

\begin{lemma}
  \label{lemma:bound_sharing}
  Under the bounded overflow sharing model, the steady state blocking
  probability for provider $P_i$ is given by
  \begin{align*}
    &B_i^{(bo)}(k_1,k_2) = \frac{1}{G}\biggl[ \sum_{(n_1,n_2) \in R^{(bo)} \cup C_i^{(bo)}} f_1(n_1) f_2(n_2) \\
    &\qquad + \indicator_{[\fr{k_{-i}} \neq 0]} (1-\fr{k_{-i}}) \sum_{(n_1,n_2) \in D_i^{(bo)}}  f_1(n_1) f_2(n_2)   \biggr],
  \end{align*}
  where
  \begin{align*}
    f_i(n) & = \left\{ \begin{array}{ll}
                         a_i^{n} / n!  &  \mbox{if $n \leq N_i + \floor{k_{-i}}$} \\
                         \fr{k_{-i}} a_i^{n} / n! & \mbox{if $n = N_i + \floor{k_{-i}} + 1$},
                       \end{array} \right.\\
    G & = \sum_{(n_1,n_2) \in M^{(bo)}} f_1(n_1) f_2(n_2).
\end{align*} 
\end{lemma}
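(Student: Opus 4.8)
The plan is to follow the same template as the proof of Lemma~\ref{lemma:prob_sharing}: assume exponentially distributed holding times so that the occupancy vector $(n_1,n_2)$ is a CTMC on $M^{(bo)}$, exhibit a product-form reversible stationary distribution $\pi(n_1,n_2) = f_1(n_1)f_2(n_2)/G$, and then read off the blocking probabilities via PASTA. Because the BO dynamics only ever change one coordinate at a time (a single admission or a single departure), the cleanest route to the product form is to view the joint chain as a truncation of two independent one-dimensional birth--death chains, one per provider, and to invoke the standard fact that truncating a reversible Markov process to a sub-collection of states (by zeroing out the rates of transitions that would leave it) preserves reversibility and leaves the stationary distribution unchanged up to renormalisation.

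First I would describe the marginal chain for $P_i$. Since each active call departs at rate $\mu_i$ irrespective of which server hosts it (this is exactly the call-repacking assumption), the death rate out of level $n$ is $n\mu_i$. The admission rule gives birth rate $\lambda_i$ for $n < N_i + \floor{k_{-i}}$ and the reduced birth rate $\lambda_i \fr{k_{-i}}$ at the single fractional level $n = N_i + \floor{k_{-i}}$, with no births above $N_i + \ceil{k_{-i}}$. Solving the detailed-balance recursion $f_i(n)\cdot(\text{birth rate at }n) = f_i(n+1)\cdot(n+1)\mu_i$ then yields $f_i(n) = a_i^n/n!$ up to level $N_i + \floor{k_{-i}}$ and the extra multiplicative factor $\fr{k_{-i}}$ at level $N_i + \floor{k_{-i}} + 1$, which is precisely the stated $f_i$. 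The only coupling between the two providers is the total-occupancy cap $n_1 + n_2 \leq N_1 + N_2$: on $R^{(bo)}$ every admission is blocked, which is exactly the removal of the boundary-crossing birth transitions. Hence the joint chain is the truncation of the reversible product chain to $M^{(bo)}$, and its stationary distribution is $\pi = f_1 f_2 / G$ with $G$ the stated normaliser. (One should check irreducibility of the truncated chain, which holds since $M^{(bo)}$ is connected under single births and deaths.)

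With $\pi$ in hand, PASTA says the blocking probability of $P_i$ equals the stationary probability that an arriving $P_i$ call is rejected, and I would partition the rejection events into three disjoint contributions. A $P_i$ call is surely blocked when all $N_1+N_2$ servers are busy, i.e.\ on $R^{(bo)}$, and also when $P_i$ has already reached its maximum occupancy $n_i = N_i + \ceil{k_{-i}}$ while the system is not globally full, i.e.\ on $C_i^{(bo)}$; these two sets are disjoint (one has $n_1+n_2 = N_1+N_2$, the other $n_1+n_2 < N_1+N_2$). The remaining contribution is the fractional one: in states of $D_i^{(bo)}$, where $n_i = N_i + \floor{k_{-i}}$ and the system is not full, the call is admitted only with probability $\fr{k_{-i}}$, hence rejected with probability $1-\fr{k_{-i}}$, and this term is present only when $\fr{k_{-i}} \neq 0$. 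Summing $\pi$ over $R^{(bo)} \cup C_i^{(bo)}$ and adding $(1-\fr{k_{-i}})\,\indicator_{[\fr{k_{-i}}\neq 0]}$ times the $\pi$-mass of $D_i^{(bo)}$, and clearing the common factor $1/G$, reproduces the claimed expression.

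The main obstacle is the careful bookkeeping around the single randomised (fractional) overflow level. Getting the birth rate there equal to $\lambda_i\fr{k_{-i}}$ rather than $\lambda_i$ is what produces the $\fr{k_{-i}}$ factor in $f_i$, and one must correspondingly verify that $C_i^{(bo)}$ and $D_i^{(bo)}$ are correctly defined and genuinely disjoint when $\fr{k_{-i}}\neq 0$ (so that no rejection mass is double-counted) and that the $D_i^{(bo)}$ term vanishes when $k_{-i}$ is an integer. Everything else is the routine verification that $M^{(bo)}$, $R^{(bo)}$, $C_i^{(bo)}$ and $D_i^{(bo)}$ exhaust the relevant boundary states. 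Finally, as in Lemma~\ref{lemma:prob_sharing}, insensitivity of the blocking probabilities to the holding-time distribution is inherited from the reversibility of the CTMC, so the result stated for exponential holding times extends to general $S_i$ with mean $1/\mu_i$.
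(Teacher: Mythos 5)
Your proposal is correct and follows essentially the same route as the paper: the paper omits this proof precisely because it mirrors that of Lemma~\ref{lemma:prob_sharing} (exponential holding times, a time-reversible CTMC on $M^{(bo)}$ with product-form stationary distribution $f_1 f_2/G$, blocking probabilities via PASTA, and insensitivity inherited from reversibility), which is exactly your argument, with the truncation-of-independent-birth--death-chains observation and the careful handling of the fractional level and of the disjointness of $R^{(bo)}$, $C_i^{(bo)}$, and $D_i^{(bo)}$ serving as the ``easy to check'' details the paper leaves implicit.
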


As before, note that the steady state blocking probabilities are
insensitive to the distributions of the call holding times, and depend
only on the vector of offered loads $(a_1,a_2).$ Moreover,
\begin{align*}
  &B_i^{(bo)}(0,0) = E(N_i,a_i),\\
  &B_1^{(bo)}(N_1,N_2) = B_2^{(bo)}(N_1,N_2) = E(N_1 + N_2, a_1 + a_2).           
\end{align*}
The proof of Lemma~\ref{lemma:bound_sharing}, being similar to that of
Lemma~\ref{lemma:prob_sharing}, is omitted.

\subsection{Monotonicity properties of the blocking probabilities}

We conclude this section by collecting some monotonicity properties of
the blocking probabilities under the above partial sharing
models. These properties play a key role in our analysis of the game
theoretic aspects of partial sharing in Sections~\ref{sec:pareto}
and~\ref{sec:economics}.

When stating results that apply to both sharing models, we refer to
the steady state blocking probability of Provider~$i$ as
$B_i(x_1,x_2),$ with the understanding that this represents
\begin{itemize}
\item $B_i^{(p)}(x_1,x_2)$ under the probabilistic sharing model,
\item $B_i^{(bo)}(x_1 N_1, x_2 N_2)$ under the bounded overflow
  sharing model (i.e., $x_i = \frac{k_i}{N_i}$).
\end{itemize}
Note that the overall steady state blocking probability of the system
is given by
\begin{equation*}
  B_{ov}(x_1,x_2) = \frac{\lambda_1}{\lambda_1+\lambda_2} B_1(x_1,x_2)
  + \frac{\lambda_2}{\lambda_1+\lambda_2} B_2(x_1,x_2).
 \end{equation*}

Our monotonicity results are summarized in the following theorem.
\begin{theorem}
  \label{thm:monotonicity}
  Under the probabilistic as well as the bounded overflow partial
  sharing models, the steady state blocking probabilities satisfy the
  following properties, for $i \in (1,2).$
  \begin{enumerate}
  \item $B_i(x_1,x_2)$ is a strictly increasing function of $x_i.$
  \item $B_{-i}(x_1,x_2)$ is a strictly decreasing function of $x_{i}.$
  \item If $\mu_1 = \mu_2,$ then $B_{ov}(x_1,x_2)$ is a strictly
    decreasing function of $x_i.$
  \end{enumerate}
\end{theorem}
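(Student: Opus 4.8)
The plan is to work directly from the closed-form blocking expressions in Lemmas~\ref{lemma:prob_sharing} and~\ref{lemma:bound_sharing}, exploiting the fact that each sharing parameter enters the stationary weights only as a power. In the probabilistic model, $x_i$ appears solely through $f_{-i}$, since $f_{-i}(n)=(a_{-i}^{n}/n!)\,x_i^{(n-N_{-i})^+}$ for $n\ge N_{-i}$; writing $\theta=\log x_i$ and $T(n_1,n_2)=(n_{-i}-N_{-i})^+$ (the number of servers $P_{-i}$ has borrowed), the stationary law is an exponential tilt $\pi_\theta(\cdot)\propto w_0(\cdot)\,e^{\theta T(\cdot)}$ on the feasible simplex $M^{(p)}$. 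Consequently, for any functional $\phi$ that does not itself depend on $x_i$, $\frac{\partial}{\partial\theta}\mathbb{E}_{\pi}[\phi]=\mathrm{Cov}_{\pi}(\phi,T)$, and since $x_i>0$ the sign of the derivative in $x_i$ equals the sign of this covariance. I would prove everything for the probabilistic model and then transfer to the bounded-overflow model via $x_i=k_i/N_i$, treating its floor/ceiling structure piecewise as described at the end.

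For Properties~1 and~2 I would first record $B_i^{(p)}=\pi(R^{(p)})+(1-x_{-i})\pi(D_i^{(p)})$ together with the key geometric fact that $D_i^{(p)}\subset\{n_{-i}<N_{-i}\}$, so that $T\equiv 0$ on $D_i^{(p)}$. Property~2 (that $B_{-i}$ decreases in $x_i$) then splits into an explicit contribution from the factor $(1-x_i)$ multiplying $\pi(D_{-i}^{(p)})$, which is manifestly decreasing, and a tilt contribution $\mathrm{Cov}_{\pi}(\mathbf{1}_{R^{(p)}}+(1-x_i)\mathbf{1}_{D_{-i}^{(p)}},T)$; the crux is to show the explicit decrease dominates the redistribution of mass toward congested (large-$T$) states. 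Property~1 (that $B_i$ increases in $x_i$) is the dual: here only the tilt acts, and the claim reduces to $\mathrm{Cov}_{\pi}(\mathbf{1}_{R^{(p)}}+(1-x_{-i})\mathbf{1}_{D_i^{(p)}},T)>0$, i.e.\ that the $P_i$-blocking set is positively correlated with $P_{-i}$'s borrowing. Equivalently, after clearing the normaliser $G$, both properties become a comparison of two $T$-weighted sums, $g'/g$ versus $G'/G$, which I would establish by partitioning states along the diagonals $n_1+n_2=\mathrm{const}$ and comparing the conditional weights slice by slice.

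The main obstacle is precisely this sign. The feasible region $\{n_1+n_2\le N_1+N_2\}$ is \emph{not} a sublattice of $\mathbb{Z}_+^2$ (the join of $(N_1+N_2,0)$ and $(0,N_1+N_2)$ leaves it), so the usual FKG/association machinery for positively correlated monotone functionals is unavailable, and the simplex constraint induces a \emph{negative} coupling between $n_1$ and $n_2$ that must be controlled by hand. I expect to resolve it by a direct pairing argument: on each diagonal slice $\{n_1+n_2=m\}$ the tilt reweights states monotonically in $n_{-i}$, and because the blocking set meets the outer boundary of the simplex while $D_i^{(p)}$ carries $T=0$, the slice-wise comparison yields the required inequality without invoking association. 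Endpoints $x_i\in\{0,1\}$ are recovered by continuity, and strictness follows from strict positivity of the relevant covariance on the nondegenerate support.

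Finally, Property~3 I would reduce to a conservation identity. When $\mu_1=\mu_2=\mu$, equating the steady-state accepted-call rate with the total departure rate gives $\lambda_1(1-B_1)+\lambda_2(1-B_2)=\mu\,\mathbb{E}_{\pi}[n_1+n_2]$, hence $B_{ov}=1-\mu\,\mathbb{E}_{\pi}[n_1+n_2]/(\lambda_1+\lambda_2)$; thus $B_{ov}$ decreases in $x_i$ iff the expected total occupancy $\mathbb{E}_{\pi}[n_1+n_2]$ increases in $x_i$. By the tilt identity this is the sign of $\mathrm{Cov}_{\pi}(n_1+n_2,T)=\mathrm{Cov}_{\pi}(n_{-i},T)+\mathrm{Cov}_{\pi}(n_i,T)$, whose first term is nonnegative (both arguments are increasing in the single coordinate $n_{-i}$) while the second is negative because of the simplex constraint, so the same slice-wise/pairing estimate is again needed to show the occupancy gain dominates. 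For the bounded-overflow model, all three statements are first proved on each interval where $\floor{k_i}$ is constant---there only the fractional weight $\fr{k_i}$ varies and the blocking expressions are rational in $\fr{k_i}$---and then glued across integer values of $k_i$ using the continuity built in by the randomised admission rule.
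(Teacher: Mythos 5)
Your proposal is a framework, not a proof: the decisive inequalities are never established. The exponential-tilt reformulation is correct --- in the probabilistic model $x_i$ enters only through the weight $x_i^{(n_{-i}-N_{-i})^+}$, so $\partial_\theta \mathbb{E}_\pi[\phi]=\mathrm{Cov}_\pi(\phi,T)$ with $T=(n_{-i}-N_{-i})^+$ --- but this only \emph{restates} Statements 1--3 as sign conditions on covariances, each of which pits a negative term against a positive one: for Statement~1, $\mathrm{Cov}(\mathbf{1}_{R},T)$ must beat $(1-x_{-i})\pi(D_i^{(p)})\mathbb{E}[T]$ (since $T\equiv 0$ on $D_i^{(p)}$, that covariance is strictly negative); for Statement~2, the explicit $-\pi(D_{-i}^{(p)})$ term must beat two plausibly positive covariances; for Statement~3, $\mathrm{Cov}(n_{-i},T)$ must beat $|\mathrm{Cov}(n_i,T)|$. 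You correctly identify these as ``the crux'' and note that FKG is unavailable because the simplex is not a sublattice, but the proposed resolution --- a ``slice-wise pairing along diagonals'' --- is never carried out, and it is not clear it can be: the competing masses live on \emph{different} diagonals ($R^{(p)}$ is the outer diagonal, $D_i^{(p)}$ lies strictly inside), so a per-slice comparison does not obviously close the argument. These covariance inequalities are of essentially the same difficulty as what the paper proves by direct computation: for Statements~1--2 it differentiates the explicit rational expressions and signs every term via determinant inequalities such as $v_{1,j}d-v_jm_1<0$, resting on two Erlang-B facts ($E(N,a)$ strictly decreasing in $N$, and $E(N,a)>1-N/a$) plus a sandwich bound on $B_i$; for Statement~3 (probabilistic) it proves monotonicity in $l$ of $\gamma(l)=\mathbb{E}(M_1+M_2\mid M_1\ge l)$ by an explicit telescoping comparison. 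Your rate-conservation identity $B_{ov}=1-\mu\,\mathbb{E}_\pi[n_1+n_2]/(\lambda_1+\lambda_2)$ is a genuinely nice reduction (it parallels the paper's use of local balance in Appendix~D), but the resulting occupancy-monotonicity claim is again left unproven.

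The transfer to the bounded overflow model is also unsound as described. The BO model is not a reparametrization of the probabilistic model: $k_{-i}$ enters the stationary weights as a hard truncation of the support at $n_i=N_i+\ceil{k_{-i}}$ together with a fractional weight $\fr{k_{-i}}$ on that single boundary state, not as a geometric tilt, so your covariance identity simply does not apply there. On each interval where $\floor{k_i}$ is constant the blocking probabilities are ratios of affine functions of $\fr{k_i}$, and the derivative sign is a determinant condition requiring its own Erlang-type bounds (the paper's Lemma~\ref{lemma:det_B_bounds}); your piecewise-plus-gluing remark presumes this sign but supplies no argument for it. Most tellingly, for Statement~3 in the BO model the paper itself could not sign the derivative directly: it needed a sample-path coupling between the systems $(k_1,k_2)$ and $(k_1+1,k_2)$ (with the four call types and relations R1--R3) to handle integer increments, and only then used the mean value theorem to extend to real $k_1$. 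Your proposal contains no substitute for this coupling, which is the genuinely non-trivial content of the theorem for that model.
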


Theorem~\ref{thm:monotonicity} highlights the impact of an increase in
$x_i$ on the blocking probabilities of $P_i$ and $P_{-i},$ as well as
the overall blocking probability. In particular, an increase in $x_i$
(i.e., an increase in the extent to which $P_i$ shares its servers
with $P_{-i}$) decreases the fraction of blocked calls at $P_{-i},$ at
the expense of increasing the fraction of blocked calls at $P_i.$ Note
that Statements~1 and~2 imply that $(0,0)$ is the unique Nash
equilibrium between the providers, assuming that the utility of each
provider is a strictly decreasing function of its blocking
probability. This means that a non-cooperative interaction sans
signalling would not yield a mutually beneficial partial sharing
configuration between the providers. In contrast, we show in
Section~\ref{sec:economics} that a \emph{bargaining-based} interaction
would indeed result in mutually beneficial partial sharing
configurations.

Finally, Statement~3 of Theorem~\ref{thm:monotonicity} highlights that
so long as the mean call holding times are matched across both
providers, an increase in $x_i$ results in an overall reduction in the
call drop probability of the system. This is because increasing $x_i$
provides additional opportunities for calls to get admitted when there
are free circuits. In particular, Statement~3 above implies that for
$(x_1,x_2) \notin \{(0,0),(1,1)\},$
$$B_{ov}(1,1) < B_{ov}(x_1,x_2) < B_{ov}(0,0),$$ implying that
complete pooling minimizes the overall blocking probability of the
system (when $\mu_1 = \mu_2$).

Note that even through the statement of Theorem~\ref{thm:monotonicity}
applies compactly to both sharing models, a separate proof is required
for each model. We provide the proof of Theorem~\ref{thm:monotonicity}
for the bounded overflow sharing model in
Appendix~\ref{app:monotonicity_bo}, and for the probabilistic sharing
model in Appendix~\ref{app:monotonicity_prob}. It is important to
point out that while the statement of Theorem~\ref{thm:monotonicity}
seems intuitive, the proof is fairly non-trivial. In particular, our
proof of Statement~3 for the bounded overflow model involves a subtle
sample path argument (see Appendix~\ref{app:monotonicity_bo}).


\section{Efficient Partial Sharing Configurations}
\label{sec:pareto}

We have seen that complete resource pooling between providers is not
necessarily stable, in the sense that it is not guaranteed to be
beneficial to both providers. Having defined mechanisms for partial
resource sharing in Section~\ref{sec:model}, the natural questions
that arise are:
\begin{enumerate}
\item Do there exist stable partial sharing configurations?
\item If so, can one characterize the Pareto frontier of the space of
  partial sharing configurations?
\end{enumerate}
The goal of this section is to address the above questions.

First, we prove that under both the sharing mechanisms defined in
Section~\ref{sec:model}, there exist stable partial sharing
configurations, i.e., there exist partial sharing configurations that
result in a strictly lower blocking probability for each provider,
compared to the case of no pooling. Next, we focus on characterizing
the set of Pareto-efficient partial sharing
configurations. Intuitively, this is the set of `efficient' sharing
configurations, over which it is not possible to lower the blocking
probability for any provider without increasing the blocking
probability of the other. Our main result is that any Pareto sharing
configuration has at least one provider pooling all of its servers
(i.e., $x_i = 1$ for some $i$).\footnote{$P_i$ pooling all its servers
  means that it always yields a free server to an overflow call from
  $P_{-i}.$} Intuitively, efficient partial sharing configurations
involve the more congested provider pooling all of its servers,
enabling both providers to benefit from the resulting statistical
economies of scale. Finally, we provide an exact characterization of
the set of Pareto efficient sharing configurations (a.k.a. the Pareto
frontier) under the probabilistic and bounded overflow partial sharing
models. 

We begin by defining `stable' partial sharing configurations.
\begin{definition}
  A sharing configuration $(x_1,x_2)$ is \emph{QoS-stable} if
  $B_i(x_1,x_2) < E(N_i,a_i)$ for $i=1,2.$
\end{definition}
The following lemma guarantees the existence of QoS-stable sharing
configurations.
\begin{lemma}
  \label{lemma:qos_stable-non_empty}
  Under the probabilistic as well as the bounded overflow partial
  sharing models, the set of QoS-stable partial sharing configurations
  is non-empty.
\end{lemma}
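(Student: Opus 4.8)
The plan is to exhibit an explicit one-parameter family of configurations and locate a QoS-stable point inside it using the intermediate value theorem together with the monotonicity properties of Theorem~\ref{thm:monotonicity}. I would work along the anti-diagonal $\{(1-t,t) : t \in [0,1]\}$ joining $(1,0)$ to $(0,1)$, since moving along it increases $x_2$ and decreases $x_1$ simultaneously, so that \emph{both} effects help $P_1$ and \emph{both} hurt $P_2$. Concretely, set $h(t) := B_1(1-t,t) - E(N_1,a_1)$. The blocking probabilities are continuous in the sharing configuration (for the bounded overflow model, continuity across integer values of $k_i$ is exactly what the randomized admission rule buys us), so $h$ is continuous; by Statements~1 and~2 of Theorem~\ref{thm:monotonicity} it is strictly decreasing in $t$. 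Since $B_1(0,0)=E(N_1,a_1)$, the same two statements give $h(0)=B_1(1,0)-B_1(0,0)>0$ and $h(1)=B_1(0,1)-B_1(0,0)<0$. Hence there is a unique $t_1 \in (0,1)$ with $B_1(1-t_1,t_1)=E(N_1,a_1)$, and $B_1(1-t,t)<E(N_1,a_1)$ for every $t>t_1$.

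The crux is to show that at this $P_1$-indifference point the \emph{other} provider is already strictly better off, i.e.\ $B_2(1-t_1,t_1)<E(N_2,a_2)$. Here I would use Statement~3 of Theorem~\ref{thm:monotonicity}, but read as a statement purely about the offered loads. Because the blocking probabilities depend on the service rates only through the offered loads $a_i$ (Lemmas~\ref{lemma:prob_sharing} and~\ref{lemma:bound_sharing}), I may evaluate Statement~3 in the special case $\mu_1=\mu_2$, where $B_{ov}=(a_1 B_1+a_2 B_2)/(a_1+a_2)$; its conclusion then says that the function $g(x_1,x_2):=a_1 B_1(x_1,x_2)+a_2 B_2(x_1,x_2)$ is strictly decreasing in each argument, and since $g$ does not depend on the $\mu_i$ this holds for all parameter values. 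Increasing each coordinate in turn from $(0,0)$ to $(1-t_1,t_1)$ (both coordinates are strictly positive) gives $g(1-t_1,t_1)<g(0,0)=a_1 E(N_1,a_1)+a_2 E(N_2,a_2)$. Substituting $B_1(1-t_1,t_1)=E(N_1,a_1)$ and cancelling the common $a_1 E(N_1,a_1)$ term leaves $a_2 B_2(1-t_1,t_1)<a_2 E(N_2,a_2)$, i.e.\ $B_2(1-t_1,t_1)<E(N_2,a_2)$.

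To finish I would nudge slightly past $t_1$: since $h$ is strictly decreasing we have $B_1(1-t,t)<E(N_1,a_1)$ for $t>t_1$, while continuity of $B_2$ together with the strict inequality $B_2(1-t_1,t_1)<E(N_2,a_2)$ ensures $B_2(1-t,t)<E(N_2,a_2)$ for all $t$ in some interval $(t_1,t_1+\delta)$. Every configuration $(1-t,t)$ with $t$ in this interval is therefore QoS-stable, proving the set non-empty; the argument is verbatim the same for both sharing models, as it invokes only Theorem~\ref{thm:monotonicity} and continuity. I expect the main obstacle to be the middle step: the naive attempt to perturb outward from $(0,0)$ fails because increasing $x_i$ helps $P_{-i}$ but hurts $P_i$, so one cannot decouple the two benefits by a purely local sign argument, and the magnitudes (not just the signs) of the changes matter. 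The device that resolves this is recognizing that Statement~3 --- although stated under $\mu_1=\mu_2$ --- encodes a load-only inequality (strict monotonicity of $a_1B_1+a_2B_2$) that is available unconditionally and that exactly cancels the $P_1$ term at the indifference point.
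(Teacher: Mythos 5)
Your proof is correct, but it takes a genuinely different route from the paper's. The paper obtains Lemma~\ref{lemma:qos_stable-non_empty} as an immediate corollary of Lemma~\ref{lemma:pareto_bo_improvement}: invoking that lemma at $(0,0)$ yields a direction $(1,\theta)$, $\theta>0$, along which both blocking probabilities have strictly negative directional derivatives, so a small step from $(0,0)$ in that direction is QoS-stable. That is a local, first-order argument, and its proof runs through the gradient cross-product inequality \eqref{eq:pareto_1}, with some care needed about one-sided derivatives at integer $k_i$ in the bounded overflow model. Your argument is global and derivative-free: continuity, the strict monotonicity in Statements~1 and~2 of Theorem~\ref{thm:monotonicity}, the intermediate value theorem along the anti-diagonal, and then the cancellation at the $P_1$-indifference point. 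Notably, the device you use for that cancellation --- reading Statement~3 as an unconditional statement about $g = a_1B_1 + a_2B_2$ because the blocking probabilities depend on $(\lambda_i,\mu_i)$ only through the offered loads --- is precisely the fictitious-provider trick the paper deploys inside the proof of Lemma~\ref{lemma:pareto_bo_improvement} (the paper even remarks that although Statement~3 assumes $\mu_1=\mu_2$, ``the present proof does not''). So the two proofs share the same essential insight but package it differently: the paper's packaging is stronger, since Lemma~\ref{lemma:pareto_bo_improvement} also rules out Pareto-efficient configurations in the interior $\mathcal{X}^o$ and thereby powers Theorem~\ref{thm:pareto}, whereas your argument is more elementary, sidesteps the differentiability caveats of the bounded overflow model, and even locates a stable configuration on an explicit one-parameter curve --- but it delivers only non-emptiness.
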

Lemma~\ref{lemma:qos_stable-non_empty} essentially validates our
partial sharing mechanisms. Specifically, it asserts that even when the
providers are highly asymmetric with respect to capacity and/or
offered load, and even when complete resource pooling is not
beneficial to one of the providers, there exists a partial sharing
configuration that is beneficial to both providers. We omit the proof
of Lemma~\ref{lemma:qos_stable-non_empty} since it is a direct
consequence of Lemma~\ref{lemma:pareto_bo_improvement} below.

Now that we are certain that mutually beneficial partial sharing
configurations exist, we turn to the characterization of the set of
efficient configurations. We begin by defining Pareto-efficient
sharing configurations.

\begin{definition}
  A sharing configuration $(x_1,x_2)$ is \emph{Pareto-efficient}~if
  \begin{enumerate}
  \item $(x_1,x_2)$ is QoS-stable,
  \item there does not exist a sharing configuration $(x'_1,x'_2)$
    such that $B_i(x'_1,x'_2) \leq B_i(x_1,x_2)$ for all
    $i \in \{1,2\}$ and $B_i(x'_1,x'_2) < B_i(x_1,x_2)$ for some
    $i \in \{1,2\}.$
  \end{enumerate}
\end{definition}

Condition~(2) above is the standard definition of
Pareto-efficiency---a configuration is Pareto-efficient if it is not
possible to enhance the utility of one party (the utility of a
provider being a strictly decreasing function of its blocking
probability) without diminishing the utility of the other. Since our
interest is in capturing the set of configurations that the providers
could potentially agree upon, it is also natural to impose the
requirement that each provider stands to benefit from the partial
sharing agreement; this is captured by Condition~(1) in the
definition.

Our main result is that at any Pareto-efficient sharing configuration,
at least one provider pools all of its servers.

\begin{theorem}
  \label{thm:pareto}
  Under the probabilistic as well as the bounded overflow partial
  sharing models, the set of Pareto-efficient sharing configurations
  is non-empty. Moreover, any Pareto-stable sharing configuration
  $(x_1,x_2)$ satisfies the property that $x_i = 1$ for some
  $i \in \{1,2\}.$
\end{theorem}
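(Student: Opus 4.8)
The plan is to prove Theorem~\ref{thm:pareto} in two parts: first establish non-emptiness of the Pareto frontier, and then prove the structural claim that any Pareto-efficient configuration has $x_i = 1$ for some $i$. The non-emptiness should follow from Lemma~\ref{lemma:qos_stable-non_empty}: the QoS-stable set is non-empty and, since the blocking probabilities $B_1, B_2$ are continuous functions of $(x_1,x_2)$ over the compact configuration space $[0,1]^2$, the image of the QoS-stable set under $(B_1, B_2)$ is a bounded subset of $\mathbb{R}^2$, and a standard argument shows that a Pareto-efficient point exists (e.g.\ by minimizing $B_1 + B_2$, or any strictly increasing scalarization, over the closure of the QoS-stable region and checking the minimizer is QoS-stable). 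The real content is the second part.

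For the structural claim, I would argue by contrapositive: suppose $(x_1, x_2)$ has $x_1 < 1$ and $x_2 < 1$, and show it cannot be Pareto-efficient by exhibiting a configuration that weakly improves both providers and strictly improves at least one. The natural candidate is to increase one coordinate, say move from $(x_1, x_2)$ to $(x_1, x_2 + \epsilon)$ for small $\epsilon > 0$. By Theorem~\ref{thm:monotonicity}, increasing $x_2$ strictly decreases $B_1$ (Statement~2) but strictly increases $B_2$ (Statement~1), so a single-coordinate move trades off against the provider doing the extra sharing and does not by itself give a Pareto improvement. The key idea, therefore, is to move along a curve in the interior that simultaneously increases $x_2$ while \emph{decreasing} $x_1$ so as to hold $B_2$ fixed (or improve it), while the net effect on $B_1$ is an improvement. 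Concretely, I expect to use Statement~2 (increasing $x_2$ lowers $B_1$) together with Statement~1 (decreasing $x_1$ lowers $B_1$) so that a compensated move lowers $B_1$ strictly while keeping $B_2$ at least as good, contradicting Pareto-efficiency unless one coordinate is already pinned at its boundary value $1$.

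To make the compensated move rigorous, I would set up the level set of $B_2$ through the interior point $(x_1, x_2)$ and argue, using the strict monotonicity in each argument (Statement~1 in $x_2$, and the analogue that $B_2$ is strictly decreasing in $x_1$ by Statement~2 applied with roles swapped), that this level set is locally a strictly monotone curve along which $x_2$ increases as $x_1$ decreases. Traversing this curve in the direction of decreasing $x_1$ and increasing $x_2$ keeps $B_2$ constant by construction; along the same direction, $B_1$ strictly decreases because both the decrease in $x_1$ (Statement~1 for $P_1$) and the increase in $x_2$ (Statement~2) push $B_1$ down. Since the starting point is interior ($x_1 < 1$, $x_2 < 1$), there is room to move in this direction while staying in $[0,1]^2$, and for $\epsilon$ small the new point remains QoS-stable by continuity. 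This produces a strict improvement for $P_1$ at no cost to $P_2$, contradicting Pareto-efficiency; hence at least one $x_i$ must equal $1$.

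The main obstacle I anticipate is the delicate handling of the compensated move: I am implicitly treating $B_1, B_2$ as differentiable or at least as admitting well-behaved level curves, but under the bounded overflow model the $k_i = N_i x_i$ enter through floors, ceilings, and fractional parts, so $B_i$ is only piecewise smooth in $x_i$ with possible kinks at integer thresholds. I would need to argue that the strict monotonicity of Theorem~\ref{thm:monotonicity} is strong enough to run the argument without differentiability---for instance by working with strict inequalities and a purely order-theoretic version of the compensated move rather than an infinitesimal one, choosing discrete but small adjustments $(x_1 - \delta_1, x_2 + \delta_2)$ with $\delta_1, \delta_2 > 0$ calibrated so that $B_2$ does not increase while $B_1$ strictly decreases. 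Establishing the existence of such a calibrated pair from strict monotonicity alone, uniformly across the two models and across the threshold kinks, is the technically demanding step.
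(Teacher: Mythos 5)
Your overall skeleton---reduce the structural claim to showing that no interior point can be Pareto-efficient by exhibiting a joint improvement---is exactly the paper's strategy (its Lemma~\ref{lemma:pareto_bo_improvement}), but the improving move you construct goes in the wrong direction, and the error is fatal. You propose to decrease $x_1$ and increase $x_2$, sliding along the level curve of $B_2$. Check the signs against Theorem~\ref{thm:monotonicity}: $B_2$ is strictly increasing in $x_2$ (Statement~1) and strictly decreasing in $x_1$ (Statement~2). Hence \emph{both} ingredients of your move---lowering $x_1$ and raising $x_2$---strictly increase $B_2$; every point of the form $(x_1-\delta_1,\,x_2+\delta_2)$ with $\delta_1,\delta_2>0$ is strictly worse for Provider~2, and no calibration of $(\delta_1,\delta_2)$, infinitesimal or finite, can repair this. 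Equivalently, the level curve of $B_2$ is \emph{upward}-sloping in the $(x_1,x_2)$-plane, since $\frac{dx_2}{dx_1} = -\pd{B_2}{x_1}\big/\pd{B_2}{x_2} > 0$; to hold $B_2$ fixed while raising $x_2$ you must \emph{raise} $x_1$ as well, not lower it. The only candidate direction for a joint improvement is therefore to increase both coordinates, say $(x_1+\epsilon,\,x_2+\epsilon\theta)$ with $\theta>0$, and then the effect on each $B_i$ is ambiguous: each provider gains from the other's increased sharing but loses from its own.

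Whether some ratio $\theta$ makes both net effects negative cannot be settled by Statements~1 and~2 alone, which is the second gap: you explicitly hope to run the argument on ``strict monotonicity alone,'' but monotonicity fixes only the signs of the four partial derivatives, not their relative magnitudes. Concretely, the pair $B_1 = (1+x_1-x_2)/2$, $B_2 = (1+x_2-x_1)/2$ satisfies every sign requirement of Statements~1 and~2, yet $B_1+B_2\equiv 1$, so \emph{no} configuration Pareto-dominates any other and every interior point would be ``efficient''; any correct proof must use an input beyond those two statements. The paper's missing ingredient is Statement~3 of Theorem~\ref{thm:monotonicity} (strict monotonicity of $B_{ov}$, proved by the sample-path coupling), applied to fictitious providers with unit service rates so that the hypothesis $\mu_1=\mu_2$ is not actually needed: this gives $\lambda_1^\prime\,\pd{B_1}{x_1} < -\lambda_2^\prime\,\pd{B_2}{x_1}$ and $\lambda_2^\prime\,\pd{B_2}{x_2} < -\lambda_1^\prime\,\pd{B_1}{x_2}$, whose product is precisely the compatibility condition $\underline{\theta}<\bar{\theta}$ guaranteeing a common improving direction $(1,\theta)$. (The differentiability issue you flag at integer values of $k_i$ is real but minor; the paper handles it with one-sided derivatives.) Finally, a smaller problem in your non-emptiness step: the minimizer of $B_1+B_2$ over the closure of the QoS-stable set need \emph{not} be QoS-stable---in Case~2 of Lemma~\ref{lemma:pareto_structure} the minimizer of the overall blocking probability is the endpoint $(1,\bar{x}_2)$, where $B_2=E(N_2,a_2)$ exactly---so ``checking the minimizer is QoS-stable'' fails in general; the paper instead maximizes the product of positive parts $\mathcal{B}_1\mathcal{B}_2$ over all of $\mathcal{X}$, which automatically forces strict improvement for both providers at the optimum.
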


Intuitively, if the providers are symmetric, full pooling ($x_i = 1$
for all $i)$ is Pareto-efficient, thanks to the statistical economies
of scale in the pooled system. Theorem~\ref{thm:pareto} highlights
that under general (possibly asymmetric) settings, where full pooling
may not be QoS-stable, efficient configurations still involve at least
one provider pooling all its servers. Indeed, statistical economies of
scale lie at the heart of this result as well, as is highlighted by
Lemma~\ref{lemma:pareto_bo_improvement} stated below, which forms the
basis of the proof of Theorem~\ref{thm:pareto}.

In stating Lemma~\ref{lemma:pareto_bo_improvement} and proving
Theorem~\ref{thm:pareto}, we use the following notation:
$\mathcal{X} := [0,1]^2,\quad \mathcal{X}^o := [0,1)^2.$

\begin{lemma}
  \label{lemma:pareto_bo_improvement}
  Under the probabilistic as well as the bounded overflow partial
  sharing models, for any $(x_1,x_2) \in \mathcal{X}^o,$ there exists
  $\theta > 0$ such that
  $$\nabla B_i(x_1,x_2)\cdot (1,\theta) < 0 \quad \forall i \in
  \{1,2\}.$$
\end{lemma}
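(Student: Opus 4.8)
The plan is to reduce the existence of a common descent direction $(1,\theta)$ to a single determinant inequality, and then to extract that inequality from Theorem~\ref{thm:monotonicity}. Writing $p := \partial B_1/\partial x_1$, $q := \partial B_1/\partial x_2$, $r := \partial B_2/\partial x_1$ and $s := \partial B_2/\partial x_2$ at the point $(x_1,x_2)$, Statements~1 and~2 of Theorem~\ref{thm:monotonicity} give $p \ge 0$, $s \ge 0$, $q \le 0$ and $r \le 0$. The condition $\nabla B_1 \cdot (1,\theta) < 0$ reads $p + \theta q < 0$, i.e.\ $\theta > p/|q|$, while $\nabla B_2 \cdot (1,\theta) < 0$ reads $r + \theta s < 0$, i.e.\ $\theta < |r|/s$ (when $s>0$). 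Hence a suitable $\theta>0$ exists precisely when $p/|q| < |r|/s$, that is, when the Jacobian determinant $ps - qr$ is strictly negative. So the whole lemma reduces to establishing
\[
\frac{\partial B_1}{\partial x_1}\frac{\partial B_2}{\partial x_2} < \frac{\partial B_1}{\partial x_2}\frac{\partial B_2}{\partial x_1}
\qquad \text{on } \mathcal{X}^o .
\]

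The engine for this inequality is Statement~3 of Theorem~\ref{thm:monotonicity}. Although the present lemma makes no assumption on the service rates, recall that each $B_i$ depends on the workload only through the offered loads $(a_1,a_2)$ and is otherwise independent of the individual $\mu_i$. Consequently I may evaluate $B_i$ in an auxiliary system with $\mu_1=\mu_2$ realising the same $(a_1,a_2)$; there $B_{ov} = (a_1 B_1 + a_2 B_2)/(a_1+a_2)$, and Statement~3 asserts that $B_{ov}$ is strictly decreasing in each $x_j$. Since the functions $B_i(x_1,x_2)$ are identical in both systems, this yields, for the original system and for $j \in \{1,2\}$,
\[
a_1 \frac{\partial B_1}{\partial x_j} + a_2 \frac{\partial B_2}{\partial x_j} < 0 .
\]

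Combining these two inequalities with the sign pattern completes the argument. From the $j=1$ inequality, $a_2 r < -a_1 p \le 0$, so $r<0$ strictly and $|r| > (a_1/a_2)\,p$; symmetrically the $j=2$ inequality gives $q<0$ strictly and $|q| > (a_2/a_1)\,s$. Multiplying these two strict bounds between nonnegative quantities yields $|q|\,|r| > ps$, i.e.\ $qr > ps$, which is exactly the determinant inequality above. The degenerate cases $p=0$ or $s=0$ are handled directly: e.g.\ if $s=0$ then $\nabla B_2\cdot(1,\theta)<0$ holds for every $\theta$, and one only needs $\theta > p/|q|$. I would assume throughout that $a_1,a_2>0$, the case of a provider with no load being trivial.

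The main obstacle is a regularity issue rather than the algebra. For the probabilistic model each $B_i^{(p)}$ is a ratio of polynomials in $(x_1,x_2)$ whose denominator $G$ is strictly positive on $\mathcal{X}^o$, so $B_i^{(p)}$ is smooth and the gradients above are literal. For the bounded overflow model, however, $B_i^{(bo)}(x_1 N_1, x_2 N_2)$ depends on $\lfloor x_i N_i\rfloor$ and is therefore only piecewise smooth, with possible kinks where $x_i N_i \in \mathbb{Z}$. I would run the argument on the interior of each cell on which $\lfloor x_1 N_1\rfloor$ and $\lfloor x_2 N_2\rfloor$ are constant---where $B_i^{(bo)}$ is again a smooth ratio of polynomials in the fractional parts---and at cell boundaries interpret $\nabla B_i \cdot (1,\theta)$ as the one-sided directional derivative into the cell entered by the direction $(1,\theta)$. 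Since Statements~1--3 are genuine monotonicity statements, they supply exactly the one-sided derivative signs needed, so the same determinant computation goes through and exhibits an improving direction at every point of $\mathcal{X}^o$.
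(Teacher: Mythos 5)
Your proposal is correct and follows essentially the same route as the paper's proof: you reduce the existence of a common descent direction $(1,\theta)$ to the determinant inequality $\pd{B_1}{x_1}\pd{B_2}{x_2} < \pd{B_1}{x_2}\pd{B_2}{x_1}$, pass to an auxiliary system with equal service rates (exploiting that the $B_i$ depend on the workload only through $(a_1,a_2)$) so that Statement~3 of Theorem~\ref{thm:monotonicity} applies without the $\mu_1=\mu_2$ hypothesis, and multiply the two resulting inequalities exactly as the paper does. Your extra remarks on the degenerate cases $p=0$, $s=0$ (which in fact cannot occur, since Statements~1 and~2 give strict signs) and on one-sided derivatives for the bounded overflow model parallel the paper's footnote and add nothing essentially different.
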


Lemma~\ref{lemma:pareto_bo_improvement} implies that at any sharing
configuration $x \in \mathcal{X}^o$, it is possible to strictly
improve the blocking probability of both providers by increasing
\emph{both} components of $x$ (in the direction~$\theta$).\footnote{It
  is not hard to see that the blocking probabilities under the
  probabilistic sharing model (characterized in
  Lemma~\ref{lemma:prob_sharing}) are continuously differentiable over
  $\mathcal{X}.$ For the bounded overflow model, the blocking
  probabilities (characterized in Lemma~\ref{lemma:bound_sharing}) are
  continuous over $[0,N_1] \times [0,N_2]$ and differentiable for
  $k_1,k_2 \notin \mathbb{Z}_+.$ If $k_i$ is an integer, then the
  partial left and right derivatives with respect to $k_i$
  exist. Thus, for the bounded overflow model, the gradients in the
  statement of Lemma~\ref{lemma:pareto_bo_improvement} are understood
  to be composed of the right derivative with respect to $x_i$ when
  $x_i N_i$ is an integer.}

We now use Lemma~\ref{lemma:pareto_bo_improvement} to prove
Theorem~\ref{thm:pareto}.

\proof[Proof of Theorem~\ref{thm:pareto}] We provide a unified proof
of Theorem~\ref{thm:pareto} for both partial sharing models. Invoking
Lemma~\ref{lemma:pareto_bo_improvement} at the configuration $(0,0),$
we conclude that the set of QoS-stable configurations is non-empty.
  For $i\in \{1,2\},$ define
  $$\mathcal{B}_i(x_1,x_2) := \max(0, E(N_i,a_i) - B_i(x_1,x_2)).$$
  Consider the following optimization:
  $$\max_{x \in \mathcal{X}} \mathcal{B}_1(x_1,x_2)
  \mathcal{B}_2(x_1,x_2).$$ 
  Since this is the maximization of a continuous function over a
  compact domain, a maximizer $x^* \in \mathcal{X}$ exists. It is easy
  to see that $x^*$ is Pareto-efficient, implying that the set of
  Pareto-efficient configurations is non-empty. Finally,
  Lemma~\ref{lemma:pareto_bo_improvement} implies that no
  Pareto-stable configuration lies in $\mathcal{X}^o,$ implying that
  any Pareto-efficient configuration lies in
  $\mathcal{X} \setminus \mathcal{X}^o.$ This completes the proof.
  \endproof

\begin{figure*}[t!]
    \centering
    \begin{subfigure}[t]{0.3\textwidth}
        \centering
        \includegraphics[scale=0.4]{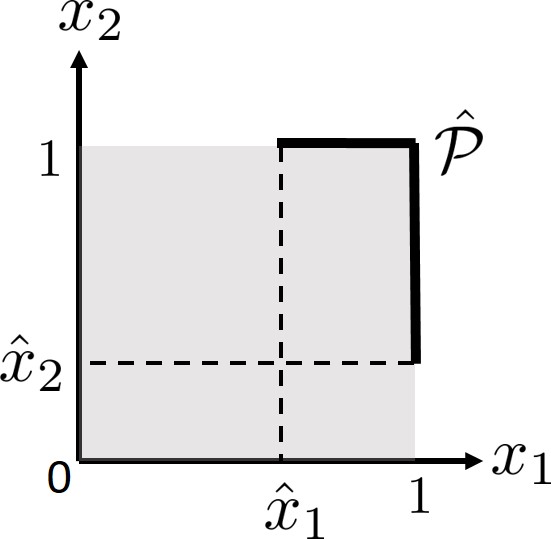}
        \caption{Case 1: Both providers strictly better off under full
          pooling}
 		 \label{fig:case1}
    \end{subfigure}%
    ~ 
    \begin{subfigure}[t]{0.3\textwidth}
        \centering
        \includegraphics[scale=0.4]{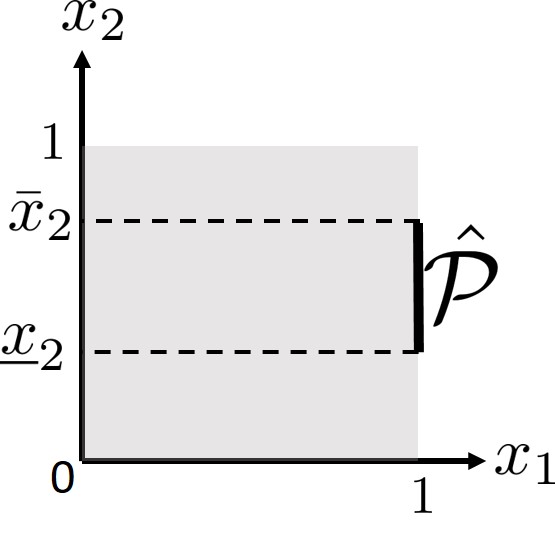}
        \caption{Case 2: Only Provider~1 strictly better off under
          full pooling}
  		\label{fig:case2}
    \end{subfigure}
    ~
    \begin{subfigure}[t]{0.3\textwidth}
        \centering
        \includegraphics[scale=0.4]{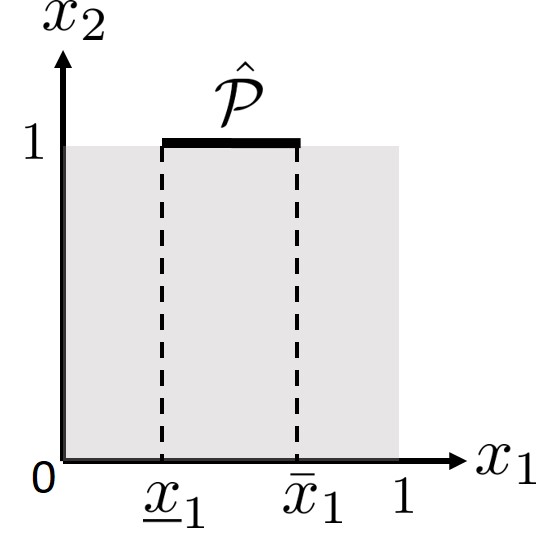}
        \caption{Case 3: Only Provider~2 strictly better off under
          full pooling}
  		\label{fig:case3}
    \end{subfigure}
    \caption{The set $\hat{\mathcal{P}}$ of Pareto-efficient partial
      sharing configurations}
    \label{fig:pareto}
\end{figure*}

It now remains to prove Lemma~\ref{lemma:pareto_bo_improvement}.
\proof[Proof of Lemma~\ref{lemma:pareto_bo_improvement}] We provide a
unified proof of Lemma~\ref{lemma:pareto_bo_improvement} for both
partial sharing models. $\nabla B_1(x_1,x_2)\cdot (1,\theta) < 0$ is
equivalent to
$$\theta > \frac{ \pd{B_1(x_1,x_2)}{x_1}}{ - \pd{B_1(x_1,x_2)}{x_2}}
=: \underline{\theta}.$$

Similarly, $\nabla B_2(x_1,x_2)\cdot (1,\theta) < 0$ is equivalent to
  $$\theta < \frac{- \pd{B_2(x_1,x_2)}{x_1}}{ \pd{B_2(x_1,x_2)}{x_2}}
  =: \bar{\theta}.$$
  
  We therefore have to prove that $\underline{\theta} < \bar{\theta},$
  which is equivalent to
  \begin{align}
    &\pd{B_1(x_1,x_2)}{x_1} \pd{B_2(x_1,x_2)}{x_2} \nonumber \\
    \label{eq:pareto_1}
    &\qquad \qquad < \left(- \pd{B_1(x_1,x_2)}{x_2}\right) \left(- \pd{B_2(x_1,x_2)}{x_1}\right).  
  \end{align}

  Since the blocking probabilities depend $\lambda_i$ and $\mu_i$ only
  through $a_i$, we consider two fictitious providers $P_i^\prime$
  ($i \in \{1,2\}$) with $\mu^{\prime}_1 = \mu^{\prime}_2 = 1$ and
  $\lambda_i^\prime = \lambda_i/\mu_i$ such that
  $B_i^\prime \equiv B_i$. For the providers $P_i^\prime$, we invoke
  Theorem~\ref{thm:monotonicity}, to deduce that $B_{ov}(x_1,x_2)$ is
  a strictly decreasing function of $x_1$ and $x_2$. This means
  \begin{align} \label{eq:b_overall_1} \lambda_1^\prime
    \pd{B_1(x_1,x_2)}{x_1} & < - \lambda_2^\prime \pd{B_2(x_1,x_2)}{x_1}
    \\ \label{eq:b_overall_2} \lambda_2^\prime \pd{B_2(x_1,x_2)}{x_2} & <-
    \lambda_1^\prime \pd{B_1(x_1,x_2)}{x_2}
  \end{align}

  Noting that terms on both sides of \eqref{eq:b_overall_1} and
  \eqref{eq:b_overall_2} are positive, we can multiply the two
  inequalities to obtain \eqref{eq:pareto_1}.

  It is important to note that even though Statement~3 of
  Theorem~\ref{thm:monotonicity} assumes that $\mu_1 = \mu_2,$
  \emph{the present proof does not.}  \endproof

  While Theorem~\ref{thm:pareto} states that the (non-empty) set of
  Pareto-efficient configurations lies on the boundary of the space of
  partial sharing configurations (specifically, in the set
  $\mathcal{X} \setminus \mathcal{X}^o$), it does not provide a
  precise characterization of this set. Interestingly, such a precise
  characterization is possible, which is the goal of the following
  lemma.

\begin{lemma}
  \label{lemma:pareto_structure}
  Under the probabilistic as well as the bounded overflow partial
  sharing models, the set $\hat{\mathcal{P}}$ of Pareto-efficient
  sharing configurations is characterized as follows.
  \begin{enumerate}
  \item If $E(N_1 + N_2,a_1 + a_2) < E(N_i,a_i)$ $\forall\ i,$ then
    there exist uniquely defined constants $\hat{x}_1$ and
    $\hat{x}_2,$ such that for $i = 1,2,$ $\hat{x}_i \in (0,1),$
    \begin{align*}
      B_1(1,\hat{x}_2) &= E(N_1,a_1),\\
      B_2(\hat{x}_1,1) &= E(N_2,a_2).
    \end{align*}
    In this case,
    $$\hat{\mathcal{P}} = \{(x,1):\ x \in(\hat{x}_1,1]\} \cup
    \{(1,x):\ x \in(\hat{x}_2,1]\}.$$
  \item If $E(N_{2},a_{2}) \leq E(N_1 + N_2,a_1 + a_2) < E(N_1,a_1),$
    then there exist uniquely defined constants $\underline{x}_2$ and
    $\bar{x}_2$ satisfying $0 < \underline{x}_2 < \bar{x}_2 \leq 1$
    such that
    \begin{align*}
      B_1(1,\underline{x}_2) &= E(N_1,a_1),\\
      B_2(1,\bar{x}_2) &= E(N_2,a_2).
    \end{align*}
    In this case,
    $$\hat{\mathcal{P}} = \{(1,x):\ x \in
    (\underline{x}_2,\bar{x}_2)\}.$$
  \item If $E(N_{1},a_{1}) \leq E(N_1 + N_2,a_1 + a_2) < E(N_2,a_2),$
    then there exist uniquely defined constants $\underline{x}_1$ and
    $\bar{x}_1$ satisfying $0 < \underline{x}_1 < \bar{x}_1 \leq 1$
    such that
    \begin{align*}
      B_2(\underline{x}_1,1) &= E(N_2,a_2),\\
      B_1(\bar{x}_1,1) &= E(N_1,a_1).
    \end{align*}
    In this case, 
    $$\hat{\mathcal{P}} = \{(x,1):\ x \in (\underline{x}_1,\bar{x}_1)\}.$$
  \end{enumerate}
\end{lemma}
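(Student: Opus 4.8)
The plan is to combine the global monotonicity of Theorem~\ref{thm:monotonicity} with Theorem~\ref{thm:pareto} so as to localize $\hat{\mathcal{P}}$ to the two edges $\{x_1=1\}$ and $\{x_2=1\}$, and then pin down exactly which sub-segments survive QoS-stability. First I would record the corner values $B_i(0,0)=E(N_i,a_i)$ and $B_1(1,1)=B_2(1,1)=E(N_1+N_2,a_1+a_2)=:E_{12}$, together with the strict boundary inequalities $B_2(0,1)>E(N_2,a_2)$, $B_1(1,0)>E(N_1,a_1)$, $B_2(1,0)<E(N_2,a_2)$, and $B_1(0,1)<E(N_1,a_1)$. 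Each of these follows from the strict monotonicity in Theorem~\ref{thm:monotonicity} applied to one coordinate while the other is held at $0$ (e.g. $B_2(0,1)>B_2(0,0)=E(N_2,a_2)$ since $B_2$ is strictly increasing in $x_2$).

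On the edge $\{x_2=1\}$ the map $x_1\mapsto B_1(x_1,1)$ is continuous and strictly increasing while $x_1\mapsto B_2(x_1,1)$ is continuous and strictly decreasing, and symmetrically on $\{x_1=1\}$. In each of the three cases the position of $E_{12}$ relative to $E(N_1,a_1)$ and $E(N_2,a_2)$ determines, via the intermediate value theorem and strict monotonicity, the existence and uniqueness of the threshold constants ($\hat{x}_1,\hat{x}_2$ in Case~1; $\underline{x}_2,\bar{x}_2$ in Case~2; symmetrically Case~3) as the unique roots of the stated equations, and the strict boundary inequalities above place each threshold in the claimed open interval. Reading off where \emph{both} providers satisfy $B_i<E(N_i,a_i)$ then identifies the QoS-stable portion of each edge; in Case~2, for instance, $\{x_2=1\}$ contributes nothing because $B_2(x_1,1)\ge E_{12}\ge E(N_2,a_2)$ throughout. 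Since Theorem~\ref{thm:pareto} guarantees that $\hat{\mathcal{P}}$ is nonempty, contained in $\{x_1=1\}\cup\{x_2=1\}$, and consists of QoS-stable configurations, this already yields the inclusion $\hat{\mathcal{P}}\subseteq$ (claimed set); nonemptiness moreover forces the relevant segment to be nonempty, e.g. $\underline{x}_2<\bar{x}_2$ in Case~2.

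The reverse inclusion, that every claimed edge point is genuinely Pareto-efficient, is the crux and where I expect the main difficulty. The obstacle is that Theorem~\ref{thm:pareto} only asserts that Pareto points lie on the edges; it does not by itself forbid an interior configuration from dominating an edge point. To close this gap I would argue that if a QoS-stable point $p$ were strictly dominated by some $q\in\mathcal{X}$, then maximizing the product $\mathcal{B}_1\mathcal{B}_2$ over the compact, nonempty set $\{y:\ B_i(y)\le B_i(p)\ \forall i\}$, on which both $\mathcal{B}_i$ are strictly positive, produces a maximizer $x^{**}$ that is globally Pareto-efficient and strictly dominates $p$. By Theorem~\ref{thm:pareto}, $x^{**}$ lies on an edge, so it suffices to rule out an edge point strictly dominating a claimed edge point. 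For two points on the same edge this is impossible because moving along the edge strictly trades $B_1$ against $B_2$ (opposite monotonicities), forcing the dominator to coincide with $p$. For points on opposite edges I would use the corner value: any $(1,z_2)$ satisfies $B_1(1,z_2)\ge E_{12}$ while any $(x_1,1)$ satisfies $B_1(x_1,1)\le E_{12}$, with equality (by strict monotonicity) only at the shared corner $(1,1)$, so no strict cross-edge domination is possible either.

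Finally I would note that Case~3 is obtained from Case~2 by interchanging the roles of $P_1$ and $P_2$, so only Cases~1 and~2 need be written in detail, and that assembling the upper bound with the reverse inclusion gives $\hat{\mathcal{P}}$ exactly as stated in each case. The constrained-maximizer step is the part I regard as the essential subtlety: it is what licenses reducing an arbitrary dominator to an edge dominator without appealing to convexity or comprehensiveness of the utility set, properties this model does not enjoy.
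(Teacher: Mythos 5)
Your proposal is correct and follows essentially the same route as the paper's own proof: localization of $\hat{\mathcal{P}}$ to the edges via Lemma~\ref{lemma:pareto_bo_improvement}/Theorem~\ref{thm:pareto}, a product-maximization argument that converts an arbitrary dominator of an edge point into an \emph{edge} dominator, impossibility of edge-to-edge domination via the opposite monotonicities (same edge) and the corner value $E(N_1+N_2,a_1+a_2)$ (opposite edges), and intermediate-value/monotonicity arguments for the threshold constants. The only differences are organizational: the paper first proves the stronger intermediate claim that every point of $\mathcal{X}\setminus\mathcal{X}^o$ is efficient (maximizing a product of gains measured relative to the interior dominator, over all of $\mathcal{X}$) and then intersects with the QoS-stable set, whereas you maximize gains relative to the standalone blocking probabilities over the lower set of the given point; both implementations rest on the same lemmas and the same key subtlety you correctly identified.
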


Figure~\ref{fig:pareto} provides a pictorial representation of the set
of Pareto-efficient partial sharing configurations under the three
cases considered in Lemma~\ref{lemma:pareto_structure}. Note that
Case~1 corresponds to settings where full pooling is beneficial to
both providers. Cases~2 and~3 cover the more asymmetric settings,
where exactly one provider (the more congested one) stands to benefit
from full pooling. Lemma~\ref{lemma:pareto_structure} states that in
such cases, the more congested provider pools all of its servers under
any Pareto-efficient sharing configuration. Intuitively, this is
because the asymmetry in the value of servers pooled by each provider
to the other. Indeed, servers pooled by the more congested provider
add less value, since those servers are available for overflow calls
of the less congested provider less often. As a result, mutually
beneficial sharing configurations have the more congested provider
pool more servers than the less congested provider.

The proof of Lemma~\ref{lemma:pareto_structure} is provided in
Appendix~\ref{app:pareto_struc}.

\section{Economics of Partial Sharing}
\label{sec:economics}

The set $\hat{\mathcal{P}}$ of Pareto-efficient configurations
characterized in Section~\ref{sec:pareto} contains all possible
sharing configurations which are minimal for the partial order induced
by the usual relation ``$\leq$'' applied component-wise on the vectors
of possible blocking probabilities. In other words, for every
QoS-stable configuration outside of this set, there exists a
configuration within $\hat{\mathcal{P}}$ that improves the blocking
probability of at least one provider without worsening the blocking
probability for the other. Unfortunately, the configurations within
the Pareto set are not comparable under this component-wise
relation. If we take any two configurations inside this set, then a
configuration that is better for one of the providers will be worse
for the other provider.  Thus, rational providers who want to minimize
their blocking probability will agree that it is beneficial for both
of them to choose a configuration inside the Pareto set rather than
one outside of this set, but will disagree on the choice of the
configuration within the Pareto set.

It is then the natural to ask: Which configuration within the Pareto
set should the two providers choose? Of course, in addition to the
choices within the Pareto set, they could also choose not to
share. This question, in a more general setting, has been investigated
inside the framework of {\it bargaining theory}. In a typical
two-player bargaining problem, two players have to agree upon one
option amongst several. If both agree upon the option, then each
player gets a utility corresponding to this option. On the other hand,
if they fail to arrive at a consensus, then they get a utility
corresponding to that of a {\it disagreement point}. In our setting,
the two players are the two providers who have to choose between the
various configurations. Of course, they could choose not to share with
the other, in which case the blocking probability for each will be
that of the system with no pooling, i.e., the disagreement point is
just the configuration $(0,0)$.

Our aim in this section is to present some of the most common solution
concepts from bargaining theory and apply them to the partial resource
sharing problem under consideration. We also present results of
numerical experiments for different realistic network settings,
highlighting the potential benefits of partial resource pooling in
practice. Note that the discussion in this section applies to both the
partial pooling models defined in Section~\ref{sec:model}.


\subsection{Bargaining solutions}
The usual way to compute a solution of a bargaining problem is to
first fix a set of axioms that a solution must satisfy. Axioms that
appear often (though not necessarily together) are Pareto optimality
(PO), Symmetry (SYM), Scale Invariance (SI), Independence of
Irrelevant Alternatives (IIA) and Monotonicity (MON).

In addition to the axioms, some solution concepts rely on the
convexity of the space of feasible utility pairs in order to guarantee
uniqueness. In the present setting, the utility of a provider is a
strictly decreasing function of its blocking probability. Due to space
constraints, we restrict our attention to the linear case, i.e., the
utility of $P_i$ is taken to be $-B_i,$ where $B_i$ denotes its
blocking probability. Numerical experiments show that this utility
space is not convex. The usual method to overcome this drawback is to
convexify the utility space by considering its convex hull. For our
problem, this could lead to a solution of the form (as an example):
configuration $(k_1, k_2)$ with probability $p$ and $(k'_1, k'_2)$
with probability $(1-p)$. While on an abstract level, a solution in an
extended space is acceptable, in practice its implementation may not
be straightforward. Should the probability $p$ be interpreted as a
fraction of time during which $(k_1,k_2)$ is implemented? If so, at
what time-scale should the changes in configuration occur?

Another method of getting around convexity is to modify the set of
axioms and show that some variation of the solutions concepts for the
convex case satisfy them (see \cite{HP15} and references
therein). These however require some other assumptions on the utility
set such as comprehensiveness\footnote{Comprehensiveness says that for
any vector in the utility set, all vectors that are weakly dominated
by this vector and that weakly dominate the disagreement point are
also in the utility set.} which is again difficult to verify in our
setting.

We now apply four bargaining solutions from the literature to our
partial pooling model. These are the Nash, Kalai-Smorodinsky,
egalitarian and utilitarian bargaining solutions. The main result in
this section shows the uniqueness of the Kalai-Smorodinsky and the
egalitarian solutions without calling upon the standard arguments of
convexity or comprehensiveness. The proof is based upon monotonicity
properties highlighted in Section~\ref{sec:model}.

For the bargaining solutions in this section, we assume that the
utility of each provider is the negative of its blocking
probability. In some situations, it may be more meaningful to take the
negative logarithm of the blocking probability as the utility of a
provider. We give the logarithmic variants of the Nash,
Kalai-Smorodinsky, and the egalitarian solutions in Appendix
\ref{app:logbs}.

\subsubsection*{Nash bargaining solution}

The first concept we present was proposed by Nash in the seminal paper
\cite{N50}.
\begin{definition}
  A partial sharing configuration $(x_1^*,x_2^*)$ is \emph{ Nash
    bargaining solution (NBS)}, if the partial sharing configuration
  satisfies the following condition: $(x_1^*,x_2^*) = $
$$\argmax_{x \in [0,1]^2}  \left[B_1(0,0) - B_1(x_1,x_2) \right]_+ \left[ B_2(0,0) - B_2(x_1,x_2) \right]_+  . $$
\end{definition}
Here $[z]_+ := max(z,0)$ denotes the positive part of $x.$ At the NBS,
the players are maximizing the product of the individual utilities
relative to the disagreement point\footnote{Here, {\it relative} means
  upon subtracting the utilities at the disagreement point.}. Clearly,
any maximizer would lie in the set of $\hat{\mathcal{P}}.$ However,
the drawback of the NBS for our problem is that the utility space is
not convex (observed in numerical experiments) which implies that the
NBS may not be unique.

\subsubsection*{Kalai-Smorodinsky bargaining solution}

One of criticisms of the NBS is the axiom of IIA which may not hold in
practice. In \cite{KS75}, Kalai and Smorodinsky replaced IIA with MON
and obtained the following solution concept.
\begin{definition}
  A partial sharing configuration $(x_1^*,x_2^*)$ is a \emph{
    Kalai-Smorodinsky bargaining solution (KSBS)}, if
  $(x_1^*,x_2^*) \in \hat{\mathcal{P}}$ and satisfies
\begin{align*}
  \frac{B_1(0,0) - B_1(x_1,x_2)}{ B_2(0,0) - B_2(x_1,x_2)} &~= \frac{B_1(0,0)
  - \min\limits_{y \in [0,1]^2} B_1(y_1,y_2)}{ B_2(0,0) - \min\limits_{y \in [0,1]^2} B_2(y_1,y_2)}. 
 \end{align*}
\end{definition}
At a KSBS solution the ratio of relative utilities of the providers is
equal to the ratio of their maximal relative utilities.
For our problem, the following results guarantees
uniqueness of the solution which could be make it potentially more
attractive than the NBS.

\begin{theorem}
\label{lemma:KSBS}
For the bounded overflow sharing model, the KSBS is unique.
\end{theorem}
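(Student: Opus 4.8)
The plan is to reduce uniqueness of the KSBS to strict monotonicity of a scalar ratio along the Pareto frontier, using the explicit description of $\hat{\mathcal{P}}$ in Lemma~\ref{lemma:pareto_structure} together with the strict monotonicity of the blocking probabilities in Theorem~\ref{thm:monotonicity}. First I would introduce the relative utilities
$$u_i(x_1,x_2) := B_i(0,0) - B_i(x_1,x_2), \qquad i \in \{1,2\},$$
so that, by definition, the KSBS is exactly the configuration in $\hat{\mathcal{P}}$ at which the ratio $u_1/u_2$ equals the fixed target
$$r := \frac{B_1(0,0) - \min_{y \in [0,1]^2} B_1(y)}{B_2(0,0) - \min_{y \in [0,1]^2} B_2(y)}.$$
By Statements~1 and~2 of Theorem~\ref{thm:monotonicity}, $B_1$ is strictly increasing in $x_1$ and strictly decreasing in $x_2$, so it is minimized over $[0,1]^2$ at $(0,1)$; symmetrically, $B_2$ is minimized at $(1,0)$. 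Strictness gives $\min_{y} B_i(y) < B_i(0,0)$, so both numerator and denominator of $r$ are positive and finite, i.e.\ $r \in (0,\infty)$. Moreover, every configuration in $\hat{\mathcal{P}}$ is QoS-stable, so $u_1,u_2 > 0$ there and $u_1/u_2$ is well defined and positive throughout the interior of the frontier.

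Next I would parameterize $\hat{\mathcal{P}}$ as a single continuous curve, distinguishing the three cases of Lemma~\ref{lemma:pareto_structure}. In Cases~2 and~3 the frontier is a single open segment with one coordinate pinned at $1$. In Case~1 it is the union of two segments meeting at the corner $(1,1)$, which I would traverse from the endpoint $(\hat{x}_1,1)$ along $\{(x,1)\}$ up to $(1,1)$ and then along $\{(1,x)\}$ down to $(1,\hat{x}_2)$.

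The core step is to show that $u_1/u_2$ is continuous and strictly monotone along this curve. On any sub-arc where $x_1=1$ is fixed and $x_2$ varies, Theorem~\ref{thm:monotonicity} implies that increasing $x_2$ strictly increases $B_2$ (so $u_2$ strictly decreases) and strictly decreases $B_1$ (so $u_1$ strictly increases); hence $u_1/u_2$ is strictly increasing in $x_2$, and the symmetric statement holds on arcs where $x_2=1$ is fixed. With the traversal oriented as above, this makes $u_1/u_2$ strictly decreasing along each of the two pieces in Case~1; since $u_1$ and $u_2$ are continuous, the ratio is continuous across the junction $(1,1)$ as well, so it is strictly monotone over the entire frontier. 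Finally, I would read off the endpoint limits from the defining equations of Lemma~\ref{lemma:pareto_structure}: at each excluded endpoint exactly one of $u_1,u_2$ vanishes --- e.g.\ $B_1(1,\hat{x}_2)=E(N_1,a_1)=B_1(0,0)$ forces $u_1 \to 0^+$ --- while the complementary utility stays bounded away from $0$, as follows from the defining equation at the opposite endpoint together with strict monotonicity. Thus $u_1/u_2$ sweeps through all of $(0,\infty)$, and since $r \in (0,\infty)$ and the ratio is continuous and strictly monotone, there is exactly one configuration in $\hat{\mathcal{P}}$ with $u_1/u_2 = r$, giving both existence and uniqueness of the KSBS.

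I expect the main obstacle to be establishing global strict monotonicity of $u_1/u_2$, and in particular verifying that it persists across the corner $(1,1)$ in Case~1, where the active coordinate of the frontier switches. This is precisely where I must orient the traversal consistently and invoke both the increasing part (Statement~1) and the decreasing part (Statement~2) of Theorem~\ref{thm:monotonicity} simultaneously; neither alone controls the ratio. The remaining steps --- locating the component-wise minima of $B_i$ and the endpoint limits --- are routine consequences of the same monotonicity.
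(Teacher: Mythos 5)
Your proposal is correct and follows essentially the same route as the paper's proof: define the relative-utility ratio, locate the minima $B_1(0,1)$ and $B_2(1,0)$ via Statements~1 and~2 of Theorem~\ref{thm:monotonicity}, and show the ratio is continuous and strictly monotone along the Pareto frontier in each of the three cases of Lemma~\ref{lemma:pareto_structure}, sweeping all of $(0,\infty)$ so that the KSBS target is hit exactly once. Your treatment of the corner $(1,1)$ and the endpoint limits simply makes explicit what the paper compresses into ``it is easy to see that $f$ is strictly decreasing and continuous.''
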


\proof[Proof of Theorem~\ref{lemma:KSBS}] Define the following
functions.
\begin{align*}
f(x_1,x_2) := &~\frac{B_1(0,0) - B_1(x_1,x_2)}{ B_2(0,0) - B_2(x_1,x_2)} \\
KS := &~\frac{B_1(0,0) - \min\limits_{y \in [0,1]^2} B_1(y_1,y_2)}{ B_2(0,0) - \min\limits_{y \in [0,1]^2} B_2(y_1,y_2)}
\end{align*}
From the Statements~1 and~2 of Theorem \ref{thm:monotonicity}, we get
\begin{align*}
  \min\limits_{y \in [0,1]^2} B_1(y_1,y_2) &=  B_1(0,1),\\
  \min\limits_{y \in [0,1]^2} B_2(y_1,y_2) &=  B_2(1,0). 
\end{align*}
i.e., each provider gets the maximum benefit when it pools none of its
servers and the other provider pools all of its servers.

It is easy to see that $0< KS < \infty. $
Consider the three cases for $\hat{\mathcal{P}}$ from
Lemma~\ref{lemma:pareto_structure}.
 
\noindent{\bf{Case 1:}}  $E(N_1 + N_2,a_1 + a_2) < E(N_i,a_i)$ $\forall\ i$  \\
Sweeping the (topologically one-dimensional) Pareto-frontier clockwise
from $(\hat{x}_1,N_2)$ to $(N_1,\hat{x}_2),$ it is easy to see that
$f$ is strictly decreasing and continuous, with
\begin{align*}
  &\lim_{x_1 \da \hat{x}_1} f(x_1,1) = \infty, \\
  &\lim_{x_2 \da \hat{x}_2} f(1,x_2) = 0. 
\end{align*}
There is thus a unique point on the Pareto-frontier that satisfies the
KSBS condition.

\noindent{\bf{Case 2:}} $E(N_{2},a_{2}) \leq E(N_1 + N_2,a_1 + a_2) < E(N_1,a_1)$\\
As before, sweeping the Pareto-frontier clockwise from
$(1,\bar{x}_2)$ to $(1,\underline{x}_2),$ it is easy to see that
$f$ is strictly decreasing and continuous, with
\begin{align*}
  &\lim_{x_2 \ua \bar{x}_2} f(1,x_2) = \infty, \\
  &\lim_{x_2 \da \underline{x}_2} f(1,x_2) = 0.
\end{align*}
There is thus a unique point on the Pareto-frontier that satisfies the
KSBS condition.

\noindent{\bf{Case 3:}} $E(N_{1},a_{1}) \leq E(N_1 + N_2,a_1 + a_2) < E(N_2,a_2)$ \\
The argument here is analogous to that for the above cases.  \endproof
 
\subsubsection*{Egalitarian solution}
The next solution concept we present was also proposed by Kalai \cite{K77}. It satisfies PO, SYM, IIA, and MON but violates SI. It captures the sharing configuration in which the gains relative to the disagreement solution for both the providers is the same. 
\begin{definition}
  A partial sharing configuration $(x_1^*,x_2^*)$ is an \emph{
    egalitarian solution (ES)}, if
  $(x_1^*,x_2^*) \in \hat{\mathcal{P}}$ and satisfies
  \begin{align*}
    B_1(0,0) - B_1(x_1,x_2) =  B_2(0,0) - B_2(x_1,x_2).
  \end{align*}
\end{definition}
Under an ES, the providers will see the same amount of improvement in
their blocking probabilities relative to the no-sharing option. The
following result shows that the ES is unique. Its proof follows
similar lines as the proof of Theorem~\ref{lemma:KSBS}.

\begin{lemma}
\label{lemma:ES}
For the bounded overflow sharing model, the ES is unique.
\end{lemma}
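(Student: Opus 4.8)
The plan is to mirror the structure of the proof of Theorem~\ref{lemma:KSBS}, since the egalitarian condition $B_1(0,0)-B_1(x_1,x_2) = B_2(0,0)-B_2(x_1,x_2)$ is, like the KSBS condition, a scalar equation that selects a point on the (topologically one-dimensional) Pareto frontier $\hat{\mathcal{P}}$. Define the difference function
\begin{equation*}
  g(x_1,x_2) := \bigl[B_1(0,0)-B_1(x_1,x_2)\bigr] - \bigl[B_2(0,0)-B_2(x_1,x_2)\bigr].
\end{equation*}
An ES is precisely a configuration in $\hat{\mathcal{P}}$ at which $g=0$, so uniqueness amounts to showing that $g$ has exactly one zero as we sweep along the frontier. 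First I would establish that $g$ is continuous and strictly monotone along $\hat{\mathcal{P}}$, and then apply the intermediate value theorem together with strict monotonicity to conclude that the zero exists and is unique.

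The strict monotonicity is where Theorem~\ref{thm:monotonicity} does the work. Parameterising the frontier as a clockwise sweep exactly as in the KSBS proof, I would split into the three cases of Lemma~\ref{lemma:pareto_structure}. Consider Case~2, where $\hat{\mathcal{P}} = \{(1,x):\ x\in(\underline{x}_2,\bar{x}_2)\}$ and $x_1\equiv 1$ is fixed. Along this segment, as $x_2$ increases, Statement~1 of Theorem~\ref{thm:monotonicity} gives that $B_2(1,x_2)$ is strictly increasing (so $B_2(0,0)-B_2$ strictly decreases), while Statement~2 gives that $B_1(1,x_2)$ is strictly decreasing (so $B_1(0,0)-B_1$ strictly increases). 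Hence $g(1,x_2)$ is strictly increasing in $x_2$: the $P_1$-gain term rises and the $P_2$-gain term falls, so their difference is strictly monotone. The identical mechanism works in Case~3 (sweeping $x_1$ with $x_2\equiv 1$), and in Case~1 the frontier is the union of two such axis-parallel segments meeting at $(1,1)$, along each of which the same monotonicity argument applies; continuity of $g$ across the join at $(1,1)$ is immediate since $B_i$ is continuous.

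It then remains to pin down the sign of $g$ at the two endpoints of the sweep so that the intermediate value theorem yields a zero. At the defining endpoints of $\hat{\mathcal{P}}$ one of the QoS constraints binds, i.e.\ $B_i=E(N_i,a_i)$, so the corresponding gain term vanishes while the other is strictly positive (by QoS-stability of interior frontier points, guaranteed via Lemma~\ref{lemma:pareto_bo_improvement}); this forces $g$ to take opposite signs at the two ends. Combined with strict monotonicity, this gives exactly one zero, which is the unique ES. The main obstacle I anticipate is not the monotonicity itself but the bookkeeping at the boundary: one must check that at each endpoint exactly one provider's gain is zero and the other's is strictly positive (rather than both vanishing or the frontier degenerating), and in Case~1 one must verify that the sign change occurs on the combined two-segment path and that the egalitarian point could in principle sit on either segment. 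These endpoint sign computations reduce to the characterisation in Lemma~\ref{lemma:pareto_structure} and the strict inequalities $E(N_1+N_2,a_1+a_2)<E(N_i,a_i)$ defining each case, so they are routine once the cases are laid out, and the proof closes by noting the argument is verbatim parallel to that of Theorem~\ref{lemma:KSBS}.
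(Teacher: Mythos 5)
Your proof is correct and takes essentially the same route as the paper: the paper's own proof simply states that the KSBS argument (clockwise sweep of the one-dimensional Pareto frontier, strict monotonicity via Statements~1 and~2 of Theorem~\ref{thm:monotonicity}, endpoint limits) applies verbatim with the constant $KS$ replaced by~$1$, and your condition $g=0$ on the difference of gains is just that same condition $f=1$ rewritten additively rather than as a ratio. The case analysis, monotonicity mechanism, and endpoint sign bookkeeping you describe match the paper's argument.
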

\proof[Proof of Lemma \ref{lemma:ES}] The argument in the proof of
Theorem~\ref{lemma:KSBS} applies as is here, except that the constant
$KS$ is replaced by~1.
\endproof
An interesting property of the ES is that if the standalone blocking
probabilities of the two providers are identical, that the ES
corresponds to complete pooling.

\begin{lemma}
  \label{lemma:es}
  If $E(N_1,a_1) = E(N_2,a_2),$ then the ES lies at $(1,1).$
\end{lemma}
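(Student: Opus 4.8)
The plan is to reduce the egalitarian equation using the hypothesis, verify that full pooling satisfies it, and then confirm that full pooling is Pareto-efficient so that it qualifies as the (unique) egalitarian solution. First I would observe that under the hypothesis $E(N_1,a_1) = E(N_2,a_2)$ we have $B_1(0,0) = B_2(0,0)$, so the defining equation of the ES,
$$B_1(0,0) - B_1(x_1,x_2) = B_2(0,0) - B_2(x_1,x_2),$$
collapses to $B_1(x_1,x_2) = B_2(x_1,x_2)$. Since $B_1(1,1) = B_2(1,1) = E(N_1+N_2, a_1+a_2)$, the configuration $(1,1)$ satisfies the egalitarian equation.

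Next I would verify that $(1,1) \in \hat{\mathcal{P}}$. By Lemma~\ref{lemma:pareto_structure}, full pooling lies in $\hat{\mathcal{P}}$ exactly in Case~1, i.e., when $E(N_1+N_2, a_1+a_2) < E(N_i,a_i)$ for both $i$. To establish this inequality I would use the fictitious-provider reduction employed in the proof of Lemma~\ref{lemma:pareto_bo_improvement}: since the blocking probabilities depend on $\lambda_i$ and $\mu_i$ only through the offered loads $a_i$, I may assume $\mu_1 = \mu_2 = 1$ and invoke Statement~3 of Theorem~\ref{thm:monotonicity}, which gives $B_{ov}(1,1) < B_{ov}(0,0)$. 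Now $B_{ov}(1,1) = E(N_1+N_2, a_1+a_2)$, while $B_{ov}(0,0)$ is a convex combination of $E(N_1,a_1)$ and $E(N_2,a_2)$, both of which equal the common value $E := E(N_1,a_1) = E(N_2,a_2)$; hence $B_{ov}(0,0) = E$. Combining these, $E(N_1+N_2, a_1+a_2) < E = E(N_i,a_i)$ for both $i$, which is precisely Case~1, so $(1,1) \in \hat{\mathcal{P}}$.

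Having shown that $(1,1)$ lies in $\hat{\mathcal{P}}$ and satisfies the egalitarian equation, I would conclude that $(1,1)$ is an egalitarian solution, and by the uniqueness established in Lemma~\ref{lemma:ES} it is \emph{the} ES. The main obstacle is this middle step: confirming $(1,1) \in \hat{\mathcal{P}}$. It is tempting to think the egalitarian equation alone pins down the solution, but the ES is by definition required to lie in the Pareto set, and $(1,1)$ belongs to $\hat{\mathcal{P}}$ only in Case~1 of Lemma~\ref{lemma:pareto_structure}; the crux is therefore to rule out Cases~2 and~3, which hinges on the strict inequality $B_{ov}(1,1) < B_{ov}(0,0)$ together with the collapse of $B_{ov}(0,0)$ to the common standalone value $E$. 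The only further care needed is the fictitious-provider reduction, since Statement~3 of Theorem~\ref{thm:monotonicity} is stated under $\mu_1 = \mu_2$ whereas the lemma imposes no such restriction.
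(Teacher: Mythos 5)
Your proof is correct, and its skeleton matches the paper's: establish $E(N_1+N_2,a_1+a_2) < E(N_i,a_i)$ so that Case~1 of Lemma~\ref{lemma:pareto_structure} places $(1,1)$ in $\hat{\mathcal{P}}$, observe that the egalitarian equation collapses to $B_1(x_1,x_2) = B_2(x_1,x_2)$ (which $(1,1)$ satisfies), and conclude by uniqueness. The genuine difference is in how the key inequality is obtained. The paper simply cites the classical Erlang-B economies-of-scale property
$$E(N_1 + N_2, a_1 + a_2) < \frac{a_1}{a_1 + a_2} E(N_1,a_1) + \frac{a_2}{a_1 + a_2} E(N_2,a_2),$$
and specializes it under the hypothesis; you instead re-derive exactly this inequality from the paper's own results, using Statement~3 of Theorem~\ref{thm:monotonicity} to get $B_{ov}(1,1) < B_{ov}(0,0)$ and the fictitious-provider reduction (borrowed from the proof of Lemma~\ref{lemma:pareto_bo_improvement}) to lift the $\mu_1 = \mu_2$ restriction --- a step you correctly identify as necessary. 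Your route is self-contained and shows the classical convexity property is itself a corollary of the monotonicity theorem, at the cost of some length; the paper's citation is shorter but rests on an external fact. One further contrast: for pinning down $(1,1)$, the paper does not invoke Lemma~\ref{lemma:ES} but argues directly that strict monotonicity of the blocking probabilities along the Pareto frontier makes $(1,1)$ the only point of $\hat{\mathcal{P}}$ with $B_1 = B_2$; your appeal to Lemma~\ref{lemma:ES} is equally valid, though since that lemma is stated for the bounded overflow model, your proof formally inherits that scope, whereas the paper's direct argument applies verbatim to both sharing models.
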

\proof[Proof of Lemma \ref{lemma:es}]
  We invoke the following well known property of the Erlang-B formula.
  $$E(N_1 + N_2, a_1 + a_2) < \frac{a_1}{a_1 + a_2} E(N_1,a_1) + \frac{a_2}{a_1 + a_2} E(N_2,a_2).$$
  If $E(N_1,a_1) = E(N_2,a_2),$ it follows then that
  $$E(N_1 + N_2, a_1 + a_2) < E(N_1,a_1) = E(N_2,a_2),$$
  implying that the set $\hat{\mathcal{P}}$ of Pareto-efficient
  configurations includes $(1,1)$ (see
  Lemma~\ref{lemma:pareto_structure}).
  
  Further, is $E(N_1,a_1) = E(N_2,a_2),$ then the ES clearly satisfies
  $B_1(x_1,x_2) = B_2(x_1,x_2).$ However, from the monotonicity
  properties of the blocking probabilities, $(1,1)$ is the only point
  in $\hat{\mathcal{P}}$ that satisfies this property.
\endproof

\subsubsection*{Utilitarian solution}
The last solution concept is that of utilitarian bargaining solution
(see, e.g., \cite{T81}). It minimizes the blocking probability of the
customers as a whole without distinguishing them according the
provider to which they subscribe. It captures the greatest good to the
system.  The advantage is that it is a concept that is easy for
customers to identify with. On the other hand, the axioms of SI and
MON are violated. Nonetheless, the violation of SI does not seem to be
problematic when the utilities are blocking probabilities. Indeed,
there is a unique natural scale on which the blocking probability
satisfies the axioms that define a probability measure.

\begin{definition}
  A partial sharing configuration $(x_1^*,x_2^*)$ is a \emph{
    utilitarian bargaining solution} (US) if it satisfies
$$ \argmin_{k \in \boldsymbol{C}(\hat{\mathcal{P}})}  \frac{\lambda_1}{\lambda_1 + \lambda_2} B_1(x_1x_2) + \frac{\lambda_2}{\lambda_1 + \lambda_2}B_2(x_1,x_2)  . $$
\end{definition}

Here, $\boldsymbol{C}(\hat{\mathcal{P}})$ denotes the closure of
$\hat{\mathcal{P}}.$ We relax the above minimization to be over
$\boldsymbol{C}(\hat{\mathcal{P}})$ instead of over the open set
$\hat{\mathcal{P}}$ because in some cases, it turns out that the
solution lies on the boundary. Assuming that the average call holding
time for both providers is identical, the utilitarian solution is
unique and can be characterized precisely.

\begin{lemma}
  \label{lemma:us}
  If $\mu_1 = \mu_2,$ under the bounded overflow model, the US is
  characterized as follows.\footnote{We use the notation from
    Lemma~\ref{lemma:pareto_structure}.}
  \begin{enumerate}
  \item If $E(N_1 + N_2,a_1 + a_2) < E(N_i,a_i)$ $\forall\ i,$ then
    the US is $(1,1)$
  \item If $E(N_{2},a_{2}) \leq E(N_1 + N_2,a_1 + a_2) < E(N_1,a_1),$
    then the US is $(1,\bar{x}_2)$
  \item If $E(N_{1},a_{1}) \leq E(N_1 + N_2,a_1 + a_2) < E(N_2,a_2),$
    then the US is $(\bar{x}_1,1)$
  \end{enumerate}
\end{lemma}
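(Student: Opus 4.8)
The plan is to exploit Statement~3 of Theorem~\ref{thm:monotonicity}, which asserts that when $\mu_1 = \mu_2,$ the overall blocking probability $B_{ov}(x_1,x_2)$ is strictly decreasing in each of $x_1$ and $x_2.$ Since the utilitarian objective is precisely $B_{ov},$ the US minimizes $B_{ov}$ over the closure $\boldsymbol{C}(\hat{\mathcal{P}}).$ Strict monotonicity in both coordinates is the engine of the whole argument: it tells us that, within any set of feasible configurations, the minimizer of $B_{ov}$ is the one that pushes $x_1$ and $x_2$ as large as jointly possible. I would therefore recall the explicit description of $\hat{\mathcal{P}}$ from Lemma~\ref{lemma:pareto_structure} case by case and, in each, locate the point of $\boldsymbol{C}(\hat{\mathcal{P}})$ that is coordinatewise maximal in the relevant sense.

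First I would treat Case~1, where $\hat{\mathcal{P}} = \{(x,1): x \in (\hat{x}_1,1]\} \cup \{(1,x): x \in (\hat{x}_2,1]\}.$ Every point of this set has at least one coordinate equal to $1,$ and the unique point having \emph{both} coordinates equal to $1$ is $(1,1),$ which lies in $\boldsymbol{C}(\hat{\mathcal{P}}).$ Because $B_{ov}$ strictly decreases in each coordinate, moving from any other configuration $(x,1)$ with $x<1$ (resp.\ $(1,x)$ with $x<1$) to $(1,1)$ strictly lowers $B_{ov}.$ Hence the minimizer over the closure is $(1,1),$ giving Statement~1.

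Next I would handle Case~2, where $\hat{\mathcal{P}} = \{(1,x): x \in (\underline{x}_2,\bar{x}_2)\}$ so that the first coordinate is pinned at $1$ and only the second varies. Here $B_{ov}(1,x)$ is, by Statement~3, strictly decreasing in $x,$ so over the closed interval $[\underline{x}_2,\bar{x}_2]$ the minimum is attained at the right endpoint $x = \bar{x}_2,$ yielding the US $(1,\bar{x}_2).$ Case~3 is symmetric under interchange of the two providers—$\hat{\mathcal{P}} = \{(x,1): x \in (\underline{x}_1,\bar{x}_1)\}$ with the second coordinate pinned at $1$—and the same monotonicity argument places the minimizer at $(\bar{x}_1,1).$ In each case uniqueness of the minimizer follows from the \emph{strictness} of the monotonicity, which forbids a tie.

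The main subtlety, and the one point I would be careful about, is the passage to the closure $\boldsymbol{C}(\hat{\mathcal{P}}).$ Since $\hat{\mathcal{P}}$ is an open (relatively one-dimensional) set in Cases~2 and~3, its infimum over the open set need not be attained, which is exactly why the definition minimizes over the closure; the candidate endpoints $\bar{x}_2$ and $\bar{x}_1$ are limit points that belong to $\boldsymbol{C}(\hat{\mathcal{P}})$ but not to $\hat{\mathcal{P}}$ itself. I would verify that these endpoints are indeed the boundary points toward which $B_{ov}$ decreases (as opposed to $\underline{x}_2,\underline{x}_1,$ toward which it increases), so that relaxing to the closure genuinely captures the minimizer without introducing spurious candidates. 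With strict monotonicity in hand this verification is immediate, so the only real work is the bookkeeping across the three cases of Lemma~\ref{lemma:pareto_structure}.
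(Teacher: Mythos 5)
Your proposal is correct and follows exactly the route the paper intends: the paper omits the proof precisely because it is a ``direct consequence of Statement~3 of Theorem~\ref{thm:monotonicity},'' and your case-by-case application of that strict monotonicity of $B_{ov}$ to the closure of $\hat{\mathcal{P}}$ from Lemma~\ref{lemma:pareto_structure} is the intended argument, including the careful handling of the open-interval endpoints in Cases~2 and~3.
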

We omit the proof of Lemma~\ref{lemma:us}, since it is direct
consequence of Statement~3 of Theorem~\ref{thm:monotonicity}. Another
quick observation is that when the standalone blocking probabilities
are matched, the utilitarian solution, like the egalitarian solution,
corresponds to full pooling.
\begin{corollary}
  \label{corr:us}
  If $E(N_1,a_1) = E(N_2,a_2),$ then the US lies at $(1,1).$
\end{corollary}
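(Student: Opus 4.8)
The plan is to reduce the statement to the egalitarian solution and exploit the already-established special case.

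Recall that the corollary concerns the utilitarian solution (US) when $E(N_1,a_1) = E(N_2,a_2)$. First I would observe that the hypothesis $E(N_1,a_1) = E(N_2,a_2)$ is exactly the hypothesis of Lemma~\ref{lemma:es}, which already establishes that under this condition the set $\hat{\mathcal{P}}$ of Pareto-efficient configurations contains $(1,1)$. Indeed, the proof of Lemma~\ref{lemma:es} invokes the strict convexity-type inequality for the Erlang-B formula to conclude $E(N_1 + N_2, a_1 + a_2) < E(N_1,a_1) = E(N_2,a_2)$, which places us in Case~1 of Lemma~\ref{lemma:pareto_structure}, where full pooling is QoS-stable and lies in $\hat{\mathcal{P}}$.

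Next I would invoke the characterization of the US already given in Lemma~\ref{lemma:us}. Since we are in Case~1 of Lemma~\ref{lemma:pareto_structure} (both providers strictly better off under full pooling), Statement~1 of Lemma~\ref{lemma:us} directly yields that the US is $(1,1)$. The only hypothesis of Lemma~\ref{lemma:us} beyond the case distinction is $\mu_1 = \mu_2$; this is implicitly available because the utilitarian solution is defined through $B_{ov}$, whose minimization in Statement~3 of Theorem~\ref{thm:monotonicity} requires matched mean holding times. I would therefore state the corollary under the standing assumption $\mu_1 = \mu_2$ carried over from Lemma~\ref{lemma:us}.

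Thus the entire argument is a two-step chaining: the equal-standalone-blocking hypothesis forces Case~1 (via the Erlang-B inequality already quoted in Lemma~\ref{lemma:es}), and Case~1 forces the US to be $(1,1)$ (via Lemma~\ref{lemma:us}). There is essentially no obstacle here beyond correctly verifying the case membership; the corollary is a direct specialization. The only point requiring a moment's care is confirming that $E(N_1,a_1) = E(N_2,a_2)$ genuinely lands us in Case~1 rather than on a boundary between cases---but the strict inequality $E(N_1 + N_2, a_1 + a_2) < E(N_i,a_i)$ for both $i$, inherited verbatim from the proof of Lemma~\ref{lemma:es}, resolves this cleanly.
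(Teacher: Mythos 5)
Your proof is correct and follows essentially the same route as the paper: invoke the strict Erlang-B inequality (as in the proof of Lemma~\ref{lemma:es}) to get $E(N_1+N_2,a_1+a_2) < E(N_1,a_1) = E(N_2,a_2)$, which places the problem in Case~1, and then conclude via Statement~1 of Lemma~\ref{lemma:us}. Your additional remark that the hypothesis $\mu_1 = \mu_2$ must be carried over from Lemma~\ref{lemma:us} is a fair observation of an implicit assumption the paper leaves unstated in the corollary.
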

\proof[Proof of Corollary \ref{corr:us}.]
  As was argued in the proof of Lemma~\ref{lemma:es}, if
  $E(N_1,a_1) = E(N_2,a_2),$ then 
  $$E(N_1 + N_2, a_1 + a_2) < E(N_1,a_1) = E(N_2,a_2).$$
  The statement of the corollary now follows from
  Lemma~\ref{lemma:us}.
\endproof

While the utilitarian solution is the most efficient, in that is
minimizes the overall blocking probability, it may not be
fair. Indeed, under Cases~2 and~3 of Lemma~\ref{lemma:us} above, one
of the providers (the less congested provider) sees no reduction in
its blocking probability relative to the disagreement point.


\subsection{Numerical examples}

In this section, we present numerical results illustrating the various
bargaining solutions under realistic system settings. The goal of this
section is two-fold: to demonstrate the benefits of partial resource
pooling to the two providers, and to illustrate differences between
the different bargaining solutions. Due to space constraints, we are
only able to consider two network settings. Also, restrict our
attention in this section to the bounded overflow sharing model; we
represent the bargaining solution as $(k_1^*, k_2^*),$ where $k_i^* =
N_i x_i^*.$

\begin{table}[h]
\centering
\caption{Different bargaining solutions for the case
  $N_1 = N_2 = 100,$ the standalone blocking probabilities of $P_1$
  and $P_2$ being 6\% and 1\%, respectively. Mean call holding times
  are assumed to be equal for both providers.}
\label{tb:bo_same_size}
\begin{tabular}{|l|c|c|c|c|}
  \hline
  Bargaining & $k_1^*$ & $k_2^*$ & $B_1$ & $B_2$\\
  solution  &&&&\\
  \hline
  US & 100 & 13.1 & 1. 73\%& 1\% \\
  KSBS & 100 & 6 & 3.39\% & 0.63 \% \\
  NBS & 100 & 5.5 & 3.6\% & 0.6 \% \\
  ES & 100 & 1.35 & 5.36\% & 0.36\% \\
  \hline
\end{tabular}
\end{table}

First we consider a scenario where the two providers have the same
number of servers, but differ with respect to their standalone
blocking probabilities. Specifically, we set $N_1 = N_2 = 100,$ with
$E(N_1,a_1) = 0.06$ (6\%), $E(N_1,a_1) = 0.01$ (1\%), and
$\mu_1 = \mu_2 = 1.$ Clearly $P_1$ is the more congested provider. The
different bargaining solutions for this scenario are summarized in
Table~\ref{tb:bo_same_size}. As expected, the more congested provider
$P_1$ pools all its servers under all bargaining solutions. Moreover,
the `efficient' utilitarian solution is the most beneficial for $P_1,$
while not providing any benefit to $P_2.$ At the other extreme, ES is
the most pessimal, since it enforces the same reduction in blocking
probability, even though the scope for reduction is much less for
$P_2.$ KSBS and NBS result in intermediate contributions by $P_2,$ and
result in a substantial benefits for both $P_1$ and $P_2;$ indeed,
these configurations result in a roughly 40\% reduction in the
blocking probability of each provider.

\begin{table}[h]
\centering
\caption{Different bargaining solutions for the case $N_1 = 200,$
  $N_2 = 50,$ both providers having a standalone blocking probability
  of 5\%. Mean call holding times are assumed to be equal for both
  providers.}
\label{tb:bo_same_blockprob}
\begin{tabular}{|l|c|c|c|c|}
  \hline
  Bargaining & $k_1^*$ & $k_2^*$ & $B_1$ & $B_2$\\
  solution  &&&&\\
  \hline
  US & 200 & 50 & 3.33\% & 3.33\% \\
  ES & 200 & 50 & 3.33\% & 3.33\% \\
  NBS & 200 & 9.5 & 3.36\% & 3.19\% \\
  KSBS & 200 & 8 & 3.56\% & 2.99\% \\
  \hline
\end{tabular}
\end{table}

Next, we consider a scenario where the two providers differ in size,
but are matched with respect to standalone blocking
probability. Specifically, we set $N_1 = 200,$ $N_2 = 50,$
$E(N_1,a_1) = E(N_2,a_2) = 0.05,$ and $\mu_1 = \mu_2 = 1.$ The results
are summarized in Table~\ref{tb:bo_same_blockprob}. As expected, the
US as well as the ES correspond to complete pooling (see
Lemma~\ref{lemma:es} and Corollary~\ref{corr:us}); this results in
both providers seeing a blocking probability of 3.33\%. On the other
hand, the NBS as well as the KSBS, the smaller provider ($P_2$) pools
fewer servers. As a result, the smaller provider achieves an even
lower blocking probability under KSBS/NBS, at the expense of a higher
blocking probability for the larger provider (compared to the full
pooling under US/ES). As before, it is important to note that partial
resource pooling offers the possibility of substantially lower
blocking probability for both providers.

\section{Large System Limits: Square root scaling}
\label{sec:large}
The computational complexity of the exact steady-state blocking
probability increases as the number of circuits becomes large
\cite{Kelly91}. As a result, approximations can turn out to be helpful
for their tractability as well as their ability to provide insights
into the complex dependencies between the blocking probabilities and
the system parameters. The goal of this section is to obtain large
system approximations for the blocking probabilities under the bounded
overflow partial pooling model.\footnote{A parallel development for
  the probabilistic sharing model is possible, which we omit due to
  space constraints.}


Large system approximations have always been an
integral part the literature on queueing theory. Depending
upon the parameters of systems, these limits can take different forms
such as mean-field \cite{Mukho15}, Quality and Efficiency Driven
\cite{HW81}, or Non-degenerate Slowdown \cite{Atar12} limits.

\subsection{QED scaling regime}

For our resource sharing model with blocking, the most relevant limit
is the quality-efficiency-driven (QED) regime (a.k.a. ``square-root
staffing'' regime, Halfin-Whitt regime). While it is now commonly
known under these names, it had already been investigated by Erlang
himself\footnote{See the paper "On the rational determination of the
  number of circuits" in \cite{BHA48}.} and Jagerman as well
\cite{Jagerman74}. The traditional QED regime applies to system with a
single provider, and is defined as follows. Let $N$ be the number of
circuits with the provider and $a$ be the offered load. We say that
$f(t) \sim g(t)$ as $t \ra \infty$ if
$\lim_{t \ra \infty} \frac{f(t)}{g(t)} = 1.$
\begin{lemma}[\cite{Jagerman74}]
\label{lemma:ErlangB_sqroot}
Let $a = N + \beta \sqrt{N} + o(\sqrt{N})$. Then, 
\[
  E(N,a) \sim \frac{1}{\sqrt{N}}\dfrac{\phi(\beta)}{(1-\Phi(\beta))}
  \quad \text{ as } N \ra \infty.
\]
\end{lemma}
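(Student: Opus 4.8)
The plan is to reduce the statement to a Laplace-type integral asymptotic. The first step is to record the classical integral representation of the reciprocal of the Erlang-B formula,
\[
\frac{1}{E(N,a)} = \sum_{k=0}^N \frac{N!}{(N-k)!}\,a^{-k} = a\int_0^\infty e^{-ax}(1+x)^N\,dx,
\]
which follows by writing out $1/E(N,a)$ as $\sum_{j=0}^N (a^j/j!)/(a^N/N!)$, re-indexing, and then recognising the right-hand integral after expanding $(1+x)^N$ by the binomial theorem and using $a\int_0^\infty e^{-ax}x^j\,dx = j!/a^j$. Working with $1/E(N,a)$ is convenient because it is a sum of positive terms, so there is no cancellation to control.

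Next I would write the integrand as $e^{g(x)}$ with $g(x) := -ax + N\log(1+x)$ and rescale via $x = u/\sqrt{N}$, giving
\[
\frac{1}{E(N,a)} = \frac{a}{\sqrt{N}}\int_0^\infty e^{g(u/\sqrt{N})}\,du.
\]
Substituting $a = N + \beta\sqrt{N} + o(\sqrt{N})$ and Taylor-expanding $N\log(1 + u/\sqrt{N}) = \sqrt{N}\,u - u^2/2 + O(u^3/\sqrt{N})$, the $\sqrt{N}\,u$ terms cancel against $-au/\sqrt{N}$, so the exponent converges pointwise to $-\beta u - u^2/2$. Since $a/N \ra 1$, it then suffices to show $\int_0^\infty e^{g(u/\sqrt{N})}\,du \ra \int_0^\infty e^{-\beta u - u^2/2}\,du$; completing the square and recognising a Gaussian tail gives $\int_0^\infty e^{-\beta u - u^2/2}\,du = \sqrt{2\pi}\,e^{\beta^2/2}\,(1-\Phi(\beta)) = (1-\Phi(\beta))/\phi(\beta)$, where $\phi$ and $\Phi$ denote the standard normal density and distribution function. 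Taking reciprocals yields the claim.

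The main work, and the step I expect to be most delicate, is justifying the interchange of limit and integral uniformly over the unbounded domain and for both signs of $\beta$. For the bounded-$u$ regime, equivalently $x \le 1$, i.e. $u \le \sqrt{N}$, I would use the elementary bound $\log(1+x) \le x - x^2/2 + x^3/3$ valid for all $x \ge 0$ (proved by checking the derivative of the difference is nonnegative). This gives $g(u/\sqrt{N}) \le (-\beta + o(1))u - u^2/6$ once $u^3/(3\sqrt{N}) \le u^2/3$, producing an $N$-independent integrable dominating function $e^{(|\beta|+1)u - u^2/6}$ so that dominated convergence applies on $[0,\sqrt{N}]$. For the tail $x > 1$ I would exploit the concavity of $g$ (note $g''(x) = -N/(1+x)^2 < 0$): for large $N$ the interior maximiser $x^\ast = N/a - 1 = O(1/\sqrt{N})$ lies below $1$, so $g$ is decreasing on $[1,\infty)$ with $g'(1) = -a + N/2 \le -N/4$, whence $g(x) \le g(1) - \tfrac{N}{4}(x-1)$ and $a\int_1^\infty e^{g(x)}\,dx$ is of order $e^{N(\log 2 - 1)}$ up to polynomial factors. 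Since $\log 2 - 1 < 0$ this is negligible against the $\sqrt{N}$ growth of $1/E(N,a)$. Combining the two regions then completes the argument.
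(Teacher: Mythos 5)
Your proof is correct, and it takes a genuinely different route from the paper, which in fact offers no proof of this lemma at all: it is quoted directly from \cite{Jagerman74}. The closest in-paper analogue is the machinery of Appendix~\ref{app_thm:qed} (proof of Theorem~\ref{thm:qed}), which would yield this lemma probabilistically: write $E(N,a) = \prob{X=N}/\prob{X \le N}$ for $X \sim \mathrm{Poisson}(a)$, obtain the numerator asymptotic $\phi(\beta)/\sqrt{N}$ from the Stirling-based local estimate (Lemma~\ref{lemma:stirling}) and the denominator limit $1-\Phi(\beta)$ from the central limit theorem. You instead start from the classical integral representation $1/E(N,a) = a\int_0^\infty e^{-ax}(1+x)^N\,dx$ (your binomial-expansion verification of it is correct) and run a Laplace-method argument: rescaling by $\sqrt{N}$, the exponent converges pointwise to $-\beta u - u^2/2$, whose integral over $[0,\infty)$ is indeed $(1-\Phi(\beta))/\phi(\beta)$, and you justify the limit interchange by splitting the domain --- on $x\le 1$ via the dominating function extracted from $\log(1+x) \le x - x^2/2 + x^3/3$, and on $x>1$ via concavity of the exponent, whose maximiser $N/a - 1 = O(1/\sqrt{N})$ lies below $1$, giving an exponentially negligible tail. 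Both arguments are sound; yours is self-contained, works for both signs of $\beta$, and makes the tail control explicit, while the paper's probabilistic toolkit is shorter once Lemma~\ref{lemma:stirling} is granted and, more importantly, is the approach that extends to the two-dimensional state space of Theorem~\ref{thm:qed}, where no single-integral representation of this kind is available. One small imprecision: your tail term is of size $e^{N(\log 2 - 1) + O(\sqrt{N})}$, i.e., up to factors $e^{O(\sqrt{N})}$ rather than polynomial factors (the $-\beta\sqrt{N}$ term in $g(1)$ is not polynomial), but since $\log 2 - 1 < 0$ this does not affect the conclusion.
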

Here, $\phi(\cdot)$ and $\Phi(\cdot)$ denote, respectively, the
probability density function and the cumulative distribution function,
corresponding to the standard Gaussian distribution. Note that under
the QED regime, the margin between the offered load and the number of
servers is of the order of the square root of the number of
servers. In many settings, the QED regime is known to be the right
balance between quality (i.e., QoS) and efficiency (i.e., server
provisioning costs); see, for example, \cite{HW81,Borst04}. For the
M/M/N/N loss system, Lemma~\ref{lemma:ErlangB_sqroot} states that the
steady state blocking probability decays as $\Theta(1/\sqrt{N})$ as
$N \ra \infty.$

We define the QED scaling regime for our model with two providers as
follows. For fixed $\alpha_i > 0$ and $\beta_i \in \R$, let
 \begin{align}
   N_i & =  \al{i} N, \label{eqn:N_lsl}\\
   a_i & =  N_i + \beta_i\sqrt{N_i} + o(\sqrt{N_i}). \label{eqn:a_lsl}
 \end{align}
 Here, $N$ is the scaling parameter that is common to both
 providers. \eqref{eqn:N_lsl} states that the number of servers of
 each provider grow proportionately with the scaling
 parameter. \eqref{eqn:a_lsl} states that the offered load
 corresponding to each provider scales as per the QED (square-root
 staffing) rule.


 Before deriving the blocking probabilities for the different partial
 sharing configurations, we first look at two special cases for which
 these probabilities can be derived directly from
 Lemma~\ref{lemma:ErlangB_sqroot}.
 With no resource pooling, both the providers are decoupled, and for
 large $N$, the steady state blocking probability of Provider~$ i $
 can be computed using Lemma \ref{lemma:ErlangB_sqroot} to be
\begin{align*}
  B_i \sim \frac{1}{\sqrt{N_i}} \dfrac{\phi(\beta_i)}{(1-\Phi(\beta_i))}.
\end{align*}
The second special case is that of full resource pooling. Here, the
system acts as a single provider with $( N_1 + N_2) $ servers/circuits
and offered load of $(a_1+a_2) $. By simple calculations we can see
the system under full pooling also satisfies the square root scaling
set up. So, the steady state blocking probability for both the
providers is given as
\begin{align*}
  B_{full} \sim \frac{1}{\sqrt{N(\alpha_1 +\alpha_2)}}\frac{\phi\left( \dfrac{\beta_1 \sqrt{\alpha_1}+\beta_2 \sqrt{\alpha_2}}{\sqrt{\alpha_1+\alpha_2}}\right)}{1- \Phi\left(\dfrac{\beta_1 \sqrt{\alpha_1}+\beta_2 \sqrt{\alpha_2}}{\sqrt{\alpha_1+\alpha_2}}\right)}.
\end{align*}

Now, we present the square-root scaling set up for partial sharing
configurations. For $\g{i} \geq 0,$ we scale the sharing parameters as
\begin{align}
  k_{i} = \g{i} \sqrt{N_i} + o(\sqrt{N_i}).
  \label{eqn:sqrt_staff}
\end{align}
Note that the number of pooled servers for $P_i$ is scaled in
proportion to $\sqrt{N_i}.$ It turns out that for the system scaling
defined by \eqref{eqn:N_lsl}--\eqref{eqn:a_lsl}, this is the only
meaningful manner of scaling the partial sharing parameters. Indeed,
if $k_i = o(\sqrt{N_i}),$ then the large system limits correspond to
$P_i$ pooling no servers, and if $k_i = \omega(\sqrt{N_i}),$ then the
large system limits correspond to $P_i$ pooling all its
servers. Intuitively, this is because on the diffusion scale defined
by \eqref{eqn:N_lsl}--\eqref{eqn:a_lsl}, the number of overflow calls
as well as the number of free servers of each provider evolve (in
time) on the $\sqrt{N}$ scale.

To summarize, the QED regime we consider is defined by
\eqref{eqn:N_lsl}--\eqref{eqn:sqrt_staff}. Our main result in this
section gives the relationship between the asymptotic blocking
probability for each provider and the various parameters of the
system, namely, the sharing parameters $(\gamma_1,\gamma_2),$ the
square-root staffing margins $(\beta_1,\beta_2)$, and the relative
sizes of the two providers $(\alpha_1,\alpha_2).$

\subsection{Blocking probability asymptotics}

Having defined our QED scaling regime, we now derive large system
asymptotics of the blocking probabilities. Our results are summarized
in the following theorem.

\begin{figure}[h]
    \centering
    \includegraphics[width=0.4\textwidth]{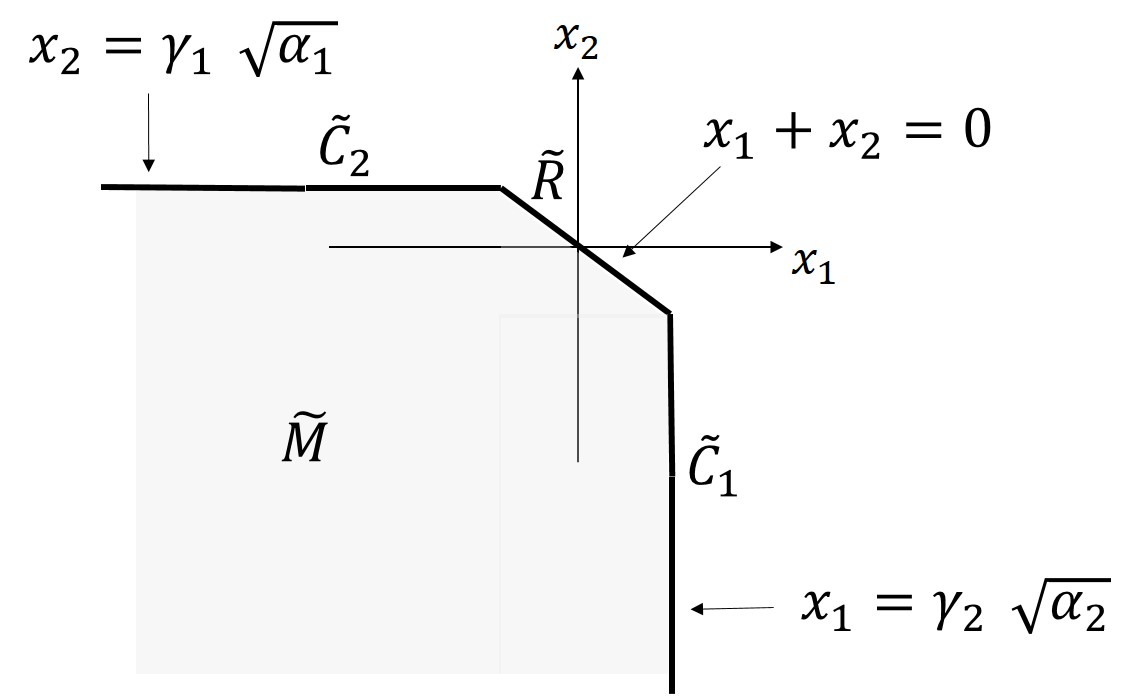}
    \caption{Geometric interpretation of blocking probability
      asymptotics under QED}
    \label{fig:qed}
\end{figure}

\begin{theorem}
  \label{thm:qed}
  Under the bounded overflow sharing model, for the scaling regime
  defined by \eqref{eqn:N_lsl}--\eqref{eqn:sqrt_staff}, the steady
  state blocking probability of Provider~$i$ for large $N$ is given as
   \begin{equation*}
     B_i(\gamma_1,\gamma_2) \sim \frac{1}{\sqrt{N}} \frac{\tilde{A}_i}{\tilde{G}},
   \end{equation*}
   where
   \begin{align*}
   		\tilde{A}_i = & \frac{\phi\left(\gamma_{-i}\sqrt{\frac{\alpha_{-i}}{\alpha_i}} - \beta_i\right)}{\sqrt{\alpha_i}} \Phi \left(-\gamma_{-i} - \beta_{-i} \right)\\
   		& + \frac{1}{\sqrt{\alpha_1 \alpha_2}} \int\limits_{-\gamma_1 \sqrt{\alpha_1}}^{\gamma_2 \sqrt{\alpha_2}} \phi\left(\frac{x}{\sqrt{\alpha_1}} - \beta_1 \right) \phi\left(\frac{-x}{\sqrt{\alpha_2}} - \beta_2\right) \,dx, \\
   		\tilde{G} = & \Phi\left(\gamma_1\sqrt{\frac{\alpha_1}{\alpha_2}} - \beta_2\right) \Phi\left(-\gamma_1  - \beta_1 \right) \\
   		& + \frac{1}{\sqrt{\alpha_1}} \int\limits_{-\gamma_1 \sqrt{\alpha_2}}^{\gamma_2 \sqrt{\alpha_1}} \phi\left(\frac{x}{\sqrt{\alpha_1}} - \beta_1 \right) \Phi\left(\frac{-x}{\sqrt{\alpha_2}} - \beta_2\right) \,dx.
   \end{align*}
 \end{theorem}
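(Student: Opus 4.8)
The plan is to start from the exact ratio formula for $B_i^{(bo)}$ in Lemma~\ref{lemma:bound_sharing} and push every sum of products $f_1(n_1)f_2(n_2)$ through the same local Gaussian approximation that underlies Lemma~\ref{lemma:ErlangB_sqroot}. The engine is the observation that $a_i^{n}/n! = e^{a_i}\,\mathrm{Pois}(n;a_i)$, so by Stirling (equivalently, the local CLT for the Poisson law) $a_i^{n}/n! \sim \frac{e^{a_i}}{\sqrt{N_i}}\,\phi(u)$ with $u := (n-a_i)/\sqrt{N_i}$, uniformly over any window $n = a_i + O(\sqrt{N_i})$ (here $a_i \sim N_i$, so the Poisson variance $a_i \sim N_i$). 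In the diffusion coordinates $u_i := (n_i - a_i)/\sqrt{N_i}$, the constraints defining $M^{(bo)}, R^{(bo)}, C_i^{(bo)}, D_i^{(bo)}$ all become lines with $O(1)$ intercepts (using $k_i = \gamma_i\sqrt{N_i}$ from \eqref{eqn:sqrt_staff} and $a_i = N_i+\beta_i\sqrt{N_i}+o(\sqrt{N_i})$ from \eqref{eqn:a_lsl}), so each sum becomes a Riemann sum converging to a Gaussian integral, with lattice spacing $1/\sqrt{N_i}$ in $u_i$. Every sum carries a common factor $e^{a_1+a_2}$, which will cancel between numerator and denominator.

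For the denominator $G = \sum_{M^{(bo)}} f_1 f_2$, I would evaluate the inner sum over $n_2$ for fixed $n_1$ and split on which bound binds: for $n_1 < N_1 - \ceil{k_1}$ the active bound is $n_2 \le N_2 + \ceil{k_1}$, and the double sum \emph{factorises} into $\big(\sum_{n_1} f_1\big)\big(\sum_{n_2} f_2\big)$; translating the truncation points into diffusion coordinates ($u_1 = -\gamma_1-\beta_1$ and $u_2 = \gamma_1\sqrt{\alpha_1/\alpha_2}-\beta_2$) and letting the lower ends run to $-\infty$ yields $e^{a_1+a_2}\,\Phi(-\gamma_1-\beta_1)\Phi(\gamma_1\sqrt{\alpha_1/\alpha_2}-\beta_2)$, the first term of $\tilde G$. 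For $n_1 \ge N_1 - \ceil{k_1}$ the diagonal $n_2 \le N_1 + N_2 - n_1$ binds; parametrising the outer sum by $x := (n_1-N_1)/\sqrt{N}$ (so $u_1 = x/\sqrt{\alpha_1}-\beta_1$ and the inner truncation sits at $u_2 = -x/\sqrt{\alpha_2}-\beta_2$) turns $\sum_{n_1} f_1(n_1)\big(\sum_{n_2} f_2\big)$ into $\sqrt{N}\int dx$, producing the integral term of $\tilde G$, with $x$ sweeping the $n_1$-window $[-\gamma_1\sqrt{\alpha_1},\,\gamma_2\sqrt{\alpha_2}]$.

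For the numerator I would treat the three line-sums separately. The sum over $R^{(bo)}$ (the diagonal, $n_1$ again parametrised by $x$) is a single-line Riemann sum giving $\frac{e^{a_1+a_2}}{\sqrt{\alpha_1\alpha_2}\,\sqrt N}\int \phi(x/\sqrt{\alpha_1}-\beta_1)\phi(-x/\sqrt{\alpha_2}-\beta_2)\,dx$ over the same window, i.e.\ the integral term of $\tilde A_i$ (identical for $i=1,2$, since $R^{(bo)}$ blocks both providers). The sums over $C_i^{(bo)}$ and $D_i^{(bo)}$ run along $n_i = N_i+\ceil{k_{-i}}$ and $n_i = N_i+\floor{k_{-i}}$; since $k_{-i} = \gamma_{-i}\sqrt{N_{-i}}\to\infty$ while a unit shift moves $u_i$ by $o(1)$, both lines collapse to $u_i = \gamma_{-i}\sqrt{\alpha_{-i}/\alpha_i}-\beta_i$, and each inner sum over $n_{-i}$ (up to $\approx N_{-i}-k_{-i}$) gives $e^{a_{-i}}\Phi(-\gamma_{-i}-\beta_{-i})$. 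The key bookkeeping point is that $f_i$ on the $C_i^{(bo)}$ line carries the fractional factor $\fr{k_{-i}}$ (from the definition of $f_i$), whereas the explicit prefactor multiplying $\sum_{D_i^{(bo)}}$ is $(1-\fr{k_{-i}})$; the two contributions therefore add to the single clean term $\frac{1}{\sqrt{\alpha_i}}\phi(\gamma_{-i}\sqrt{\alpha_{-i}/\alpha_i}-\beta_i)\Phi(-\gamma_{-i}-\beta_{-i})$, completing $\tilde A_i$. Dividing the numerator ($\sim \frac{e^{a_1+a_2}}{\sqrt N}\tilde A_i$) by $G$ ($\sim e^{a_1+a_2}\tilde G$) cancels $e^{a_1+a_2}$ and gives $B_i \sim \frac{1}{\sqrt N}\,\tilde A_i/\tilde G$.

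The main obstacle is making the replacement of sums by integrals uniform and quantitative rather than heuristic: I need a uniform local Poisson CLT valid across the entire $\Theta(\sqrt N)$-long windows (not merely near the peak $n\approx a_i$), together with Euler--Maclaurin/Riemann-sum bounds showing the $1/\sqrt{N}$ discretisation error is $o(1)$ relative to each leading term --- most delicately along $R^{(bo)}$, where one sums $\Theta(\sqrt N)$ lattice points each of size $\Theta(e^{a_1+a_2}/\sqrt N)$. One must also verify that tails (e.g.\ $n_{-i}$ near $0$ in $C_i^{(bo)}$, or small $n_1$ in $G$) are negligible so the lower limits can be sent to $-\infty$, and that the $\floor{\cdot}$-versus-$\ceil{\cdot}$ discrepancies are genuinely $o(1)$ on the diffusion scale. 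The geometric picture in Figure~\ref{fig:qed} --- $B_i$ as the ratio of a boundary (line) measure to an area measure of the feasible polygon under the product Gaussian $\phi(u_1)\phi(u_2)$ --- is what organises all of this bookkeeping.
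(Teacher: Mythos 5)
Your proposal is correct and its architecture coincides with the paper's proof: both start from Lemma~\ref{lemma:bound_sharing}, normalize by $e^{-(a_1+a_2)}$ so that every term becomes a Poisson probability, split the numerator into the diagonal contribution (over $R^{(bo)}$) plus the overflow-cap line contribution, obtain the $\phi$ factor from a local limit theorem (Lemma~\ref{lemma:stirling}), realize the diagonal term as a $1/\sqrt{N}$-spaced Riemann sum converging to the Gaussian integral, and cancel $e^{a_1+a_2}$ between numerator and denominator. The one substantive difference is how the ``area'' quantities are treated. You propose a uniform local Poisson CLT plus two-dimensional Riemann-sum/Euler--Maclaurin control over the whole feasible polygon --- precisely the step you flag as your main obstacle. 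The paper avoids this entirely: writing $e^{-(a_1+a_2)}G = \prob{(X_1,X_2)\in M^{(bo)}}$ for independent $X_i \sim \mathrm{Poisson}(a_i)$, it rescales the three constraints and invokes the ordinary distributional CLT, so no uniform local estimates, tail truncations, or discretization bounds are needed for $G$ or for any $\Phi$ factor; pointwise Stirling analysis is confined to the genuinely one-dimensional objects, namely the point probability $\prob{X_i = N_i + k_{-i}}$ and the diagonal sum. This probabilistic shortcut is what makes the paper's proof short, and is worth adopting. Similarly, your bookkeeping of the fractional parts --- the $C_i^{(bo)}$ line carrying weight $\fr{k_{-i}}$ and the $D_i^{(bo)}$ line carrying $1-\fr{k_{-i}}$, collapsing to the same Gaussian value and summing to one --- is correct but unnecessary: the paper reduces to integer $k_i$ at the outset, justified by monotonicity of $B_i^{(bo)}$ in $(k_1,k_2)$ (Theorem~\ref{thm:monotonicity}). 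Finally, a point in your favor: your window $[-\gamma_1\sqrt{\alpha_1},\,\gamma_2\sqrt{\alpha_2}]$ for the integral in $\tilde{G}$ is the correct one --- it matches the geometric characterization \eqref{eq:geometricG} and the paper's own limit computation --- so the limits printed in the statement of Theorem~\ref{thm:qed}, which transpose $\sqrt{\alpha_1}$ and $\sqrt{\alpha_2}$, appear to be a typo.
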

 
 Even though the expressions for $\tilde{A}_i$ and $G$ in the
 statement of Theorem~\ref{thm:qed} look complicated, they have a
 simple geometric interpretation. To see this, define the following
 sets in $\mathbb{R}^2.$
 \begin{align*}
   \tilde{M}&:= \left\{(x_1,x_2):\ \substack{x_1 \leq \gamma_2
              \sqrt{\alpha_2} \\x_2 \leq \gamma_1 \sqrt{\alpha_1} \\x_1 + x_2
   \leq 0} \right\}, \\
   \tilde{R}&:= \left\{(x_1,x_2):\ \substack{x_1 \leq \gamma_2
              \sqrt{\alpha_2} \\x_2 \leq \gamma_1 \sqrt{\alpha_1} \\x_1 + x_2
   = 0} \right\}, \\
   \tilde{C}_1 &:= \left\{(x_1,\gamma_1 \sqrt{\alpha_1}):\ x_1 \leq -\gamma_1 \sqrt{\alpha_1})\right\}, \\
   \tilde{C}_2 &:= \left\{(\gamma_2 \sqrt{\alpha_2},x_2):\ x_2 \leq -\gamma_2 \sqrt{\alpha_2})\right\}. \\
 \end{align*}
 These sets are depicted in Figure~\ref{fig:qed}. Note that
 $\tilde{M}$ is the shaded pentagonal region, and $\tilde{R},$
 $\tilde{C}_1,$ and $\tilde{C}_2,$ represent the diagonal, right, and
 upper boundaries of $\tilde{M},$ respectively. Now, define
 independent Gaussian random variables $Z_1$ and $Z_2,$ such that
 $Z_i$ has mean $\beta_i \sqrt{\alpha_i},$ and variance $\alpha_i.$
 Let $f_{Z_i}(\cdot)$ denote the probability density function
 corresponding to $Z_i.$ With this notation, it is not hard to show
 that
 \begin{align}
   \tilde{A}_i &= \int_{\tilde{R}} f_{Z_1}(x_1) f_{Z_2}(x_2) + \int_{\tilde{C}_i} f_{Z_1}(x_1) f_{Z_2}(x_2), \nonumber \\
   \label{eq:geometricG}
  \tilde{G} &= \iint_{\tilde{M}} f_{Z_1}(x_1) f_{Z_2}(x_2) 
 \end{align}
 This means that $\tilde{A}_i$ is the line integral of the joint
 density function of $Z_1$ and $Z_2$ over $\tilde{R} \cup
 \tilde{C}_i,$ and $\tilde{G}$ is the integral of the same joint
 density function over the region $\tilde{M}$ (in other words,
 $\tilde{G}$ is the probability that the random vector $(Z_1,Z_2)$
 takes a value in $\tilde{M}$).

Theorem~\ref{thm:qed} yields a computationally tractable approximation
for the blocking probabilities under the BO partial sharing model,
which is asymptotically accurate under the QED regime. In particular,
note that the computational complexity of the approximation is
invariant to the system size, making it particularly attractive when
the number of servers is large. In the remainder of this section, we
evaluate the accuracy of the large system approximation under
realistic network settings. The proof of Theorem~\ref{thm:qed} is
presented in Appendix~\ref{app_thm:qed}.


\subsection{Accuracy of large system approximation}


\ignore{First, we describe how the we apply the approximation to a
  given (finite) system setting, characterized by the
  tuple $$(N_1,N_2,a_1,a_2,k_1,k_2).$$ Note that the large system
  approximation is in turn characterized by the
  tuple $$(\alpha_1,\alpha_2,\beta_1,\beta_2,\gamma_1,\gamma_2).$$ We
  set $\alpha_1 = 1,$ and $\alpha_2 = \frac{N_2}{N_1}.$
}

We consider the case $N_1 = N_2 = N,$ with the standalone blocking
probabilities of Provider~1 and~2 being 0.05 and 0.01,
respectively. We vary $N$ and compute the error between the exact
blocking probability and the large system approximation. In
Figures~\ref{fig:acc_b1} and~\ref{fig:acc_b2}, we plot the minimum and
maximum of the ratio between the exact and the approximate blocking
probability over the set of all feasible partial sharing vectors
$(k_1,k_2).$ We note that the approximation becomes increasingly
accurate as the system size grows, the error being under 8\% for $N =
200.$ It is noteworthy that we have not scaled the system under the
QED regime in this example -- we are simply fixing the standalone
blocking probabilities to realistic values, and growing the number of
servers to moderate levels. Despite this, our approximation, which was
developed using the QED scaling regime, is quite accurate.

For $N = 200,$ we also plot the Pareto frontier, computed using the
exact blocking probability expression and the large system
approximation; see Figure~\ref{fig:large_system_pareto}. We note that
the two sets are quite close, suggesting that one could potentially
use the large system approximation to determine meaningful bargaining
solutions.

\begin{figure*}[t!]
    \centering
    \begin{subfigure}[t]{0.3\textwidth}
        \centering
        \includegraphics[scale=0.4]{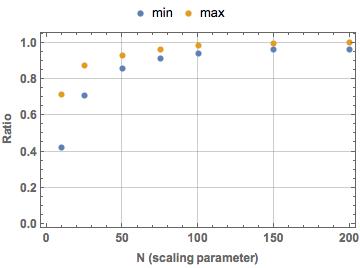}
        \caption{Ratio of exact and approximate for blocking
          probability of Provider~1}
 		 \label{fig:acc_b1}
    \end{subfigure}%
    ~ 
    \begin{subfigure}[t]{0.3\textwidth}
        \centering
        \includegraphics[scale=0.4]{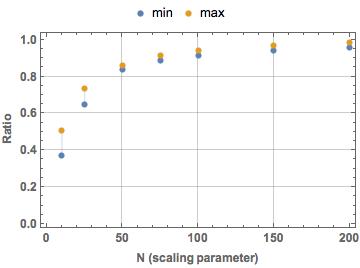}
        \caption{Ratio of exact and approximate for blocking
          probability of Provider~2}
  		\label{fig:acc_b2}
    \end{subfigure}
    ~
    \begin{subfigure}[t]{0.3\textwidth}
        \centering
        \includegraphics[scale=0.4]{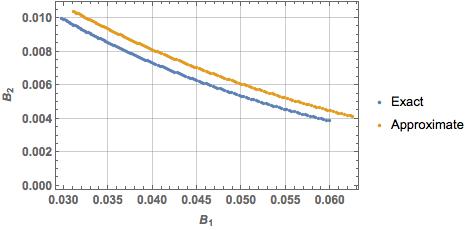}
        \caption{Pareto frontier computed using exact and approximate
          blocking probabilities}
  		\label{fig:large_system_pareto}
    \end{subfigure}
    \caption{Numerical accuracy of large system approximation}
    \label{fig:large_system}
\end{figure*}

\section{Discussion}
\label{sec:discuss}

\begin{figure}[ht]
    \centering
    \includegraphics[width=0.2\textwidth]{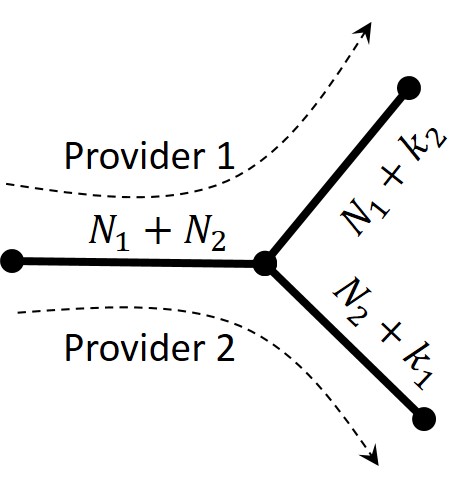}
    \caption{A circuit multiplexed network with three links (with
      capacities as marked) and two routes (correponding to the two
      providers)}
    \label{fig:cktnet}
\end{figure}

We conclude with a discussion on some analogies from circuit
multiplexed networks and possible generalizations.

\subsubsection*{Circuit Multiplexed Network Analogs}
When the $k_i$ are integers, the bounded overflow model, the state
space and the stationary distribution will be the same as the circuit
multiplexed network of with three links and two routes as shown in
Fig.~\ref{fig:cktnet}. With this representation the reduced load
approximation method of, e.g., \cite{Kelly88} may also be used to
calculate the blocking probabilities. However, it is not
computationally simpler than the exact formula of
Lemma~\ref{lemma:bound_sharing}. 

As was mentioned in Section~\ref{sec:model}, there is a superficial
similarity between the BO model and trunk reservation. Trunk
reservation has been used in circuit multiplexed networks to give
preference to direct route calls over alternate route calls. This is
done by reserving the `last $k$ circuits' for direct route calls. This
means that on a link with $N$ circuits, alternate route calls are not
admitted when the number of idle circuits is less than or equal to
$k.$ Exact models for trunk reservation are hard to analyze and
asymptotic analyses, e.g., \cite{Reiman89}, are among analytical
techniques that have been used to model trunk reservation.

\ignore{
\subsubsection*{On Other Models for Partial Pooling}
Several extensions and generalisations to the partial pooling model
are possible. A possible generalisation is to associate a probability
$x_{-i}(n)$ to admit an overflow call when $n_i=N_i+n.$ Under this
generalization, we see that the BO model corresponds to 
\begin{displaymath}
  x_{-i}(n) = \begin{cases}
    1 & \mbox{for $n_i \leq N_i+ \lfloor k_{-i} \rfloor, $} \\
    \{k_{-i}\} & \mbox{for $n_i=N_i+k_{-i},$} \\
    0 & \mbox{for $n_i > N_i+ \lceil k_{-i} \rceil.$}
  \end{cases}
\end{displaymath}
Under this generalization, we could consider a simple probabilistic
sharing model where $x_{i}(n)=p_i$ for all $n.$ For this model too our
results of Sections~\ref{sec:pareto} and \ref{sec:economics} would
hold if the monotonicity of Theorem~\ref{thm:monotonicity} is holds.
While intuition and our extensive empirical results indicate that a
Theorem~\ref{thm:monotonicity} probably also applies to probabilistic
sharing, a proof has been elusive.
}

\subsubsection*{Future Work} 
Several extensions of partial pooling to models in extant literature
are possible. Erlang-C or waiting models is an obvious immediate
model. Partial inventory pooling is another application that could be
explored. A third application would be in single server systems where
the quality of service for each customer is a decreasing function of
the number of active calls like in discriminatory processor sharing
systems and CDMA systems. Here the providers would share a part of the
servers' capacity and the sharing configuration could depend on the
service degradation as a function of the number of active calls and
the server capacity. These are currently being explored.

\bibliographystyle{unsrt}
\bibliography{spectrum-sharing} 


\appendix

\section{Proof of Theorem~\ref{thm:monotonicity} for the bounded
  overflow sharing model}
\label{app:monotonicity_bo}

This section is devoted to the proof of Theorem~\ref{thm:monotonicity}
for the bounded overflow sharing model. The following well known
properties of the Erlang-B formula will be useful.
\begin{lemma}
  \label{lemma:ErlangB-LB}
  $E(N,a)$ is a strictly decreasing function of $N.$ Moreover, $E(N,a)
  > 1 - \frac{N}{a}.$
\end{lemma}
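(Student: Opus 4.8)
The plan is to treat the two claims separately, using only elementary manipulations of the Erlang-B formula together with its interpretation as the M/M/N/N stationary blocking probability.

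For the monotonicity in $N$, the cleanest route is to work with the reciprocal. I would write
$$\frac{1}{E(N,a)} = \sum_{j=0}^N \frac{a^j/j!}{a^N/N!} = \sum_{k=0}^N \frac{N!}{(N-k)!}\,a^{-k} = \sum_{k=0}^N \prod_{i=0}^{k-1}\frac{N-i}{a},$$
(the empty product at $k=0$ being $1$), and compare this termwise against the analogous expression for $1/E(N-1,a)$. For each fixed $k$ with $1 \le k \le N-1$, the factor $\prod_{i=0}^{k-1}(N-i)$ strictly dominates $\prod_{i=0}^{k-1}(N-1-i)$, since $N-i > N-1-i$ for every $i$ and all factors are positive; moreover the $N$-sum carries an extra strictly positive term at $k=N$. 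Hence $1/E(N,a) > 1/E(N-1,a)$, which is exactly $E(N,a) < E(N-1,a)$, giving strict monotonicity.

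For the lower bound, I would invoke the classical stationary distribution of the single-provider M/M/N/N chain, $\pi(n) = \frac{a^n/n!}{\sum_{j=0}^N a^j/j!}$, and compute the mean occupancy $\bar n := \sum_{n=0}^N n\,\pi(n)$. A short index shift yields the identity $\bar n = a\,(1 - E(N,a))$, which is just the flow-balance statement that the accepted arrival rate equals the departure rate. Since $\pi$ places strictly positive mass on every state $n \in \{0,\dots,N\}$, and in particular on states with $n < N$, we have $\bar n < N$ strictly. Substituting gives $a\,(1 - E(N,a)) < N$, i.e., $E(N,a) > 1 - N/a$, as claimed.

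Neither step presents a genuine obstacle, both being classical; the only point requiring care is to keep the inequalities strict. In the monotonicity argument this is guaranteed by the extra $k=N$ term together with the strictness of $N-i > N-1-i$; in the lower-bound argument it follows from $\pi(0) > 0$, which forces $\bar n < N$ rather than merely $\bar n \le N$. Should a recursion-based proof be preferred, an alternative for the monotonicity claim is to start from the Erlang-B recursion $E(N,a) = \frac{a\,E(N-1,a)}{N + a\,E(N-1,a)}$ and compare reciprocals, but the termwise comparison above has the advantage of avoiding the need to establish the recursion first.
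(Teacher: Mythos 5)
Your proof is correct, but there is nothing in the paper to compare it against: the paper states Lemma~\ref{lemma:ErlangB-LB} as a pair of ``well known'' properties of the Erlang-B formula and never proves it, using it only as a black box in Appendix~\ref{app:monotonicity_bo} and Appendix~\ref{app:monotonicity_prob}. Your argument fills this gap correctly. The reciprocal expansion $\frac{1}{E(N,a)} = \sum_{k=0}^{N} \prod_{i=0}^{k-1}\frac{N-i}{a}$ is the right index substitution ($k = N-j$), the termwise strict comparison with $\frac{1}{E(N-1,a)}$ is valid since every factor involved is strictly positive, and the extra $k=N$ term guarantees strictness even in the degenerate case $N=1$, where the range $1 \le k \le N-1$ is empty. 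For the second claim, the identity $\bar n = a\,(1-E(N,a))$ follows from the index shift $\sum_{n=0}^{N} n\,a^n/n! = a\sum_{m=0}^{N-1} a^m/m!$ exactly as you say, and the strict inequality $\bar n < N$ does follow from $\pi(0) > 0$; this yields $E(N,a) > 1 - N/a$ uniformly in $a$, including the regime $a \le N$ where the bound is vacuous. One small observation: both of your arguments work over integer $N$, which is all the paper needs, since the lemma is only ever invoked at integer arguments such as $E(N_1+\ceil{k_2},a_1)$ and $E(N_1-\ceil{k_1},a_1)$; your self-contained elementary derivation is arguably preferable to the paper's bare citation, given how heavily the lemma is leaned on in both monotonicity proofs.
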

We also state the following lemma which will be invoked repeatedly in
the proof.
\begin{lemma}
  \label{lemma:det_B_bounds}
  Under the bounded overflow sharing model, if $k_i \in
  \{1,2,\cdots,N_i\} \ \forall i,$
  \begin{align*}
    E(N_i+{k_{-i}},a_i) < B_i^{(bo)}(k_1,k_2) <
    E(N_i-{k_{i}},a_i) \quad (i \in \{1,2\}).
  \end{align*}
\end{lemma}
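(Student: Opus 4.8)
The plan is to prove each inequality by a sample-path coupling of the bounded overflow chain against a suitably sized stand-alone Erlang-B system, and then to convert a comparison of mean occupancies into the desired comparison of blocking probabilities via Little's law. By the insensitivity noted after Lemma~\ref{lemma:bound_sharing}, I may assume the holding times are exponential, so that the joint state $(n_1,n_2)$ evolves as the reversible CTMC on $M^{(bo)}$ used in the proof of Lemma~\ref{lemma:prob_sharing}. Since $k_1,k_2$ are integers here, the admission rule for an arriving $P_i$ call reduces to: admit iff $n_i < N_i + k_{-i}$ and $n_1 + n_2 < N_1 + N_2$. Two elementary structural facts drive the argument. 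First, because $P_{-i}$ can hold at most $N_{-i}+k_i$ calls, whenever $n_i < N_i - k_i$ we have $n_1+n_2 \le (N_i-k_i-1)+(N_{-i}+k_i) < N_1+N_2$ and $n_i < N_i+k_{-i}$, so a $P_i$ call is \emph{always} admitted in that regime. Second, every $P_i$ call that BO admits satisfies $n_i < N_i + k_{-i}$, directly by the rule.

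For the upper bound I couple the BO chain with a stand-alone Erlang-B system $L^-$ having $N_i - k_i$ servers and offered load $a_i$, feeding both the same $P_i$-arrival stream and coupling service completions so as to maintain $n_i(t) \ge n^-(t)$ for all $t$, where $n^-$ is the occupancy of $L^-$. This ordering is preserved: the only way it could break is at an instant when $n_i = n^-$ and $L^-$ admits an arrival that BO blocks; but $L^-$ admits only when $n^- < N_i - k_i$, and the first structural fact guarantees BO admits there too. Since $n^-$ is the occupancy of an $M/M/(N_i-k_i)/(N_i-k_i)$ queue, Little's law gives $\Exp{n^-} = a_i(1 - E(N_i - k_i,a_i))$, and likewise $\Exp{n_i} = a_i(1 - B_i^{(bo)}(k_1,k_2))$. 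The coupling yields $\Exp{n_i} \ge \Exp{n^-}$, hence $B_i^{(bo)}(k_1,k_2) \le E(N_i - k_i,a_i)$; strictness follows because the irreducible BO chain visits states with $n_i > N_i - k_i \ge n^-$ (e.g. $(n_i,n_{-i})=(N_i-k_i+1,0)$, feasible since $k_i+k_{-i}\ge 2$) with positive stationary probability, so $n_i \ge n^-$ is strict on a set of positive measure.

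The lower bound is symmetric: I couple the BO chain with a stand-alone Erlang-B system $L^+$ of $N_i + k_{-i}$ servers and load $a_i$, now maintaining $n_i(t) \le n^+(t)$. Here it is the \emph{second} structural fact that preserves the order — every call admitted by BO has $n_i < N_i + k_{-i}$, so when $n_i = n^+$ and BO admits, $L^+$ (which admits whenever $n^+ < N_i + k_{-i}$) admits as well. The same Little's-law identity converts $\Exp{n_i} \le \Exp{n^+}$ into $B_i^{(bo)}(k_1,k_2) \ge E(N_i + k_{-i},a_i)$. Strictness holds because with positive stationary probability the shared resource is saturated, $n_1+n_2 = N_1+N_2$, while $n_i < N_i + k_{-i}$ (e.g. $(n_i,n_{-i})=(N_i,N_{-i})$): there BO blocks a $P_i$ call that $L^+$ would accept, opening a strict gap $n_i < n^+$.

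The main technical point — and the step I would write out most carefully — is the construction of the departure coupling that keeps the one-sided occupancy ordering intact, complicated by the fact that $n_i$ is not Markov on its own but only the modulated marginal of the two-dimensional chain. The standard device is to run the full BO chain and the birth–death reference on a common probability space, coupling $P_i$-arrivals exactly and matching the service completions of the lower-occupancy process to those of the higher one, while letting the surplus calls of the higher process depart independently; one then checks that neither births nor deaths can violate the maintained inequality, using the two admission facts at the boundary case of equality. Everything else — the Little's-law identities and the positive-probability arguments for strictness — is routine, and I note that the argument treats the two providers symmetrically and nowhere requires $\mu_1 = \mu_2$.
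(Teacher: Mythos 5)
Your argument is correct in its essentials, but there is nothing in the paper to compare it against: the authors state only that ``the proof is elementary and is omitted,'' so you have supplied a complete route where the paper gives none. Your two structural facts are exactly right — if $n_i < N_i - k_i$ then $n_{-i} \leq N_{-i}+k_i$ forces $n_1+n_2 < N_1+N_2$, so the BO system admits unconditionally below level $N_i-k_i$, while every BO admission requires $n_i < N_i + k_{-i}$ — and these are precisely what make the one-sided occupancy orderings survive arrivals at the boundary case of equality. Your coupling-plus-re-matching device (surplus calls of the higher-occupancy process depart independently, matched pairs share residual lives via memorylessness) is legitimate and is in fact the same machinery the paper itself deploys in Appendix~\ref{app:monotonicity_bo} for Lemma~\ref{lemma:sample_path_bo}, so the technique is consonant with the paper even though this particular lemma's proof is absent. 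The conversion through Little's law, $\Exp{n_i} = a_i\bigl(1 - B_i^{(bo)}(k_1,k_2)\bigr)$, is a good choice: a naive ``every call blocked by one system is blocked by the other'' event-inclusion argument fails in both directions here, and the occupancy comparison sidesteps that cleanly. Two points deserve a more careful write-up than your sketch gives them. First, your strictness argument for the upper bound is genuinely pointwise (any BO state with $n_i > N_i - k_i$ forces $n_i > n^-$ since $n^- \leq N_i - k_i$ always), but the lower-bound strictness is not: no BO state by itself forces $n_i < n^+$, so you must argue about the \emph{joint} coupled chain — show it is Markov (which the memoryless re-matching guarantees), that a state with $n_i < n^+$ is reachable (your saturation scenario does this), and that the joint chain's ergodicity then gives the gap positive stationary frequency. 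Second, the parenthetical ``feasible since $k_i + k_{-i} \geq 2$'' is a red herring: feasibility of $(N_i-k_i+1,0)$ only needs $N_i - k_i + 1 \leq N_i + k_{-i}$, i.e.\ $k_i + k_{-i} \geq 1$, which holds since each $k_i \geq 1$. Finally, your observation that the argument never uses $\mu_1 = \mu_2$ is correct and worth keeping: the coupling only matches provider-$i$ calls (all rate $\mu_i$) across the two systems, with the other provider's calls acting purely as modulation.
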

The proof is elementary and is omitted.

\subsection{Proof of Statements~1 and~2}

The steady state blocking probability of Provider~1 can be expressed
as follows.
\begin{align*}
B_1^{(bo)}(k_1,k_2) = \frac{m_1+ u_1\fr{k_1}+v_1\fr{k_2}}{d+u\fr{k_1}+v\fr{k_2}}
\end{align*}
Here, 
\begin{align*}
  m_1 &= \sum_{i=N_1-\floor{k_1}}^{N_1 + \floor{k_2}} \frac{a_1^{i}}{i!}\frac{a_2^{(N_1+N_2 - i)}}{(N_1+N_2 - i)!} + \frac{a_1^{(N_1+\floor{k_2})}}{(N_1+\floor{k_2})!} \sum_{j=0}^{N_2 - \floor{k_2}-1} \frac{a_2^j}{j!},\\
  u_1 &= \frac{a_1^{(N_1-\ceil{k_1})}}{(N_1-\ceil{k_1})!}\frac{a_2^{(N_2+\ceil{k_1})}}{(N_2+\ceil{k_1})!},\\
  v_1 &= \left(1 - \frac{N_1 + \ceil{k_2}}{a_1}\right) \frac{a_1^{(N_1+\ceil{k_2})}}{(N_1+\ceil{k_2})!}  \sum_{j=0}^{N_2-\ceil{k_2}} \frac{a_2^j}{j!} ,
\end{align*}
\begin{align*}
  d &= \sum_{(i,j):\ \substack{i \leq N_1 + \floor{k_2} \\ j \leq N_2 + \floor{k_1} \\ i+j \leq N_1 + N_2}} \frac{a_1^{i}}{i!} \frac{a_2^j}{j!}, \\
  u &= \frac{a_2^{(N_2+\ceil{k_1})}}{(N_2+\ceil{k_1})!} \sum_{i = 0}^{N_1 - \ceil{k_1}} \frac{a_1^{i}}{i!}, 
  v = \frac{a_1^{(N_1+\ceil{k_2})}}{(N_1+\ceil{k_2})!} \sum_{j=0}^{N_2-\ceil{k_2}} \frac{a_2^j}{j!}.
\end{align*}
Since $B_1^{(bo)}(k_1,k_2)$ is continuous in its arguments, it
suffices to show that for non-integer
$(k_1,k_2),$ $$\pd{B_1^{(bo)}(k_1,k_2)}{k_2} < 0,\quad
\pd{B_1^{(bo)}(k_1,k_2)}{k_1} > 0.$$ Accordingly, in the remainder of
the proof, we make the assumption that $\fr{k_1},\fr{k_2} \neq 0.$

We now prove that $\pd{B_1^{(bo)}(k_1,k_2)}{k_2} < 0.$ An elementary
calculation yields
\begin{align*}
\pd{B_1^{(bo)}(k_1,k_2)}{k_2} = \frac{(v_1 d -
  vm_1)+(v_1u-vu_1)\fr{k_1}}{(G')^{2}}
\end{align*}
It now suffices to show that each term in the numerator above is
negative. To see that the first term is negative, note that 
\begin{align*}
  v_1d-vm_1 &= vd\biggl( \frac{v_1}{v} - \frac{m_1}{d}\biggr)\\
  &= vd\left(\left(1 - \frac{N_1 + \ceil{k_2}}{a_1}\right) - B_1^{(bo)}(\floor{k_1},\floor{k_2}) \right)\\
  &< vd \left(E(N_1 + \ceil{k_2},a_1) - B_1^{(bo)}(\floor{k_1},\floor{k_2})\right) < 0.
\end{align*}
The first inequality above uses Lemma~\ref{lemma:ErlangB-LB}, and the
second uses Lemma~\ref{lemma:det_B_bounds}. To see that the second
term is negative, note that
\begin{align*}
v_1u-u_1v &= vu\biggl(\frac{v_1}{v} - \frac{u_1}{u}\biggr)\\
&= vu \left( \left(1 - \frac{N_1 + \ceil{k_2}}{a_1}\right)   -    E(N_1-\ceil{k_1},a_1) \right) \\
&< vu \left( E(N_1 + \ceil{k_2},a_1) - E(N_1-\ceil{k_1},a_1) \right) < 0.
\end{align*}
Both the above inequalities follow from
Lemma~\ref{lemma:ErlangB-LB}. Therefore, we conclude that
$\pd{B_1^{(bo)}(k_1,k_2)}{k_2} < 0.$

Next, we prove that $\pd{B_1^{(bo)}(k_1,k_2)}{k_1} > 0.$ An elementary
calculation yields
\begin{align*}
\pd{B_1^{(bo)}(k_1,k_2)}{k_1} &= \frac{(u_1 d - um_1)+(u_1v-uv_1)\fr{k_2}}{(G')^{2}}
\end{align*}
It suffices to argue that each of the terms in the numerator above is
positive. To see that the first term is positive, note that
\begin{align*}
u_1d-um_{1} &= ud\biggl( \frac{u_1}{u} - \frac{m_{1}}{d}\biggr)\\
&= ud\left(E(N_1-\ceil{k_1},a_1)  - B_1^{(bo)}(\floor{k_1},\floor{k_2}) \right) > 0.
\end{align*}
The inequality above follows from
Lemma~\ref{lemma:det_B_bounds}. Since we have already proved that
$(v_1u-vu_1)<0,$ it follows that the second term is also
positive. This proves that $\pd{B_1^{(bo)}(k_1,k_2)}{k_1} > 0.$

\subsection{Proof of Statement~3}
\label{app:bo_mon_3}
It suffices to prove that $B_{overall}^{(bo)}(k_1,k_2) $ is a strictly
decreasing function of $k_1.$ We first prove the monotonicity over
integer-valued $k_1$ (Lemma~\ref{lemma:sample_path_bo}) and then show
that the monotonicity also extends to real-valued $k_1.$

\begin{lemma}
  \label{lemma:sample_path_bo}
  If $\mu_1 = \mu_2 = \mu,$ then for $k_1 \in \{0,1,\cdots,N_1-1\}$
  and $k_2 \in [0,N_2],$
  $$B_{overall}^{(bo)}(k_1+1,k_2) < B_{overall}^{(bo)}(k_1,k_2).$$
\end{lemma}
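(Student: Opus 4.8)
The plan is to translate the claim about the overall blocking probability into a claim about the mean number of busy servers, and then to prove the latter by a coupling argument. For the reduction, I would use rate conservation: since $\mu_1=\mu_2=\mu$, the stationary rate at which calls depart is $\mu\,\bE_{(k_1,k_2)}\!\left[n_1+n_2\right]$, while by PASTA the stationary rate at which calls are admitted is $\lambda_1\bigl(1-B_1^{(bo)}\bigr)+\lambda_2\bigl(1-B_2^{(bo)}\bigr)=(\lambda_1+\lambda_2)\bigl(1-B_{overall}^{(bo)}(k_1,k_2)\bigr)$. Equating the two gives
\[
  B_{overall}^{(bo)}(k_1,k_2)=1-\frac{\mu\,\bE_{(k_1,k_2)}\!\left[n_1+n_2\right]}{\lambda_1+\lambda_2}.
\]
Thus the lemma is equivalent to the statement that the mean total occupancy $\bE_{(k_1,k_2)}\!\left[n_1+n_2\right]$ is strictly increasing when $k_1$ is incremented by one; intuitively, relaxing $P_1$'s overflow bound lets $P_2$ park more calls on $P_1$'s idle servers, raising the expected number of servers in use.

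Next I would build the sample-path comparison. Write $(n_1,n_2)$ for the chain with parameter $(k_1,k_2)$ and $(\tilde n_1,\tilde n_2)$ for the chain with $(k_1+1,k_2)$, construct both on one probability space by uniformization, and drive them with common Poisson arrival streams for each call class. The goal is to maintain, for all $t$, the invariant that the larger-$k_1$ system dominates in total occupancy, $n_1+n_2\le \tilde n_1+\tilde n_2$, together with $n_2\le\tilde n_2$ to record that only $P_2$'s admission region has been enlarged. It is important that these invariants permit $\tilde n_1<n_1$: the larger system may carry fewer $P_1$ calls precisely because more of its servers are occupied by $P_2$ overflow, so the two chains cannot be ordered coordinatewise.

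The crux --- and the step I expect to be the main obstacle --- is the departure coupling, together with the finite case check that the invariants survive every transition. Because the type counts are not simultaneously ordered, one cannot simply match departures class by class: an isolated departure of one of the ``extra'' $P_2$ calls in the larger system would pull its total below that of the smaller system and destroy total dominance. The remedy is a state-dependent coupling in which, when the two totals are equal but the compositions differ (so $\tilde n_2>n_2$ forces $\tilde n_1<n_1$), a departure of an unmatched $P_2$ call in the larger system is paired with a departure of a $P_1$ call in the smaller system --- which is guaranteed to exist since $n_1>\tilde n_1\ge 0$ --- so that both totals drop together. Verifying preservation then reduces to checking arrivals of each class and the several departure cases; the admission comparisons that arise (e.g.\ that whenever the smaller system admits a call the larger one does as well, or that any discrepancy created keeps the totals ordered) follow from the invariants and from the fact that the only threshold that changed, $P_2$'s overflow bound, was loosened. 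I anticipate that getting this coupling to respect total dominance in every configuration --- rather than the monotonicity itself, which is intuitively clear --- is where the real work lies.

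Finally, total dominance $n_1+n_2\le\tilde n_1+\tilde n_2$ holding pathwise yields $\bE_{(k_1,k_2)}\!\left[n_1+n_2\right]\le\bE_{(k_1+1,k_2)}\!\left[n_1+n_2\right]$ in stationarity, and hence ``$\le$'' in the lemma. For the strict inequality I would exhibit a positive-probability event on which the two totals genuinely separate: starting both chains from empty and feeding a run of $P_2$ arrivals drives them to the common state $n_1=\tilde n_1=0$, $n_2=\tilde n_2=N_2+k_1$; the next $P_2$ arrival is blocked by the smaller system but admitted by the larger one, since $n_1+n_2=N_2+k_1<N_1+N_2$ exactly because $k_1\le N_1-1$. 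This opens a gap $\tilde n_1+\tilde n_2=n_1+n_2+1$ that persists over an interval of positive length, forcing $\bE_{(k_1+1,k_2)}\!\left[n_1+n_2\right]>\bE_{(k_1,k_2)}\!\left[n_1+n_2\right]$. Note that this is the only place the hypothesis $k_1\le N_1-1$ is used: it ensures that the additional admission band $n_2=N_2+k_1$ is not degenerate (the joint capacity constraint is not already binding there).
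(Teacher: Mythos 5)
Your proposal follows essentially the same route as the paper: a pathwise coupling of the $(k_1,k_2)$ and $(k_1+1,k_2)$ systems maintaining total-occupancy dominance (explicitly allowing the enlarged system to carry fewer $P_1$ calls), with the same key device of pairing the departure of an ``extra'' $P_2$ call in the enlarged system with that of an ``extra'' $P_1$ call in the original system (the paper's Type~1/Type~4 matching underlying its relations R1--R3), and the same style of strictness argument via a positive-probability separation event. The only difference is cosmetic: you convert pathwise dominance into the blocking-probability statement through the flow-conservation identity $B_{ov}=1-\mu\,\bE\left[n_1+n_2\right]/(\lambda_1+\lambda_2)$, whereas the paper compares cumulative departure (throughput) processes across the two coupled systems directly; both reductions use $\mu_1=\mu_2$ in the same way.
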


\proof[Proof of Lemma \ref{lemma:sample_path_bo}]
  The proof is based on a sample path approach. We assume that call
  holding times for both providers are exponentially distributed with
  parameter $\mu$ (we are free to make this assumption given that the
  blocking probabilities are insensitive to the call holding time
  distributions).

  We consider two systems -- an `old' system (O) with sharing
  configuration $(k_1,k_2)$ and a `new' system (N) with perturbed
  sharing configuration $(k_1+1,k_2).$ In what follows, we will couple
  the arrival processes and service durations across these systems in
  such a way that $N$ system will serve at-least as many calls as the
  $O$ system on any sample path.

  At time zero, we start with both the N and the O system being empty.
  Let $n_i(t)$ denote the number of Provider~$i$ calls in the O system
  at time $t,$ and $\tilde{n}_i(t)$ denote the the number of
  Provider~$i$ calls in the N system at time $t.$ The two systems see
  exactly the same call arrival process.  Moreover, calls that are
  admitted into both systems have the same holding time (this ensures
  that such calls complete at the same time in both systems). Calls
  that are admitted into one of the systems but not into the other are
  categorized as follows.
  \begin{itemize}
  \item Type~1: A Provider~2 call that is admitted into the N system
    but not the O system.
  \item Type~2: A Provider~1 call that is admitted into the N system
    but not the O system.
  \item Type~3: A Provider~2 call that is admitted into the O system
    but not the N system.
  \item Type~4: A Provider~1 call that is admitted into the O system
    but not the N system.
  \end{itemize}
  Note that
  \begin{itemize}
  \item $n_1(t) - \tilde{n}_1(t) = $ $\#$ of Type~4 calls (in O) at
    time $t$ $-$ $\#$ of Type~2 calls (in N) at time $t$
  \item $\tilde{n}_2(t) - n_2(t) = $ $\#$ of Type~1 calls (in N) at
    time $t$ $-$ $\#$ of Type~3 calls (in O) at time $t$
  \end{itemize}

  We will now couple the service durations of Type~$j$ calls in such a
  way that at all times, the states of the N and O systems satisfy one
  of the following three relations:
  \begin{itemize}
  \item {\bf R1}: $n_i(t) = \tilde{n}_i(t)$ for $i = 1,2$
  \item {\bf R2}: $n_1(t) = \tilde{n}_1(t),$ and
    $n_2(t) = \tilde{n}_2(t) - 1$
  \item {\bf R3}: $n_1(t) = \tilde{n}_1(t) + 1,$ and
    $n_2(t) = \tilde{n}_2(t) - 1$
  \end{itemize}
  Note that under all three relations,
  \begin{equation}
    \label{eq:sample_path_bo1}
    \tilde{n}_1(t) + \tilde{n}_2(t) \geq n_1(t) + n_2(t), 
  \end{equation}
  with equality under R1 and R3, and a strict inequality under
  R2. Moreover, 
  \begin{equation}
    \label{eq:sample_path_bo2}
    n_1(t) \geq \tilde{n}_1(t), \quad \tilde{n}_2(t) \geq n_2(t).
  \end{equation}
  The states satisfy R1 at time~0. Also, note that calls that get
  admitted into both systems do not alter the relation between the
  states, at times of arrival or departure. So we only need to focus
  on arrival/departure epochs of Type~j calls, $j \in \{1,2,3,4\}.$
  Our argument will proceed inductively in time.

  \noindent{\bf{Type~1 arrival}:} Suppose that a Type~1 arrival (into
  N) occurs at time $s.$ Since \eqref{eq:sample_path_bo1} holds at
  time $s^-,$ we must have
  \begin{equation}
    \label{eq:sample_path_bo3}
    n_1(s^-) + n_2(s^-) \leq \tilde{n}_1(s^-) + \tilde{n}_2(s^-) < N_1
    + N_2.
  \end{equation}
  This implies that at $s^-,$ the states satisfy R1 with
  $n_2(s^-) = \tilde{n}_2(s^-) = N_2 + k_1.$ In this case, the arrival
  would result in $\tilde{n}_2(s) = N_2 + k_1 + 1,$ implying the
  states would satisfy R2 at time $s.$ The holding time of the newly
  arrived call in N is taken to be an independent $\mathrm{Exp}(\mu)$
  random variable.

  \noindent{\bf{Type~2 arrival}:} Suppose that a Type~2 arrival (into
  N) occurs at time $s.$ This implies \eqref{eq:sample_path_bo3} as
  before.
  It then follows that
  $n_1(s^-) \in \{N_1+\floor{k_2}, N_1+\ceil{k_2}\},$ and
  $\tilde{n}_1(s^-) < n_1(s^-).$ This implies that the states satisfy
  R3 at $s^-,$ and thus satisfy R2 at time $s.$ In other words, at
  time $s,$ we have
  \begin{itemize}
  \item $\#$ of Type~4 calls (in O) $=$ $\#$ of Type~2
    calls (in N)
  \item $\#$ of Type~1 calls (in N) $=$ $\#$ of Type~3
    calls (in O) + 1
  \end{itemize}
  Thus, each Type~4 call can be mapped to a unique Type~2 call, and
  each Type~3 call can be mapped to a unique Type 1 call (one Type~1
  call remains `unmapped'). Given the memorylessness of the call
  holding times, we now re-sample the residual lives of all calls
  using independent $\mathrm{Exp}(\mu)$ random variables such that
  mapped call pairs have the same residual life. This ensures that
  mapped calls (these belong to different systems) depart at the same
  time.

  \noindent{\bf{Type~3 arrival}:} Suppose that a Type-3 arrival (into
  O) occurs at time $s.$ This implies that
  $$n_1(s^-) +n_2(s^-) < \tilde{n}_1(s^-) + \tilde{n}_2(s^-) = N_1 + N_2,$$
  which implies that the states satisfy R2 at $s^-$ and R1 at $s.$
  Thus, at time $s,$ we have
  \begin{itemize}
  \item $\#$ of Type~4 calls (in O) $=$ $\#$ of Type~2
    calls (in N)
  \item $\#$ of Type~1 calls (in N) $=$ $\#$ of Type~3 calls (in O)
  \end{itemize}
  At this point, we map each Type~2 call to a unique Type~4 call, and
  each Type~3 call to a unique Type~1 call. As before, we re-sample
  the residual service times of all calls such that mapped calls have
  the same residual life.

  \noindent{\bf{Type~4 arrival}:} Suppose that a Type-4 arrival (into
  O) occurs at time $s.$ This implies that
  $$n_1(s^-) +n_2(s^-) < \tilde{n}_1(s^-) + \tilde{n}_2(s^-) = N_1 + N_2,$$
  which implies that the states satisfy R2 at $s^-$ and R3 at $s.$
  Thus, at time $s,$ we have
  \begin{itemize}
  \item $\#$ of Type~4 calls (in O) $=$ $\#$ of Type~2
    calls (in N) + 1
  \item $\#$ of Type~1 calls (in N) $=$ $\#$ of Type~3
    calls (in O) + 1
  \end{itemize}
  At this point, we map each Type~2 call to a unique Type~4 call, and
  each Type~3 call to a unique Type~1 call. Finally, we map the
  remaining (yet unmapped) Type~4 call to the remaining (yet unmapped)
  Type~1 call. As before, re-sample the residual life of all calls
  such that mapped calls have the same residual life.

  \noindent{\bf{Departures}:} Based on the above coupling rules for
  residual lives of calls across $O$ and $N,$ we see that four types
  of departures events are possible.

  \begin{itemize}
  \item Simultaneous departure of Type~2 call in N and Type~4 call in
    O: Clearly, the relationship between the states in O and N remains
    unaltered.
  \item Simultaneous departure of Type~1 call in N and Type~3 call in
    O: Clearly, the relationship between the states in O and N remains
    unaltered.
  \item Departure of an unmapped Type~1 call out of N: This can only
    happen if the states satisfy R2 just prior to the departure. The
    states then satisfy R1 post-departure.
  \item Simultaneous departure of a Type~1 call out of N and a Type~4
    call out of O: This can only happen if the states satisfy R3 just
    prior to the departure. Clearly, the states will satisfy R1
    post-departure.
  \end{itemize}

  This completes the argument that the states of the systems O and N
  remain related via R1, R2, or R3 at all times.

  Note that any Type~3/4 departure out of the O system is always
  synchronized with a Type 1/2 departure out of the N system. This
  means that at all times, the cumulative departures out of the N
  system exceed the cumulative departures out of the $O$
  system. Moreover, since there is a positive rate associated with
  `solo' Type 1 departures out the N system, the statement of the
  lemma follows.
\endproof

We are now ready to complete the proof of Statement~3. Given
Lemma~\ref{lemma:sample_path_bo}, it suffices to show that for
$k_1 \in \{0,1,\cdots,N_1-1\}$ and $k_2 \in [0,N_2],$
$B_{overall}^{(bo)}(k,k_2)$ is strictly decreasing over
$k \in [k_1,k_1+1].$ From our gradient calculations, it is not hard to
see that $k \in [k_1,k_1+1],$
\begin{align*}
  \pd{B_{overall}^{(bo)}(k,k_2)}{k} &= \frac{N(k_2)}{(G'(k,k_2))^{2}}.
\end{align*}
Note that the numerator does not depend on $k.$ It thus suffices to
prove that $N(k_2) < 0.$ 

From Lemma~\ref{lemma:sample_path_bo}, invoking the mean value
theorem, it follows that there exists $k' \in (k_1,k_1+1)$ such that
$\frac{N(k_2)}{(G'(k',k_2))^{2}} < 0,$ which implies that
$N(k_2) < 0.$ This completes the proof of Statement~3.

\section{Proof of Lemma~\ref{lemma:pareto_structure} }
\label{app:pareto_struc}

We define a sharing configuration $k = (k_1,k_2)$ to be
\emph{efficient} if there does not exist a sharing configuration $k'$
such that $B_i(k') \leq B_i(k)$ for all $i,$ and $B_i(k') < B_i(k)$
for some $i.$ Let $\mathcal{P}$ denote the set of efficient
configurations. Under this notation, the set $\hat{\mathcal{P}}$ of
Pareto-efficient configurations is given by
\begin{equation*}
  \hat{\mathcal{P}} = \mathcal{P} \cap \mathcal{Q},
\end{equation*}
where $\mathcal{Q}$ denotes the set of QoS-stable configurations. 

The first step of the proof is to show that
$$\mathcal{P} = \mathcal{X} \setminus \mathcal{X}^o.$$ Note that
Lemma~\ref{lemma:pareto_bo_improvement} implies that there are no
efficient sharing configurations in $\mathcal{X}^o.$ Thus, it only
remains to show that any
$\hat{k} \in \mathcal{X} \setminus \mathcal{X}^o$ is efficient. For
the purpose of obtaining a contradiction, suppose that
$\hat{k} \in \mathcal{X} \setminus \mathcal{X}^o$ is not
efficient. Then there exists $\bar{k} \in \mathcal{X}$ such that
$B_i(\bar{k}) \leq B_i(\hat{k})$ for all~$i.$ From the monotonicity of
the blocking probabilities along $\mathcal{X} \setminus \mathcal{X}^o$
(Statements~1 and~2 of Theorem~\ref{thm:monotonicity}), it follows
that $\bar{k} \notin \mathcal{X} \setminus \mathcal{X}^o,$ which
implies that $\bar{k} \in \mathcal{X}^o.$ Now, define for
$i\in \{1,2\},$
$$g_i(k_1,k_2) := \max(0, B_i(\bar{k}_1,\bar{k}_2) - B_i(k_1,k_2)).$$
Consider the following optimization:
$$\max_{k \in \mathcal{X}} g_1(k_1,k_2) g_2(k_1,k_2).$$ Since this is
the maximization of a continuous function over a compact domain, a
maximizer $k^* \in \mathcal{X}$ exists. Moreover, the optimum value is
strictly positive (follows from
Lemma~\ref{lemma:pareto_bo_improvement}),
$k^* \in \mathcal{X} \setminus \mathcal{X}^o,$ and
$B_i(k^*) < B_i(\bar{k})$ for all $i.$ Thus, we have
$\hat{k}, k^* \in \mathcal{X} \setminus \mathcal{X}^o$ such that
$B_i(k^*) < B_i(\hat{k})$ for all $i.$ However, this contradicts the
strict monotonicity of the blocking probabilities over
$\mathcal{X} \setminus \mathcal{X}^o.$ Thus, we conclude that
$\hat{k}$ is efficient.

Having proved that $\mathcal{P} = \mathcal{X} \setminus
\mathcal{X}^o,$ characterizing $\hat{\mathcal{P}}$ boils down to
identifying the subset of QoS-stable sharing configurations
in~$\mathcal{P}.$ For this, consider the three cases in the statement
of the lemma separately. We give the proof for Case~1 here; the proofs
for Cases~2 and~3 are on similar lines and are omitted.

\noindent{\bf Case 1:}  $E(N_1 + N_2,a_1 + a_2) < E(N_i,a_i)$ $\forall\ i$  \\
We have $$B_1(1,1) < E(N_1,a_1) < B_1(1,0).$$ Thus, there a unique
$\hat{k}_2 \in (0,1)$ such that $B_1(1,\hat{k}_2) = E(N_1,a_1)$. It is
easy to see that the set of sharing configurations in~$\mathcal{P}$
where Provider~1 strictly improves upon its standalone blocking
probability is given by
\begin{equation*} \label{eq:pareto_reg_1}
\{(y,1):\ y \in[0,1]\} \cup \{(1,y):\ y \in(\hat{k}_2,1]\}.
\end{equation*} 
Similarly, $$B_2,(1,1) < E(N_2,a_2) < B_2(0,1).$$ Thus, there is a
unique $\hat{k}_1 \in (0,1)$ satisfying $B_2(\hat{k}_1,1) =
E(N_2,a_2)$. As before, the set of sharing configurations
in~$\mathcal{P}$ where Provider~2 strictly improves upon its
standalone blocking probability is given by
\begin{equation*} \label{eq:pareto_reg_2}
\{(y,1):\ y \in(\hat{k}_1,1]\} \cup \{(1,y):\ y \in[0,1]\}.  
\end{equation*}
Thus, the subset of QoS-stable sharing configurations in~$\mathcal{P}$
is the intersection of the above sets.


\section{Proof of Theorem~\ref{thm:qed}}
\label{app_thm:qed}

 We now give the proof of Theorem~\ref{thm:qed}. The key tools in the
 proof are the central limit theorem and Stirling's
 approximation. Since the techniques are somewhat standard, and given
 the space constraints, the proof presentation is terse. The following
 lemma will be used in the proof.
 \begin{lemma}
   \label{lemma:stirling}
   For $X \sim Poisson(a)$ and $\beta \in \R,$ if $N,a \ra \infty$
   such that $\lim_{N \to \infty} (1-\frac{a}{N})\sqrt{N} = \beta$,
   then $\lim_{N\to\infty}\sqrt{N}\prob{X=N} = \phi(\beta).$
   \label{lem:stirling}
 \end{lemma}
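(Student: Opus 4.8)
The plan is to apply Stirling's approximation directly to the Poisson point mass. Writing out the pmf, we have
$$\prob{X=N} = \frac{a^N e^{-a}}{N!}.$$
Substituting the Stirling approximation $N! \sim \sqrt{2\pi N}\, N^N e^{-N}$ and multiplying through by $\sqrt{N}$ gives
$$\sqrt{N}\,\prob{X=N} \sim \frac{1}{\sqrt{2\pi}} \left(\frac{a}{N}\right)^{N} e^{N-a}.$$
Since $\phi(\beta) = \frac{1}{\sqrt{2\pi}} e^{-\beta^2/2}$, it therefore suffices to show that $\left(\frac{a}{N}\right)^{N} e^{N-a} \ra e^{-\beta^2/2}$ as $N \ra \infty$; the convergence of this main factor, combined with the $\sim$ above, then yields the claimed limit.

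To establish this, I would pass to logarithms. Setting $\delta := 1 - \frac{a}{N}$, the hypothesis reads $\delta \sqrt{N} \ra \beta$, so in particular $\delta \ra 0$. Writing $N - a = N\delta$ and expanding $\ln(1-\delta) = -\delta - \frac{\delta^2}{2} + O(\delta^3)$, the logarithm of the quantity of interest becomes
$$N\ln\!\left(\frac{a}{N}\right) + (N-a) = N\ln(1-\delta) + N\delta = -\frac{N\delta^2}{2} + N\,O(\delta^3).$$
The coupling hypothesis yields $N\delta^2 = (\delta\sqrt{N})^2 \ra \beta^2$, while $N\,O(\delta^3) = O\!\left((\delta\sqrt{N})^2\, \delta\right) \ra 0$ because $\delta \ra 0$. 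Hence the logarithm converges to $-\beta^2/2$, giving $\left(\frac{a}{N}\right)^{N} e^{N-a} \ra e^{-\beta^2/2}$ and completing the argument.

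The only delicate point I anticipate is making the remainder control rigorous rather than formal. I would handle this by noting that once $N$ is large enough to force $|\delta| \leq \frac{1}{2}$, the cubic-and-higher tail of the logarithm series is dominated by a convergent geometric sum, so that
$$\left| N\ln(1-\delta) + N\delta + \tfrac{N\delta^2}{2} \right| \;\leq\; C\,N|\delta|^3 \;=\; C\,(\delta\sqrt{N})^2\,|\delta| \;\ra\; 0.$$
This uniform bound is what upgrades the Taylor expansion into a genuine limit; the remaining manipulations are the routine bookkeeping of Stirling's formula and thus present no real obstacle.
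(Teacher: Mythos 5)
Your proof is correct and follows exactly the route the paper indicates: the paper gives no details, remarking only that the lemma ``follows from an application of Stirling's approximation'' and deferring to \cite{HW81}, and your argument is precisely that computation, with the substitution $\delta = 1 - a/N$, the expansion $N\ln(1-\delta) + N\delta = -\tfrac{1}{2}N\delta^2 + O(N|\delta|^3)$, and the observation that $N\delta^2 \to \beta^2$ while $N|\delta|^3 \to 0$. The uniform tail bound you add to control the remainder is the right way to make the cited Stirling argument rigorous, so your write-up simply supplies the details the paper leaves to the reference.
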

 Lemma~\ref{lemma:stirling} follows from an application of Stirling's
 approximation; see \cite{HW81} for a proof.
 
 \proof[Proof of Theorem~\ref{thm:qed}]

  First, we note that for the large system asymptotics, we can ignore
  the fractional part of $k_i,$ and pretend that $k_i$ are
  integers. Indeed, given our monotonicity results for the blocking
  probabilities, it is sufficient to prove the statement of the
  theorem for integral $k_i$ that satisfy \eqref{eqn:sqrt_staff}.

  Our starting point is the expression for the blocking probability in
  Lemma \ref{lemma:bound_sharing}, which we shall rewrite as
  \begin{align*}
    B_i(k_1,k_2) &= \frac{A_i}{G} \\
                 &= \frac{e^{-(a_1+a_2)}A_i}{e^{-(a_1+a_2)}G},
  \end{align*}
  where $G$ is as defined in Lemma \ref{lemma:bound_sharing} and $A_i$
  is the numerator in the expression for $B_i$ in Lemma
  \ref{lemma:bound_sharing}.
  We shall show that
  \begin{equation*}
    \label{eq:qed_pf1}
    \lim_{N\to\infty}e^{-(a_1+a_2)}G = \tilde{G},
  \end{equation*}
  and

  \begin{equation*}
    \label{eq:qed_pf2}
    e^{-(a_1+a_2)}A_i \sim \frac{1}{\sqrt{N}} \tilde{A}_i.
  \end{equation*}
  For this, we define two independent Poisson random variables:
  \begin{align*}
    X_i \sim Poisson(a_i)~,~ i = 1,2
  \end{align*}
  with $a_i$ as in \eqref{eqn:a_lsl}.

  First, we shall prove the limit for $G$. From Lemma
  \ref{lemma:bound_sharing},
  \begin{align}
    G &= 
    \sum_{(n_1,n_2) \in M}
    e^{-(a_1+a_2)} \frac{a_1^{n_1}}{n_1!} \frac{a_2^{n_2}}{n_2!} = \prob{(X_1,X_2) \in M} \nonumber \\
    \label{eq:eq:qed_pf3}
      &= \prob{X_1 \leq N_1 + k_2, X_2 \leq N_2 + k_1,\ X_1 + X_2 \leq N_1 + N_2}.    
  \end{align}
  Now, given the QED scaling under consideration, 
  \begin{align*}
    X_1 \leq N_1 + k_2 &\iff X_1 \leq a_1 - \beta_1\sqrt{a_1} +
                         \gamma_2 \sqrt{\frac{\alpha_2}{\alpha_1}} \sqrt{a_1} +
                         o(\sqrt{a_1}) \\
                       &\iff \frac{X_1 - a_1}{\sqrt{a_1}} \leq -\beta_1 + \gamma_2 \sqrt{\frac{\alpha_2}{\alpha_1}} + \frac{o(\sqrt{a_1})}{\sqrt{a_1}}. 
  \end{align*}
  Similarly, 
  \begin{align*}
    X_2 \leq N_2 + k_1 &\iff \frac{X_2 - a_2}{\sqrt{a_2}} \leq -\beta_2 + \gamma_1 \sqrt{\frac{\alpha_1}{\alpha_2}} + \frac{o(\sqrt{a_2})}{\sqrt{a_2}}.
  \end{align*}
  Finally,
  \begin{align*}
    &X_1 + X_2 \leq N_1 + N_2  \iff \\
    & \quad \sqrt{\frac{a_1}{a_1 + a_2}} \frac{X_1 - a_1}{\sqrt{a_1}}  + \sqrt{\frac{a_2}{a_1 + a_2}} \frac{X_2 - a_2}{\sqrt{a_2}} \leq
      - \frac{\beta_1 \sqrt{a_1} + \beta_2 \sqrt{a_2}}{\sqrt{a_1 + a_2}}
  \end{align*}
  Now, taking limits as $N \ra \infty$ in \eqref{eq:eq:qed_pf3} and
  noting that $\frac{X_1 - a_1}{\sqrt{a_1}}$ and
  $\frac{X_2 - a_2}{\sqrt{a_2}}$ converge in distribution to
  independent standard Gaussians by the central limit theorem, we get
  \begin{align*}
    \lim_{N \ra \infty} G &= P\biggl(Z_1 \leq -\beta_1 + \gamma_2 \sqrt{\frac{\alpha_2}{\alpha_1}},\ Z_2 \leq -\beta_2 + \gamma_1 \sqrt{\frac{\alpha_1}{\alpha_2}}, \\
                          & \quad \sqrt{\alpha_1} Z_1 + \sqrt{\alpha_2} Z_2 \leq -\beta_1 \sqrt{\alpha_1} - \beta_2 \sqrt{\alpha_2} \biggr),
  \end{align*}
  where $Z_1$ and $Z_2$ are iid standard Gaussian random variables. It
  now easy to see that the above equation is equivalent to
  \eqref{eq:geometricG}.

  We now prove the limit of $A_i.$ In terms of $X_1$ and $X_2,$ it is
  easy to see that
  \begin{align*}
    A_i &= \prob{X_i = N_i + k_{-i}}\prob{X_{-i} < N_{-i} -k_{-i}} \\
    &\quad + \prob{X_1 +X_2 = N_1 +N_2, N_1-k_1 \le X_1 \le N_1+k_2} \\
    &=: T_1 + T_2.
  \end{align*}
  We first evaluate the limit of $T_1.$ The limit of the second factor
  of~$T_1$ follows from the central limit theorem. The asymptotic
  behavior of the first factor in $T_1$ can be deduced by invoking
  Lemma \ref{lem:stirling}, whereby we get
\begin{align*}
  \lim_{N\to\infty}\sqrt{N}T_1 = \frac{\phi\left(-\bt{i}+\g{-i}\sqrt{\frac{\alpha_{-i}}{\alpha_{i}}}\right)}{\sqrt{\alpha_i}} \Phi\left(-\bt{-i}-\g{-i}\right).
\end{align*}

Finally, we tackle $T_2.$
\begin{align*}
T_2 =& \sum\limits_{n = -k_1}^{k_2}  e^{-(a_1+a_2)} \frac{a_1^{N_1+n}}{(N_1+n)!} \frac{a_2^{N_2-n}}{(N_2-n)!}\\
=& \sum\limits_{n = -\gamma_1 \sqrt{\alpha_1} \sqrt{N}}^{\gamma_2 \sqrt{\alpha_2} \sqrt{N}}  e^{-(a_1+a_2)} \frac{a_1^{N_1+n}}{(N_1+n)!} \frac{a_2^{N_2-n}}{(N_2-n)!}.
\end{align*}
Again, using Stirling's approximation and rescaling the space by $1/\sqrt{N}$,
 \begin{align*}
	 T_2 &\sim \frac{1}{2\pi\sqrt{N_1N_2}}\sum_{\overset{x = \gamma_2 \sqrt{\alpha_2}}{ x\leftarrow x + 1/\sqrt{N}}}^{\gamma_1 \sqrt{\alpha_1}}
         e^{-\frac{1}{2\alpha_1} (x - \sqrt{\alpha_1}\beta_1)^2} e^{-\frac{1}{2\alpha_2} (-x -  \sqrt{\alpha_2}\beta_2)^2}.
	 \end{align*}
The sum on the right-hand side is a Riemann sum which  when rescaled with $1/\sqrt{N}$ converges to the integral 
\begin{align*}
\int\limits_{ -\gamma_1 \sqrt{\alpha_1}}^{ \gamma_2 \sqrt{\alpha_2}}  \phi\left(\frac{x}{\sqrt{\alpha_1}}-\beta_1\right)
 			\phi\left(-\frac{x}{\sqrt{\alpha_2}}-\beta_2\right) dx,
\end{align*}
and the claimed result follows.
\endproof

\section{Proof of Theorem~\ref{thm:monotonicity} for the probabilistic
  sharing model}
\label{app:monotonicity_prob}

This section is devoted to the proof of Theorem~\ref{thm:monotonicity}
for the probabilistic sharing model.

\subsection{Proof of Statements~1 and~2}

We begin by rewriting the expression for the steady state blocking
probability of Provider~1 (given in Lemma~\ref{lemma:prob_sharing}) as
follows.
\begin{align*}
  B_1^{(p)}(x_1,x_2) = \frac{m_1 +\sum_{i=1}^{N_1}
    u_{1,i}\ x_1^{i}+\sum_{j=1}^{N_2}
    v_{1,j}\ x_2^{j}}{d+\sum_{i=1}^{N_1}
    u_{i}\ x_1^{i}+\sum_{j=1}^{N_2} v_{j}\ x_2^{j} }
\end{align*}
Here, 
\begin{align*}
m_1&= \frac{a_1^{N_1}}{N_1!}\bigl( \sum_{j=0}^{N_2} \frac{a_2^j}{j!}\bigr), \quad u_{1,i} = \frac{a_1^{(N_1-i)}}{(N_1-i)!}\frac{a_2^{(N_2+i)}}{(N_2+i)!},\\
v_{1,j} &=  \left(1 - \frac{N_1
      +j}{a_1} \right)\frac{a_1^{(N_1+j)}}{(N_1+j)!} \bigl( \sum_{i=0}^{N_2-j} \frac{a_2^i}{i!}\bigr), \\
d &= \bigl( \sum_{i=0}^{N_1} \frac{a_1^i}{i!}\bigr) \bigl( \sum_{j=0}^{N_2} \frac{a_2^j}{j!}\bigr),\\
u_{i} &= \frac{a_2^{(N_2+i)}}{(N_2+i)!} \bigl(\sum_{j=0}^{N_1-i} \frac{a_1^j}{j!}\bigr),\quad 
v_{j} = \frac{a_1^{(N_1+j)}}{(N_1+j)!} \bigl( \sum_{i=0}^{N_2-j} \frac{a_2^i}{i!}\bigr).
\end{align*}

To prove that $B_1^{(p)}(x_1,x_2) $ is a strictly decreasing function
of $x_2,$ it suffices to show that $\pd{B_1^{(p)}(x_1,x_2)}{x_2} < 0.$
An elementary calculation shows that
\begin{align*}
 \pd{B_1^{(p)}(x_1,x_2)}{x_2} &= \frac{1}{G^2} \biggl( \sum_{j=1}^{N_2} jx_2^{j-1} [v_{1,j}d -v_{j} m_1] \\
 &\quad + \sum_{j=1}^{N_2} \sum_{i=1}^{N_1} jx_2^{j-1} x_1^i [v_{1,j} u_{i} - v_{j} u_{1,i}] \\
 &\quad + \sum_{j=1}^{N_2} \sum_{i=1}^{N_2} jx_2^{i+j-1} [v_{1,j} v_{i} - v_{j} v_{1,i}] \biggr).
\end{align*}
We now argue that each of the three terms in the above expression is
negative. To see that the first term is negative, note that for any~$j,$
\begin{align*}
  \frac{v_{1,j}}{v_{j}}-\frac{m_1}{d} &= \left(1 - \frac{N_1
      +j}{a_1} \right)- E(N_1,a_1) \\
  & < E(N_1 + j,a_1) - E(N_1,a_1) < 0.
\end{align*}
The inequalities above follow from Lemma~\ref{lemma:ErlangB-LB}. To 
prove that the second term is negative, note that for any $(i,j),$
\begin{align*}
  \frac{v_{1,j}}{v_{j}}-\frac{u_{1,i}}{u_{i}} &= \left(1 -
    \frac{N_1 +j}{a_1} \right)- E(N_1-i,a_1) \\
  &<E(N_1 + j,a_1) - E(N_1-i,a_1) < 0.
\end{align*}
Again, the inequalities above follow from
Lemma~\ref{lemma:ErlangB-LB}. To prove that the third term is
negative, it suffices to show that for $i \neq j,$
\begin{align*}
  &j[v_{1,j} v_{i} - v_{j} v_{1,i}] + i[v_{1,i} v_{j} -
  v_{i} v_{1,j}] \\
  &\quad (j-i)[v_{1,j} v_{i} - v_{j} v_{1,i}] < 0.
\end{align*}
Indeed, 
\begin{align*}
  (j-i)\left[\frac{v_{1,j}}{v_{j}} -
    \frac{v_{1,i}}{v_{i}}\right] &= (j-i)\left[\left(1 - \frac{N_1
        +j}{a_1} \right) - \left(1 - \frac{N_1 +i}{a_1} \right)\right]\\
  &=\frac{-(j-i)^2}{a_1} < 0.
\end{align*}
This proves that $B_1^{(p)}(x_1,x_2) $ is strictly decreasing function
of $x_2.$

To prove that $B_1^{(p)}(x_1,x_2) $ is strictly increasing function
in $x_1,$ we now show that $\pd{B_1^{(p)}(x_1,x_2)}{x_1} > 0.$ An
elementary calculation yields
\begin{align*}
  \pd{B_1^{(p)}(x_1,x_2)}{x_1} &= \frac{1}{G^2} \biggl(
  \sum_{i = 1}^{N_1} ix_1^{i-1} [u_{1,i}d - u_{i} m_1] \\
  &\quad + \sum_{i=1}^{N_1} \sum_{j=1}^{N_2} ix_1^{i-1}x_2^j [u_{1,i}v_{j}
   - u_{i}v_{1,j} ] \\
  &\quad + \sum_{i=1}^{N_1} \sum_{j=1}^{N_1} ix_1^{i+j-1}[u_{1,i}
  u_{j} - u_{i} u_{1,j}] \biggr).
\end{align*}
As before, we argue that each of the terms in the above expression is
positive. To see that the first term is positive, note that for
any~$i,$
\begin{align*}
  \frac{u_{1,i}}{u_{i}}-\frac{m_1}{d} = E(N_1-i,a_1) - E(N_1,a_1)
  > 0.
\end{align*} 
We have already proved that for any $(i,j),$
$v_{1,j} u_{i} - v_{j} u_{1,i} < 0.$ It then follows that the second
term is positive.  Finally, to see that the third term is positive,
note that for any $i \neq j,$
\begin{align*}
  &i[u_{1,i} u_{j} - u_{i} u_{1,j}] + j[u_{1,j} u_{i} -
  u_{j} u_{1,i}] \\
  &\quad = (i-j)[u_{1,i} u_{j} - u_{i} u_{1,j}] \\
  &\quad =(i-j)u_{j}u_{i} \left(\frac{u_{1,i}}{u_{i}} -
    \frac{u_{1,j}}{u_{j}} \right) \\
  &\quad =(i-j)u_{j}u_{i}(E(N_1-i,a_1) - E(N_1-j,a_1)) > 0.
\end{align*}
This proves that $B_1^{(p)}(x_1,x_2) $ is strictly increasing function
in $x_1.$

\subsection{Proof of Statement~3}

We shall prove that the overall blocking probability is decreasing in
the share of each provider. For this, let us generalise the sharing
model by assuming that when there are $n_i$ ongoing calls of type $i$,
an incoming call of type $i$ is accepted with probability
$x_{-i}(n_i)$.
	In the original model,
	\[
		x_{-i}(n_i) = \begin{cases}
					1 & n_i < N_i; \\
					x_{-i} & n_i \geq N_i,
				\end{cases}
	\]  
	in states where $n_1 + n_2 < N_1 + N_2$. 
	
	Let 
	\[
		\sigma(\bn) = \prod_{i=0}^{n_1 -1}\frac{a_1 x_{2}(i)}{i+1}\prod_{j=0}^{n_2 -1}\frac{a_2 x_{1}(j)}{j+1},
	\]
	and
	\begin{align*}
		\sN = \{\bn: n_1 + n_2 < N_1 + N_2\}, \\
		\sNo = \{\bn: n_1 + n_2 \leq N_1 + N_2\}.
	\end{align*}
	Then, the joint stationary probability is
	\[
		\pi(\bm) = \frac{\sigma(\bm)}{\sum_{\bn\in\sNo}\sigma(\bn)},
	\]
	Instead of looking at the blocking probability, we shall look at the probability of accepting a call, which is $1-\bo$. From its definition,
	\begin{align}
	1 - \bo &= \frac{1}{a_1+a_2} \frac{\sum_{\bm\in\sN}(a_1x_2(n_1) + a_2x_1(n_2))\sigma(\bm)}{\sum_{\bm\in\sNo}\sigma(\bm)} \label{eqn:e1} \\
	&= \frac{1}{a_1+a_2} \frac{\sum_{\bm\in\sN}(m_1 + 1)\sigma(\bm + e_1) + (m_2 + 1)\sigma(\bm + e_2)}{\sum_{\bm\in\sNo}\sigma(\bm)} \label{eqn:e2}\\
	&= \frac{1}{a_1+a_2} \frac{\sum_{\bm\in\sNo}(m_1 + m_2)\sigma(\bm)}{\sum_{\bm\in\sNo}\sigma(\bm)} \\
	&=:\frac{1}{a_1+a_2}\frac{A}{D}.
	\label{eqn:pdbov}
	\end{align}
	The local balance equation, $a_ix_{-i}(m_i)\sigma(\bm) = (m_i+1)\sigma(\bm + e_i)$, was used to go from \eqref{eqn:e1} to \eqref{eqn:e2}.
	
	We shall now prove a more general result from which the monotonicity of the blocking probability in the original model shall follow.
	\begin{theorem}
	For $0\leq l < N_1 + N_2$,
	\[
	\pd{\bo}{x_i(l)} < 0, i=1,2.
	\]
\end{theorem}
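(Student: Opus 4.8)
The plan is to work with the acceptance probability $1-\bo$ rather than with $\bo$ directly, and to show it is \emph{strictly increasing} in each $x_i(l)$. From \eqref{eqn:pdbov} we have $1-\bo = \frac{1}{a_1+a_2}\frac{A}{D}$, and since $\pi(\bm)=\sigma(\bm)/D$ we recognise $A/D = \Exp{m_1+m_2}$, the mean total occupancy under the stationary law $\pi$ on $\sNo$. Writing $S:=m_1+m_2$, the goal becomes $\pd{}{x_i(l)}\Exp{S}>0$. The crucial structural observation is that $\sigma$ factorises, $\sigma(\bn)=P_1(n_1)P_2(n_2)$ with $P_i(n)=\prod_{j=0}^{n-1}\frac{a_i x_{-i}(j)}{j+1}$, so $\pi$ is the independent product measure $P_1\otimes P_2$ conditioned on $\{n_1+n_2\le N_1+N_2\}$. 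Moreover $x_i(l)$ enters $\sigma$ multiplicatively and exactly once, appearing in $\sigma(\bn)$ precisely when $n_{-i}>l$; hence $\pd{\sigma(\bn)}{x_i(l)}=\frac{\sigma(\bn)}{x_i(l)}\indicator_{[n_{-i}>l]}$ for $x_i(l)>0$ (the case $x_i(l)=0$ is handled by the right derivative and continuity of $A/D$).

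By symmetry it suffices to treat $i=1$, where the relevant event is $\{m_2>l\}$. Differentiating the ratio $\Exp{S}=A/D$ and substituting the displayed form of $\pd{\sigma}{x_1(l)}$, a direct calculation collapses to
\[
\pd{}{x_1(l)}\Exp{S}=\frac{1}{x_1(l)}\Bigl(\Exp{S\,\indicator_{[m_2>l]}}-\Exp{S}\,\prob{m_2>l}\Bigr)=\frac{1}{x_1(l)}\,\mathrm{Cov}_\pi\!\bigl(S,\ \indicator_{[m_2>l]}\bigr).
\]
Since $x_1(l)>0$, the theorem reduces to showing this covariance is strictly positive.

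I would prove positivity by conditioning on the total occupancy $S$. With $g(s):=\prob{m_2>l\mid S=s}$, the tower property gives $\mathrm{Cov}_\pi(S,\indicator_{[m_2>l]})=\mathrm{Cov}_\pi(S,g(S))$, and because $s\mapsto s$ is increasing, it suffices that $g$ be non-decreasing: the covariance of two non-decreasing functions of a single random variable is non-negative, and strictly positive once $g$ is non-constant on the support of $S$. Now, conditional on $S=s$, the law of $m_2$ is proportional to $P_1(s-k)P_2(k)$ — exactly the conditional law of $M_2$ given $M_1+M_2=s$ for independent $M_i\sim P_i$. A monotone-likelihood-ratio argument shows this conditional law is stochastically increasing in $s$ provided $P_1$ is log-concave, i.e. provided $\frac{P_1(m+1)}{P_1(m)}=\frac{a_1 x_2(m)}{m+1}$ is non-increasing in $m$; this holds here because $m\mapsto\frac{x_2(m)}{m+1}$ is non-increasing (in the bounded-overflow model $x_2(\cdot)\in[0,1]$ is itself non-increasing). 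Stochastic monotonicity yields $g$ non-decreasing, and the symmetric statement for $i=2$ uses log-concavity of $P_2$.

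The main obstacle is exactly this covariance positivity, which cannot be read off term-by-term. Symmetrising the double sum, the integrand is $\bigl[(n_1+n_2)-(m_1+m_2)\bigr]\bigl[\indicator_{[n_2>l]}-\indicator_{[m_2>l]}\bigr]$, whose sign is \emph{indefinite}: the admissibility constraint $n_1+n_2\le N_1+N_2$ makes $m_1$ and $m_2$ negatively associated, so a configuration with larger $m_2$ need not have larger total occupancy $S$. Conditioning on $S$ is precisely what removes this obstruction, with log-concavity of the product-form marginals supplying the monotonicity of the conditional tail $g$. Two routine points then close the argument: the one-sided derivative at $x_i(l)=0$ (continuity of $A/D$), and strictness of the covariance, which follows because $g$ is non-constant on the support of $S$ — indeed for $0\le l<N_1+N_2$ every state of $\sNo$ carries positive $\pi$-mass, so $g(0)=0<g(N_1+N_2)$.
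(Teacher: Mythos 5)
Your proposal follows the paper's reduction exactly up to the decisive step: both arguments write $1-\bo = \frac{1}{a_1+a_2}\,\bE_\pi[m_1+m_2]$, differentiate the ratio $A/D$, use the fact that $x_i(l)$ enters each weight $\sigma(\bn)$ at most once and multiplicatively, and reduce the theorem to the covariance inequality $\mathrm{Cov}_\pi\bigl(S,\indicator_{[m_{-i}>l]}\bigr)>0$, i.e.\ $\bE(S\mid M_{-i}>l)>\bE(S)$. From there the routes genuinely diverge. The paper tries to prove directly that $\gamma(l)=\bE(M_1+M_2\mid M_1\ge l)$ is increasing in $l$ by algebraic manipulation of ratios of sums; you instead condition on the total $S$, observe that this conditioning converts the constrained product measure into the convolution-type law $\propto P_1(s-k)P_2(k)$ (the triangle constraint disappears), establish stochastic monotonicity in $s$ of that conditional law via a monotone-likelihood-ratio argument under log-concavity of $P_{i}$, and close with Chebyshev's correlation inequality for monotone functions of one variable. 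This is a cleaner and more robust argument, and your diagnosis of the obstruction (the constraint makes $M_1,M_2$ negatively associated, so the covariance cannot be handled term-by-term) is exactly right.

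Two further points, both in your favor. First, the paper's own execution of its key step is flawed: after defining $\gamma(l):=\bE(M_1+M_2\mid M_1\ge l)$, its displayed rewriting sums over \emph{all} states with $m_1+m_2=z$ (inner sum $\sum_{m=0}^{z}$ instead of $\sum_{m=0}^{z-l}$), so the quantity actually shown to be increasing is $\bE(S\mid S\ge l)$ --- trivially monotone, but not what is needed. Second, the extra hypothesis your proof requires (log-concavity of $P_i$, equivalently $x_{-i}(\cdot)$ non-increasing, as in the probabilistic and bounded-overflow models) is not a cosmetic restriction: the theorem as stated for the generalized model with arbitrary $x_i(\cdot)\in[0,1]$ is \emph{false}. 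For instance, with $N_1+N_2=2$, weights $P_1=(1,1,0)$ and $P_2=(1,\epsilon,T)$ (realizable by $a_1=1$, $x_2=(1,0)$, $a_2=2T/\epsilon$, $x_1=(\epsilon^2/(2T),1)$, so $x_1(\cdot)$ is increasing), one computes that the sign of $\mathrm{Cov}_\pi(S,\indicator_{[M_1\ge 1]})$ equals the sign of
\begin{equation*}
(2+2\epsilon+T)(1+2\epsilon)-(1+3\epsilon+2T)(1+\epsilon)=(1+\epsilon)^2-T,
\end{equation*}
which is negative for $T>(1+\epsilon)^2$; hence $\pd{\bo}{x_2(0)}>0$ there. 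So your log-concavity condition is essentially necessary, and your proof --- with that condition stated as an explicit hypothesis (it does hold for both of the paper's sharing models, where the $x_i(\cdot)$ are non-increasing step functions) --- is the argument that actually delivers Statement~3 of Theorem~\ref{thm:monotonicity} for the probabilistic model. I would only ask you to make the hypothesis explicit in the theorem statement rather than asserting it inline, and to write out the one-sided-derivative case $x_i(l)=0$ carefully.
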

\begin{proof}
	For a given state $l$ of type $1$, we shall vary the probability $x_2(l)$ while keeping the other probabilities fixed, and show that $1 - \bo$ is increasing in $x_2(l)$. This will show that $B_{ov}$ is decreasing $x_2$ in the original model. A symmetrical argument will hold for $x_1$ as well, which will then complete the proof.
	
	From \eqref{eqn:pdbov},
	\[
	(a_1 + a_2)\pd{(1-\bo)}{x} = \frac{DA^\prime - D^\prime A}{D^2}.
	\]
	
		We shall show that $DA^\prime - D^\prime A > 0$. We have
		\begin{align}
		DA^\prime & =  x^{-1}\left(\sum_{\bm\in\sNo}\sigma(\bm)\right)\left(\sum_{m_1 > l, \bm\in\sNo}(m_1+m_2)\sigma(\bm)\right).				
		\end{align}
		
		Similarly, 
		\begin{align}
		D^\prime A		&= x^{-1}\left(\sum_{m_1 > l, \bm\in\sNo}\sigma(\bm)\right)\left(\sum_{\bm\in\sNo}(m_1+m_2)\sigma(\bm)\right).		
		\end{align}
		Thus, for $x > 0$,
		\begin{equation}
		\frac{DA^\prime}{D^\prime A} = \frac{\bE(M_1 + M_2 \vert M_1 > l)}{\bE(M_1 + M_2)},
		\end{equation}
		where $M_i$ is the number of calls of $P_i$ in stationarity.
		It is thus sufficient to show that the RHS of the above equation is larger than $1$. 
		Let 
		\[
		\gamma(l) := \bE(M_1 + M_2 \vert M_1 \geq l),
		\]
		so that 
		\begin{equation}
		\frac{DA^\prime}{D^\prime A} = \frac{\gamma(l+1)}{\gamma(0)}.	
		\end{equation}
		We shall show that $\gamma(l)$ is increasing in $l$. This is a reasonable assertion because $\gamma(0) = \bE(M_1 + M_2)$ and $\gamma(N_1 + N_2) = N_1 + N_2 \ge \bE(M_1 + M_2)$.
		
		Let $N=N_1 + N_2$. Rewrite $\gamma(l)$ as follows:
		\begin{align}
		\gamma(l) &= \sum_z z \bP(M_1 + M_2 = z\vert M_1\geq l) \\
		&=\frac{\sum_{z=l}^{N}z \sum_{m=0}^z\sigma(z-m,m)}{\sum_{z=l}^{N}\sum_{m=0}^z\sigma(z-m,m)}.
		\end{align}
		Let $c = \left(\sum_{z=l}^{N}\sum_{m=0}^z\sigma(z-m,m)\right)\left(\sum_{z=l+1}^{N}\sum_{m=0}^z\sigma(z-m,m)\right)$. We have \small
		\begin{align}
		c\cdot(\gamma(l+1) - \gamma(l)) &=  \left(\sum_{z=l+1}^{N}z \sum_{m=0}^z\sigma(z-m,m)\right)\left(\sum_{z=l}^{N}\sum_{m=0}^z\sigma(z-m,m)\right) \nonumber\\
		&\qquad - \left(\sum_{z=l}^{N}z \sum_{m=0}^z\sigma(z-m,m)\right)\left(\sum_{z=l+1}^{N}\sum_{m=0}^z\sigma(z-m,m)\right)\\
		&= \left(\sum_{z=l+1}^{N}z \sum_{m=0}^z\sigma(z-m,m)\right)\left(\sum_{m=0}^l\sigma(l-m,m) + \sum_{z=l+1}^{N}\sum_{m=0}^z\sigma(z-m,m)\right)\nonumber\\
		&\qquad - \left(l\sum_{m=0}^l\sigma(l-m,m) + \sum_{z=l+1}^{N}z \sum_{m=0}^z\sigma(z-m,m)\right)\left(\sum_{z=l+1}^{N}\sum_{m=0}^z\sigma(z-m,m)\right)\\
		&=\left(\sum_{z=l+1}^{N}z\sum_{m=0}^z\sigma(z-m,m)\right)\left(\sum_{m=0}^l\sigma(l-m,m)\right)\nonumber\\
		& \qquad - \left(l\sum_{m=0}^l\sigma(l-m,m) \right)\left(\sum_{z=l+1}^{N}\sum_{m=0}^z\sigma(z-m,m)\right)\\
		&>\left(\sum_{z=l+1}^{N}\sum_{m=0}^z\sigma(z-m,m)\right)\left(\sum_{m=0}^l\sigma(l-m,m)\right)\\
		& \geq 0.
		\end{align}\normalsize
		That is,
		$\gamma(l)$ is increasing in $l$.
\end{proof}


\section{Bargaining solutions for logarithmic utilities}
\label{app:logbs}
In this section, we give the definitions and related results for NBS, KSBS and ES for utilities that are a logarithmic function of the blocking probability. 

We do not have a specific result for the logarithmic NBS but we give its definition for the sake of completion. 

\begin{definition}
	A partial sharing configuration $(k_1^*,k_2^*)$ is \emph{ Logarithmic Nash bargaining solution (LOGNBS)}, if the partial sharing configuration satisfies the following condition,
	$$ (k_1^*,k_2^*) = \argmax_{k_i \in [0,1]} \log \left(\frac{B_1(0,0)}{  B_1(k_1,k_2) }\right) \log \left( \frac{B_2(0,0) }{B_2(k_1,k_2)} \right)  . $$
\end{definition}

Similar to NBS, the utility space is not convex. So, LOGNBS may not be unique.

For the logarithmic variants of KSBS and ES the results proved for the linear variants carry over.

\begin{definition}
	A partial sharing configuration $(k_1^*,k_2^*)$ is \emph{ Logarithmic Kalai-Smorodinsky bargaining solution (LOGKSBS)}, if the partial sharing configuration satisfies the following conditions,
	\begin{align*}
	\frac{\log \left( \frac{B_1(0,0)} { B_1(k_1,k_2)}\right)}{ \log \left( \frac{B_2(0,0)} { B_2(k_1,k_2)}\right)} &~= \frac{\log(B_1(0,0)) - \min\limits_{x_i \in [0,1]} \log(B_1(k_1,k_2))}{ \log(B_2(0,0)) - \min\limits_{k_i \in [0,1]} \log(B_2(k_1,k_2))} \\
	B_i(k_1,k_2) &~< B_i(0,0) \quad \forall ~ i
	\end{align*}
\end{definition}

\begin{lemma}
	\label{lemma:LOGKSBS}
	For the bounded overflow sharing model, the LOGKSBS is unique.
\end{lemma}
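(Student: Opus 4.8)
The plan is to mirror the proof of Theorem~\ref{lemma:KSBS} line for line, replacing every linear relative utility $B_i(0,0) - B_i(x_1,x_2)$ by its logarithmic counterpart $\log B_i(0,0) - \log B_i(x_1,x_2) = \log\bigl(B_i(0,0)/B_i(x_1,x_2)\bigr)$. The single observation that powers the whole transfer is that $\log$ is strictly increasing, so composing it with each $B_i$ preserves the monotonicity of Theorem~\ref{thm:monotonicity}: $\log B_i$ is strictly increasing in $x_i$ and $\log B_{-i}$ is strictly decreasing in $x_i$.

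Concretely, I would define
$$ g(x_1,x_2) := \frac{\log B_1(0,0) - \log B_1(x_1,x_2)}{\log B_2(0,0) - \log B_2(x_1,x_2)}, \qquad \mathrm{LK} := \frac{\log B_1(0,0) - \min_y \log B_1(y)}{\log B_2(0,0) - \min_y \log B_2(y)}. $$
Statements~1 and~2 of Theorem~\ref{thm:monotonicity}, now applied to $\log B_i$, give $\min_y \log B_1(y) = \log B_1(0,1)$ and $\min_y \log B_2(y) = \log B_2(1,0)$, exactly as in Theorem~\ref{lemma:KSBS}. Since $B_1(0,1) < B_1(0,0)$ and $B_2(1,0) < B_2(0,0)$, both the numerator and the denominator of $\mathrm{LK}$ are strictly positive and finite, so $0 < \mathrm{LK} < \infty$. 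Moreover the QoS-stability constraint $B_i(x_1,x_2) < B_i(0,0)$ appearing in the definition of LOGKSBS is equivalent to $\log B_i(x_1,x_2) < \log B_i(0,0)$, so the feasible set is unchanged and any LOGKSBS lies on the Pareto frontier $\hat{\mathcal{P}}$ of Lemma~\ref{lemma:pareto_structure}.

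The final step is to sweep $\hat{\mathcal{P}}$ clockwise in each of the three cases of Lemma~\ref{lemma:pareto_structure} and check that $g$ is continuous and strictly decreasing with range $(0,\infty)$. Strict monotonicity is termwise: moving clockwise, $B_1$ increases and $B_2$ decreases, hence the numerator $\log B_1(0,0) - \log B_1$ strictly decreases and the denominator $\log B_2(0,0) - \log B_2$ strictly increases --- the same sign pattern as the linear case, because $\log$ is increasing. At the two endpoints one of $B_1,B_2$ returns to its standalone value (e.g. $B_1(1,\hat{x}_2) = E(N_1,a_1) = B_1(0,0)$ in Case~1), forcing the corresponding logarithmic relative utility to vanish; hence $g$ sweeps from $+\infty$ down to $0$, precisely as $f$ did. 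The intermediate value theorem then yields a unique point of $\hat{\mathcal{P}}$ with $g = \mathrm{LK}$.

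I do not anticipate a genuine obstacle: the argument is a transfer of the KSBS proof through the order-preserving map $\log$. The only items needing explicit verification are that the locations of the minima and the endpoint limits survive the transformation --- which they do since $\log$ is strictly increasing and $\log(B_i(0,0)/B_i(0,0)) = 0$ --- and that $\mathrm{LK}$ is again a finite positive constant. Cases~2 and~3 are handled by the identical clockwise sweep and endpoint analysis, as in Theorem~\ref{lemma:KSBS}.
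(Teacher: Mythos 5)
Your proposal is correct and is essentially the paper's own proof: the paper disposes of Lemma~\ref{lemma:LOGKSBS} by stating that it follows from ``similar arguments to those in the proof of Theorem~\ref{lemma:KSBS},'' and your transfer through the strictly increasing map $\log$ --- preserving the monotonicity of Theorem~\ref{thm:monotonicity}, the location of the minima at $(0,1)$ and $(1,0)$, the finiteness and positivity of the ideal-point ratio, and the clockwise sweep of $\hat{\mathcal{P}}$ from $+\infty$ to $0$ --- is exactly that argument, spelled out.
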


\begin{definition}
	A partial sharing configuration $(k_1^*,k_2^*)$ is \emph{ Logarithmic egalitarian solution (LOGES)}, if the partial sharing configuration in Pareto set satisfies the following conditions,
	\begin{align*}
	\log(B_1(0,0)) - \log(B_1(k_1,k_2)) =  \log(B_2(0,0)) - \log(B_2(k_1,k_2)) 
	\end{align*}
\end{definition}

Logarithmic egalitarian solution captures the sharing configuration in which both the providers will have the ratio of their blocking probabilities to standalone blocking probabilities to be same.

\begin{lemma}
	\label{lemma:LOGES}
	For the probabilistic sharing model as well as the bounded overflow sharing model, the LOGES is unique. 
\end{lemma}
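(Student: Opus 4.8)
The plan is to follow the template of the proof of Theorem~\ref{lemma:KSBS} (and its specialization in Lemma~\ref{lemma:ES}), replacing the linear relative utilities by their logarithmic counterparts. On the Pareto frontier $\hat{\mathcal{P}}$ I would define
$$\Psi(x_1,x_2) := \bigl[\log B_1(0,0) - \log B_1(x_1,x_2)\bigr] - \bigl[\log B_2(0,0) - \log B_2(x_1,x_2)\bigr].$$
A LOGES is precisely a configuration in $\hat{\mathcal{P}}$ at which $\Psi = 0$, so uniqueness reduces to showing that $\Psi$ has exactly one zero on $\hat{\mathcal{P}}$. Since each $B_i$ is continuous and strictly positive there, $\Psi$ is continuous on $\hat{\mathcal{P}}$.

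The first step is to establish strict monotonicity of $\Psi$ along the (topologically one-dimensional) Pareto frontier, swept with a fixed orientation exactly as in the proof of Theorem~\ref{lemma:KSBS}. By Statements~1 and~2 of Theorem~\ref{thm:monotonicity}, along a clockwise sweep $B_1$ is strictly increasing and $B_2$ is strictly decreasing; since $\log$ is strictly increasing, the term $\log B_1(0,0)-\log B_1$ is then strictly decreasing while $\log B_2(0,0)-\log B_2$ is strictly increasing, so $-\bigl[\log B_2(0,0)-\log B_2\bigr]$ is strictly decreasing as well. Hence $\Psi$ is strictly decreasing along the entire sweep, including across the corner at $(1,1)$ where the active boundary switches from $x_2 = 1$ to $x_1 = 1$ (continuity of $\Psi$ handles the corner).

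Next I would pin down the endpoint signs using the characterization in Lemma~\ref{lemma:pareto_structure}. In Case~1, at the end $(\hat{x}_1,1)$ we have $B_2(\hat{x}_1,1) = E(N_2,a_2) = B_2(0,0)$, so the $P_2$-bracket vanishes, while $B_1(\hat{x}_1,1) < B_1(0,0)$ makes the $P_1$-bracket strictly positive; thus $\Psi > 0$ there. Symmetrically, at $(1,\hat{x}_2)$ the $P_1$-bracket vanishes and the $P_2$-bracket is strictly positive, so $\Psi < 0$. By the intermediate value theorem together with the strict monotonicity just shown, $\Psi$ vanishes at exactly one interior point, which is the unique LOGES. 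Cases~2 and~3 are handled identically, using the corresponding endpoints $(\underline{x}_2,\bar{x}_2)$ and $(\underline{x}_1,\bar{x}_1)$ from Lemma~\ref{lemma:pareto_structure}, at which the appropriate log-bracket vanishes.

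Finally, I would observe that the argument invokes only Statements~1 and~2 of Theorem~\ref{thm:monotonicity} and the Pareto-frontier characterization of Lemma~\ref{lemma:pareto_structure}, both of which hold for the probabilistic as well as the bounded overflow model; this is exactly what lets us claim uniqueness for both models (and dispenses with any appeal to Statement~3 or to $\mu_1 = \mu_2$). The main point to verify carefully---rather than a genuine obstacle---is that passing to logarithms preserves the strict monotonicity of each relative-utility term (it does, $\log$ being strictly increasing) and that the endpoint log-brackets vanish precisely where the corresponding provider's blocking probability returns to its standalone value $E(N_i,a_i)$; everything else transcribes essentially verbatim from the proof of Theorem~\ref{lemma:KSBS}.
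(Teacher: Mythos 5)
Your proof is correct and follows essentially the same route as the paper: the paper's proof of Lemma~\ref{lemma:LOGES} is simply an appeal to the sweep-the-Pareto-frontier argument of Theorem~\ref{lemma:KSBS} (monotonicity of $B_1$, $B_2$ along the frontier via Statements~1 and~2 of Theorem~\ref{thm:monotonicity}, endpoint behaviour from Lemma~\ref{lemma:pareto_structure}, then a unique crossing), which is exactly what you carry out. Your only cosmetic deviation is working with the difference of log-utilities $\Psi$ rather than their ratio, which avoids the vanishing denominator at an endpoint but is otherwise the same argument.
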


\begin{corollary}
	\label{cor:loges}
	For uniform standalone blocking probabilities of providers, the LOGES lies at $(N_1,N_2).$
\end{corollary}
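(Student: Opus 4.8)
The plan is to reduce the logarithmic egalitarian condition to the linear one under the matched-standalone hypothesis, and then re-use the argument of Lemma~\ref{lemma:es} essentially verbatim.

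First I would rewrite the defining equation of the LOGES. Since $\log$ is injective, the condition
$$\log(B_1(0,0)) - \log(B_1(k_1,k_2)) = \log(B_2(0,0)) - \log(B_2(k_1,k_2))$$
is equivalent to $\frac{B_1(k_1,k_2)}{B_1(0,0)} = \frac{B_2(k_1,k_2)}{B_2(0,0)}$. Recalling that $B_i(0,0) = E(N_i,a_i)$, the hypothesis $E(N_1,a_1) = E(N_2,a_2)$ makes the two denominators equal, so they cancel and the LOGES condition collapses to $B_1(k_1,k_2) = B_2(k_1,k_2)$. This is precisely the equation characterising the \emph{linear} egalitarian solution when the standalone blocking probabilities are matched, as exploited in the proof of Lemma~\ref{lemma:es}.

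Having made this reduction, I would invoke the argument of Lemma~\ref{lemma:es} directly. By strict subadditivity of the Erlang-B formula, $E(N_1+N_2,a_1+a_2) < E(N_1,a_1) = E(N_2,a_2)$, so we are in Case~1 of Lemma~\ref{lemma:pareto_structure} and full pooling $(N_1,N_2)$ lies in $\hat{\mathcal{P}}$; there $B_1 = B_2 = E(N_1+N_2,a_1+a_2)$, so full pooling indeed satisfies the reduced condition. For uniqueness, I would use the two-armed description of $\hat{\mathcal{P}}$ from Case~1 together with Statements~1 and~2 of Theorem~\ref{thm:monotonicity}: on the arm where Provider~1 shares less than everything one gets $B_1 < B_2$ strictly, and on the arm where Provider~2 shares less than everything one gets $B_1 > B_2$ strictly, so $(N_1,N_2)$ is the only configuration in $\hat{\mathcal{P}}$ at which $B_1 = B_2$. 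Combined with the uniqueness of the LOGES (Lemma~\ref{lemma:LOGES}), this identifies the LOGES as $(N_1,N_2)$.

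The proof has essentially no substantive obstacle, since everything reduces to the already-established Lemma~\ref{lemma:es}. The only point requiring care---and the one genuinely new observation---is that the logarithmic egalitarian condition degenerates to exactly the \emph{same} equation $B_1 = B_2$ as the linear egalitarian condition precisely when the standalone blocking probabilities coincide, because the cancellation of the (equal) normalising denominators is what erases the distinction between the additive and multiplicative notions of ``equal gain.'' Once this cancellation is noted, the characterisation of $\hat{\mathcal{P}}$ in Lemma~\ref{lemma:pareto_structure} and the monotonicity of Theorem~\ref{thm:monotonicity} do the rest.
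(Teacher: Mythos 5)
Your proof is correct and follows essentially the same route the paper intends: the corollary is the logarithmic analogue of Lemma~\ref{lemma:es}, and your key observation---that the LOGES condition collapses to $B_1(k_1,k_2) = B_2(k_1,k_2)$ when $E(N_1,a_1) = E(N_2,a_2)$---reduces it to exactly the paper's argument (Erlang-B subadditivity placing $(N_1,N_2)$ in $\hat{\mathcal{P}}$ via Case~1 of Lemma~\ref{lemma:pareto_structure}, then monotonicity from Theorem~\ref{thm:monotonicity} showing full pooling is the unique point of $\hat{\mathcal{P}}$ where the two blocking probabilities coincide).
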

Proofs of Lemma \ref{lemma:LOGKSBS} and \ref{lemma:LOGES} use similar arguments to those in the proof of Theorem \ref{lemma:KSBS}.

\end{document}